\newcommand{\ETH}{\problem{ETH}}
\newcommand{\CVP}{\problem{CVP}}
\newcommand{\SVP}{\problem{SVP}}
\newcommand{\BDD}{\problem{BDD}}
\newcommand{\SAT}{\problem{SAT}}
\newcommand{\cl}{\ell}
\newcommand{\LIN}{\problem{MAXLIN}}
\newcommand{\di}{\operatorname{dist}}
\newcommand{\eps}{\varepsilon} 
\newcommand{\tdgr}{\vec{t}^{\dagger}}
\newcommand{\ldgr}{\dgr\lat}
\newcommand{\gmlin}{\ensuremath{\mathsf{gap}_{\frac{5}{8},\frac{5}{8}\textrm{-}\varepsilon}\LIN}}
\newcommand{\lin}{\problem{MAXLIN}}
\newcommand{\geth}{\problem{GapETH}}
\newcommand{\sat}{\problem{3SAT}}
\newcommand{\nump}[1]{\ensuremath{{N}_p\paren{#1}}}
\newcommand{\ballp}[1]{\ensuremath{\mathcal{B}_p\paren{#1}}}
\newcommand{\cL}{\mathcal{L}}
\newcommand{\normpp}[1]{\norm{#1}_p^p}
\newcommand{\dgr}[1]{#1^{\dagger}}
\newcommand{\svpalphaval}{\frac{2r'^p}{\delta r^{\dagger p}}}
\newcommand{\svprval}{ 1 + \paren{1 - \frac{\delta}{2}} \frac{2r'^p}{\delta}}
\newcommand{\svpgammaval}{ 1 + \delta  \min \left\{ \frac{(\sqrt{\gamma'^p}-1)^2}{2}, \frac{1}{100} \right\}}
\begin{document}

\title{Mind the Gap? Not for SVP Hardness under ETH!}
	\author{Divesh Aggarwal\thanks{National University of Singapore. Email: \texttt{divesh@comp.nus.edu.sg}} 
		\and 
		Rishav Gupta\thanks{National University of Singapore. Email: \texttt{rishavg@u.nus.edu}}
        \and
        Aditya Morolia\thanks{Centre for Quantum Technologies, Singapore. Email: \texttt{morolia@u.nus.edu}}
		\and 
		Chuanqi Zhang\thanks{Monash University. Email: \texttt{chuanqi.zhang@monash.edu}}
        }

    % \author[2]{Divesh Aggarwal\thanks{Supported by the Singapore Ministry of Education and the National Research Foundation, also through the Tier 3 Grant ``Random numbers from quantum processes" MOE2012-T3-1-009.}}

    % \author[2]{Rishav Gupta\thanks{This work was supported by NRF grant NRF-NRFI09-0005.}}

    % \author[1]{Aditya Morolia\thanks{This work was supported by NRF grant NRF-NRFI09-0005.}}

    % \author[3,4]{Chuanqi Zhang}

    % \affil[1]{Centre for Quantum Technologies, Singapore}
    % \affil[2]{National University of Singapore}
    % \affil[3]{University of Technology Sydney}
    % \affil[4]{Monash University}
	\date{}
	\maketitle

\begin{abstract}
We prove new hardness results for fundamental lattice problems under the Exponential Time Hypothesis (ETH). Building on a recent breakthrough by Bitansky et al.\ [BHIRW24], who gave a polynomial-time reduction from $\problem{3SAT}$ to the (gap) $\lin$ problem—a class of CSPs with linear equations over finite fields—we derive ETH hardness for several lattice problems. 

First, we show that for any $p \in [1, \infty)$, there exists an explicit constant $\gamma > 1$ such that $\cvp_{p,\gamma}$ (the $\ell_p$-norm approximate Closest Vector Problem) does not admit a $2^{o(n)}$-time algorithm unless ETH is false. Our reduction is deterministic and proceeds via a direct reduction from (gap) $\lin$ to $\cvp_{p,\gamma}$.

Our main contribution is a randomized ETH hardness result for $\svp_{p,\gamma}$ (the $\ell_p$-norm approximate Shortest Vector Problem) for all $p \in (2, \infty)$. This result relies on a novel geometric property of the integer lattice $\Z^n$ in the $\ell_p$ norm, which says that for any $p \in (2, \infty)$, the number of lattice vectors close to $\frac{1}{2}\vec 1_n$ (in the $\ell_p$ norm) is exponentially larger than the number of short vectors (namely those close to the origin). We establish this property via a new inequality for the Theta function, which we use to get a randomized reduction from $\cvp_{p,\gamma}$ to $\svp_{p,\gamma'}$.

Finally, we also use our ideas to give some minor improvements over prior reductions from $\problem{3SAT}$ to $\bdd_{p, \alpha}$ (the Bounded Distance Decoding Problem), yielding better ETH hardness results for $\bdd_{p, \alpha}$ for any $p \in [1, \infty)$ and $\alpha > \alpha_p^{\ddagger}$, where $\alpha_p^{\ddagger}$ is an explicit threshold depending on $p$. 

% We additionally observe that prior work implies ETH hardness for the gap minimum distance problem ($\gamma$-$\problem{MDP}$) in codes. 
\end{abstract}
\newpage
{\setlength{\parskip}{2pt}
    \tableofcontents}
	\setcounter{page}{0}
	\thispagestyle{empty}
	\newpage
	\setcounter{page}{1}
	
	\section{Introduction}
	\label{sec:intro}
    	A lattice in $d$ dimensions is a discrete subgroup of $\R^d$. Formally, given a set of vectors $\vec v_1, \ldots, \vec v_n \in \Q^d$, the lattice generated by these vectors is defined as
\begin{equation*}
    \lat = \lat(\vec v_1, \ldots, \vec v_n) := \left\{\sum_{i=1}^n a_i \vec v_i : a_i \in \Z \right\}.
\end{equation*}

Several algorithmic problems on lattices are of particular importance. The \textbf{Shortest Vector Problem} (\textsf{SVP}) asks for the shortest nonzero vector in a given lattice—typically measured in an $\ell_p$ norm. The \textbf{Closest Vector Problem} (\textsf{CVP}) asks for the lattice vector closest to a given target vector in the ambient space $\R^d$. A related variant is the \textbf{Bounded Distance Decoding} (\textsf{BDD}) problem, which can be viewed as \textsf{CVP} under the promise that the target point lies sufficiently close to the lattice. 

Algorithms for solving these problems have led to impactful applications across multiple domains. These include polynomial factoring~\cite{LLL82}, integer programming~\cite{Lenstra83,Kannan87,DPV11,RR23}, and cryptanalysis~\cite{Shamir84,Odl90,JS98,NS01}. At the same time, the conjectured hardness of lattice problems has enabled the design of powerful cryptographic primitives—particularly in the context of \emph{post-quantum cryptography}. These schemes are not only conjectured to be secure against quantum adversaries but also support advanced functionalities such as fully homomorphic encryption~\cite{oded05,GPV08,BV11,new_hope}. Remarkably, many of these constructions rely on the \emph{worst-case hardness} of lattice problems, in contrast to traditional schemes based on the \emph{average-case hardness} of factoring or discrete logarithm~\cite{Ajtai96,oded05}. 

The security of most finalists in the NIST post-quantum cryptography standardization process rests on the assumed difficulty of solving lattice problems~\cite{NIST_quantum}.

\textbf{Algorithmic Progress on Lattice Problems.}
The algorithmic study of $\CVP$ and $\SVP$ in the Euclidean norm ($\ell_2$) began with the LLL algorithm~\cite{LLL82}, which provides a $2^{O(n)}$-approximation, and continued through foundational works of Babai~\cite{Bab86}, Kannan~\cite{Kannan87}, Schnorr~\cite{Sch87}, and Ajtai, Kumar, and Sivakumar~\cite{AKS01}. The fastest known algorithms for exact $\SVP$ are based on randomized sieving~\cite{AKS01}, extended to all $\ell_p$ norms~\cite{BN09}, general norms~\cite{AJ08}, and even asymmetric convex bodies~\cite{DPV11}, yielding $2^{O(n)} \cdot \poly(d)$-time algorithms for $\SVP_p$. In the case $p = 2$, a long line of work~\cite{AKS01,PS09,MV10} culminated in a $2^{n+o(n)}$-time algorithm~\cite{ADRS15,ADS15,AS17} for both exact $\SVP$ and exact $\CVP$, a $2^{0.835n}$ quantum algorithm for $\SVP$~\cite{ACKS25},  while constant-factor approximation for $\SVP$ is achievable in $2^{n/2+o(n)}$ time~\cite{ADRS15}. Even faster heuristic sieving algorithms are known under plausible assumptions~\cite{NguyenVidick08,WLTB11,Laarhoven2015,BDGL16}. For polynomial approximation factors, the best known algorithms for $\SVP_2$ run in time $2^{Cn}$, where the constant $C$ depends critically on the approximation factor~\cite{GN08,ALNS20,ALS21}, and small improvements in $C$ have significant implications for cryptographic security.
% For norms $p \neq 2$, the landscape was historically much less developed.
% While $2^{O(n)}$-time algorithms were known for exact $\SVP_p$ and $\CVP_p$\cite{Kannan87,AKS02,BN07}, the behavior of approximation algorithms—particularly the constant in the exponent in the running time—was not well understood until the work by Eisenbrand and Venzin~\cite{EV20}. 
% They 
Recently, Eisenbrand and Venzin~\cite{EV20} showed that the fastest constant approximation factor $\SVP_2$ algorithm can be adapted to solve constant approximation factor $\SVP_p$ and $\CVP_p$ for any $p$ with essentially the same runtime. Building on this~\cite{ACKLS21} developed tight, rank- and dimension-preserving reductions between $\SVP_p$ and $\CVP_p$ across all norms, handling both constant and polynomial approximation factors, thereby expanding the algorithmic toolkit for lattice problems in general $\ell_p$ spaces.

\textbf{Computational Hardness of Lattice Problems.}
A series of efforts \cite{Boas81,ABSS93,CN98,Mic01svp,Khot05svp,RR06,HR14} have shown that $\SVP_p$ and $\CVP_p$ are $\np$-hard to approximate to within any constant factor, and hard to approximate to within $n^{c/\log\log n}$ for a constant $c > 0$, under reasonable complexity theoretic assumptions. See \cite{Bennett23} for a recent survey on the hardness of $\svp$. These results however do not rule out the existence of sub-exponential time algorithms for $\SVP$. This question is of immense interest from a theoretical point of view, as well as from a practical (cryptographic) point of view. For instance, to break the minimally secure post-quantum cryptographic schemes currently being standardized (for example, \cite{Dilithium2021Algorithm}), one would need to solve $\SVP_2$ in roughly $400$ dimensions. At this stage, a $2^{n/\log n}$ time algorithm would be sufficient to break these schemes in practice. We need stronger and more {\em fine-grained} hardness assumptions to rule out the existence of such algorithms.

\textbf{Fine Grained Complexity.}
The \pcp theorem \cite{AS98,ALMSS98} was a breakthrough result that states that every language in \np has a polynomial-size proof that can be verified by a probabilistic verifier that reads only a constant number of bits from the proof. This implies \np hardness of approximation for problems such as $\problem{MAX3SAT}$ and $\lin$ \cite{Has01}. 
This rules out polynomial time algorithms for these problems (unless $\p = \np$), but does not provide any guarantees about the existence of sub-exponential time algorithms, which are of interest especially in cryptography. 
The analogous (to $\p \neq \np$) assumption to start from would be the Exponential Time Hypothesis (ETH) \cite{IP01}, which states that $\problem{3SAT}$ on $n$ variables cannot be solved in time $2^{o(n)}$. 
However, it is not known whether this can be used to rule out the existence of sub-exponential time approximation algorithms for problems such as $\problem{MAX3SAT}$ or $\lin$.
More recently, \cite{Din16,MR17} formulated this into a yet stronger hardness assumption called the \geth which states that {\em approximating} $\problem{MAX3SAT}$ on $n$ variables requires $2^{\Omega(n)}$ time.

\paragraph{Gap-ETH vs ETH.}

The Exponential Time Hypothesis (ETH) and its stronger variant, Gap-ETH, have both played central roles in establishing fine-grained complexity and inapproximability results. Gap-ETH, introduced in \cite{Din16}, states that for some constant $0<\eta < 1$, no algorithm can distinguish, in $2^{o(n)}$ time, between a $3\text{SAT}$ instance in which all $m$ clauses are satisfiable and one in which no assignment satisfies more than an $\eta$-fraction of the clauses. This stronger assumption has enabled sharp inapproximability results for a wide range of problems, including $2\text{CSP}$ \cite{Din16}, Densest $k$-Subgraph and $k$-Biclique \cite{densestfpt}, parameterized $\text{SVP}$ \cite{para_svp}, and $\text{TSP}$ \cite{tsp}. More recently, however, breakthrough works have shown how to derive similar gap-producing reductions under the weaker ETH assumption, leading to inapproximability results for problems such as $2\text{CSP}$, Gap $k$-Clique \cite{ethpih}, $\text{gapMAXLIN}$ \cite{BHIRW24}, and parameterized $\text{SVP}$ \cite{ethparsvp}. These results raise the intriguing possibility that ETH may in fact imply Gap-ETH—a connection that could be established, for instance, via a PCP for $3\text{SAT}$ with linear proof blowup, though the existence of such PCPs remains a long-standing open question. In light of this, a promising intermediate goal is to demonstrate that hardness results previously known only under Gap-ETH can in fact be obtained under ETH alone. Our work takes this approach: we use the gap-producing reduction from \cite{BHIRW24} to show the ETH hardness for the approximate lattice problems $\text{SVP}_{p,\gamma}$, $\text{CVP}_{p,\gamma}$, and $\text{BDD}_{p, \alpha}$, whose hardness was previously established only under Gap-ETH. Prior ETH-based hardness for these fundamental problems was unknown, though a sequence of works \cite{BGS17, AS18, AC21, bdd21, ABGS21} established their hardness under the stronger Gap-ETH assumption. Our results thus provide new evidence for the power of ETH and take a step toward bridging the gap between ETH and Gap-ETH in the context of lattice-based inapproximability.

\subsection{Our Results}

We study the hardness of lattice problems under the Exponential Time Hypothesis. Our results are summarized in \cref{tab:complexity_summary}.

At the heart of our results, is a recent breakthrough result of Bitansky et. al. \cite{BHIRW24} which shows that there is a polynomial time reduction from $\problem{3SAT}$ in, say, $n$ variables to gap version of $\lin$ problem with $\order{n}$ variables and $\order{n}$ clauses, which is a constraint satisfaction problem (CSP) where each clause is a linear equation over a finite field. It follows that (gap) $\lin$ is \ETH hard.

\renewcommand{\arraystretch}{1.2} % Increase row spacing

\begin{table}[h!]
\begin{center}
		\begin{tabular}{|c|  c | c | c | c|}
			\hline
			
			Problem&  $\cl_p$-norm   & \textsf{Gap-ETH} & \textsf{ETH} &Notes \\
			\hline
			$\SVP_{p,\gamma}$ &$2 < p < \infty$ & ${ 2^{\Omega(n)}}$ & ${\color{blue4} 2^{\Omega(n)}}$ &   \\
			
			&$1 \leq p \leq 2$& --&-- &  \\
			&$p = \infty$ & ${2^{\Omega(n)}}$ & ${2^{\Omega(n)}}$& \cite{BGS17} \\
			\hline
            
			  $\CVP_{p,\gamma}$&$1 \leq p < \infty$ & ${2^{\Omega(n)}}$ & ${\color{blue4}2^{\Omega(n)}}$ &   \\
			
			&$p = \infty$&${2^{\Omega(n)}}$ & ${2^{\Omega(n)}}$ & \cite{BGS17} \\
			\hline
            
		\end{tabular}
    \end{center}
\caption{Summary of known fine-grained upper and lower bounds for $\SVP_{p,\gamma}$ and $\CVP_{p,\gamma}$ for various values of $p$ and some constant $\gamma>1$, under various assumptions, with our results in {\color{blue4} blue}.}
        \label{tab:complexity_summary}
\end{table}

\textbf{ETH Hardness of $\cvp$.} We define $\cvp_{p,\gamma}$ to be a decision version of the Closest Vector Problem, where given a matrix $B$ over, say, integers, a target vector $\vec t$, and a radius $r$, the goal is to decide if there exists a lattice vector $\vec v \in \lat(B)$ such that $\norm{\vec v - \vec t}_p \leq r$, or for all lattice vectors $\vec v$, $\norm{\vec v - \vec t}_p > \gamma r$. Here $\lat(B)$ is the lattice generated by the columns of $B$. 
Our first result says that for any $p \in [1, \infty)$, there is no sub-exponential (in the dimension of the lattice) algorithm for $\cvp_{p,\gamma}$ for some (explicit) constant $\gamma > 1$, unless the Exponential Time Hypothesis is false.
\begin{restatable}[\ETH hardness of $\cvp_{p, \gamma}$]{theorem}{thmcvpethhard}
\label{thm:CvpEthHard}
For any $p \in [1, \infty)$, there exists a constant $\gamma > 1$ such that for all $n \in \Z^+$, there is no $2^{o(n)}$ time algorithm for $\CVP_{p, \gamma}$ over $\R^n$, unless the Exponential Time Hypothesis is false. 
\end{restatable}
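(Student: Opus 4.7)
The plan is to combine the ETH-hardness of gap-$\LIN$ from Bitansky et al.~\cite{BHIRW24} with a direct, deterministic reduction from gap-$\LIN$ to $\CVP_{p,\gamma}$. BHIRW24 give a polynomial-time reduction from $\sat$ on $n$ variables to a gap version of $\LIN$ over a finite field $\mathbb{F}_q$ (for the cleanest case, $\mathbb{F}_2$) with $O(n)$ variables and $O(n)$ equations, where in the YES case some assignment satisfies a $5/8$ fraction of the equations and in the NO case no assignment satisfies more than a $(5/8 - \eps)$ fraction, for a fixed constant $\eps > 0$. Under ETH this rules out any $2^{o(n)}$-time algorithm for gap-$\LIN$, so it suffices to exhibit a polynomial-time reduction from gap-$\LIN$ to $\CVP_{p,\gamma}$ on lattices in $\Z^m$ with $m = O(n)$, for some explicit $\gamma > 1$.

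For the reduction I would use the classical mod-$q$ CVP encoding of a linear system. Given an instance $(A,\vec b)$ with $A \in \mathbb{F}_q^{m \times n}$ and $\vec b \in \mathbb{F}_q^m$, lift $A,\vec b$ to integer representatives $\tilde A \in \Z^{m \times n}$ and $\tilde{\vec b} \in \Z^m$ with entries in $\{0,\dots,q-1\}$, and define the lattice $\lat \subseteq \Z^m$ generated by the columns of $\tilde A$ together with $q\vec e_1, \ldots, q\vec e_m$. Set the target $\vec t = \tilde{\vec b}$ and the threshold $r = ((3/8)m)^{1/p}$. The instance lives in ambient dimension $m = O(n)$, so any $2^{o(m)}$-time algorithm for $\CVP_{p,\gamma}$ on such instances would immediately yield a $2^{o(n)}$-time algorithm for $\sat$, contradicting ETH.

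For the gap analysis, observe that for every $\vec y \in \Z^n$,
\begin{equation*}
\min_{\vec z \in \Z^m} \|\tilde A \vec y + q \vec z - \tilde{\vec b}\|_p^p \;=\; \sum_{i=1}^m \min_{k \in \Z} \bigl| \tilde A_i \vec y - \tilde b_i - q k \bigr|^p,
\end{equation*}
and the $i$-th summand is $0$ when equation $i$ is satisfied by $\vec y \bmod q$ and is at least $1$ otherwise. Working over $\mathbb{F}_2$ this summand is exactly $1$ for each unsatisfied equation. Hence in the YES case, lifting an assignment satisfying $\geq 5m/8$ equations yields $\di(\vec t, \lat)^p \leq (3/8)m = r^p$, while in the NO case every $\vec y$ leaves more than $(3/8 + \eps)m$ equations unsatisfied and so $\di(\vec t, \lat)^p > (3/8 + \eps)m = (\gamma r)^p$ with $\gamma := (1 + 8\eps/3)^{1/p}$, an explicit constant strictly greater than $1$ for each $p \in [1,\infty)$.

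The main technical subtlety I anticipate is controlling the per-coordinate contribution when $q > 2$: an unsatisfied equation always contributes at least $1$ to the $\ell_p^p$ distance, but in the completeness direction it can cost up to $(q/2)^p$ per equation, which would shrink or destroy the gap. If BHIRW24 can be stated over $\mathbb{F}_2$ this is moot; otherwise I would either (a) preprocess the $\mathbb{F}_q$-$\LIN$ instance into an $\mathbb{F}_2$-$\LIN$ instance via a standard field-size reduction that preserves both the constant gap and the linear blow-up in $m$, or (b) replace each equation by $q-1$ indicator coordinates (one per nonzero residue class) so that the penalty for any unsatisfied equation is exactly $1$. Either fix maintains $m = O(n)$ and hence the desired $2^{o(n)}$ ETH lower bound.
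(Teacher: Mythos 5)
Your proposal is correct and follows essentially the same approach as the paper: the lattice $[M \;\; 2I_m]$ with target $\vec v$ and radius $r = (3m/8)^{1/p}$, giving $\gamma = (1+8\varepsilon/3)^{1/p}$, is exactly the paper's construction, and since the BHIRW24 reduction is stated over $\F_2$, the $q>2$ subtlety you flag does not arise.
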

To prove this, we give a {\em deterministic} Karp reduction from $\lin$ over, say, $n$ variables and $m$ equations to $\cvp_{p,\gamma}$ in a lattice in $m$ dimensions.

\textbf{Randomized ETH Hardness of $\svp$.} We define $\svp_{p,\gamma}$ to be a decision version of the approximate shortest vector problem, where given a matrix $B$ over, say, integers, and a radius $r$, the goal is to decide if there exists a lattice vector $\vec v$ such that $\norm{\vec v}_p \leq r$, or if for all lattice vectors $\vec v$, we have that $\norm{\vec v}_p > \gamma r$.
Our main result is that for any $p > 2$, unless the randomized Exponential Time Hypothesis is false, there is no sub-exponential time algorithm for $\svp_{p,\gamma}$ for an explicit constant $\gamma > 1$.

\begin{restatable}[\ETH hardness of $\svp_{p, \gamma}$]{theorem}{thmsvpethhard}
    \label{thm:SvpEthHard}
    For any $p\in (2, \infty)$, there exists a constant $\gamma > 1$ such that for all sufficiently large $n \in \Z^+$, there is no $2^{o(n)}$ time algorithm for $\SVP_{p, \gamma}$ over $\R^n$ unless the randomized Exponential Time Hypothesis is false.
\end{restatable}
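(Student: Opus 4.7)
The plan is to reduce the already-$\ETH$-hard problem $\CVP_{p,\gamma}$ of Theorem~\ref{thm:CvpEthHard} to $\SVP_{p,\gamma'}$ via a rank-preserving randomized Karp reduction. The template is the classical ``locally dense lattice'' (LDL) paradigm: given a gap-$\CVP$ instance $(\basis, \vec t, r)$ in rank $n$ and an auxiliary lattice $\basis'$ with target $\vec s$ at which exponentially many lattice points cluster within $\ell_p$-radius $r'$, one glues the two instances together in rank $O(n)$ so that any short vector of the combined lattice must simultaneously witness a $\CVP$-solution in the $\basis$-block and a close-to-target point in the $\basis'$-block, producing a constant $\SVP$ gap $\gamma' > 1$.

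The crux --- and the restriction to $p > 2$ --- lies in constructing the LDL gadget. The idea is to use the integer lattice $\Z^n$ with target $\vec s = (1/2,\ldots,1/2)$, for which the $2^n$ vectors of $\{0,1\}^n$ sit at $\ell_p$-distance exactly $n^{1/p}/2$ from $\vec s$; this is the exponential local density one needs. The obstacle is that the shortest nonzero vector of $\Z^n$ in $\ell_p$ is $\pm \vec e_i$, of length $1$, which is \emph{shorter} than $r' = n^{1/p}/2$, so naively $\Z^n$ is the opposite of an LDL. The fix, and the novel property of $\Z^n$ in $\ell_p$ that I would exploit, is that for $p > 2$ the ``spread'' vectors of $\{0,1\}^n$ are substantially shorter in $\ell_p$ relative to their $\ell_2$ length than sparse vectors are (the ratio $n^{1/p}/n^{1/2}$ is strictly below one), so a random linear map or a random sparse perturbation applied to $\Z^n$ can --- with high probability over the randomness --- increase the $\ell_p$-length of every sparse integer vector by a large factor while preserving the $\ell_p$-lengths of the $2^n$ spread vectors up to a constant. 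Only the $p > 2$ regime admits such an asymmetric transformation, which is the source of both the novel property of $\Z^n$ advertised in the introduction and the randomized nature of the resulting reduction.

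With the gadget in place, I would form the combined lattice $\cL$ of rank $O(n)$ by stacking $\alpha \basis$ on $\beta \basis'$ together with a target-encoding column for $\vec t$ and $\vec s$, with scaling constants $\alpha, \beta > 0$ tuned so that a $\CVP$ solution paired with any $\{0,1\}^n$-vector near $\vec s$ yields a short vector of $\ell_p$-length at most $(\alpha^p r^p + \beta^p r'^p)^{1/p}$. In the NO case, a case analysis split by the ``target coefficient'' bounds the three potentially short families: lifts of unrelated short vectors in $\basis$, ruled out by a standard sparsification preprocessing that boosts $\lambda_1(\basis)$ without disturbing $\CVP$-distances; gadget-only vectors, ruled out by the randomized LDL property above; and target-using lifts, ruled out by the $\CVP$-gap on the $\basis$-block. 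The main obstacle is engineering the random $\Z^n$-transformation so that all three families admit a \emph{uniform} constant gap $\gamma' > 1$ with high probability; once this is done, composing with Theorem~\ref{thm:CvpEthHard} and randomized $\ETH$ yields the claimed $2^{\Omega(n)}$ lower bound on $\SVP_{p,\gamma'}$.
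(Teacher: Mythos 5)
Your high-level framing — reduce from $\CVP_{p,\gamma}$ via a locally dense gadget, glue, sparsify — is the right skeleton, and you correctly intuit that the all-half target in $\Z^n$ is the gadget. But the crucial step you propose for making $\Z^n$ locally dense is not what the paper does and almost certainly does not work. You claim that a random linear map or random sparse perturbation of $\Z^n$ can stretch every sparse integer vector (such as $\pm\vec e_i$) while preserving the $\ell_p$-lengths of all $2^n$ binary vectors up to a constant. No such transformation is exhibited, and it is hard to see how one could exist: a linear map that keeps all $2^n$ vertices of the hypercube within a constant factor of their original length must, by convexity and the fact that every $\vec e_i$ is a difference of two such vertices, also keep every $\vec e_i$ within a constant factor of its original length. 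The paper never modifies $\Z^n$ at all. Instead it proves a purely \emph{analytic} statement about the theta function $\Theta_p(\tau,t) = \sum_{z\in\Z}\exp(-\tau|z-t|^p)$: for $p>2$ there is a $\tau>0$ with $\Theta_p(\tau,1/2)>\Theta_p(\tau,0)$ (established rigorously for $p\ge 2+10^{-7}$ in \cref{lem:thetaProp}, and for general $p>2$ via a dyadic target $t=k/2^z$ in \cref{lem:existt,newidea}). This inequality, fed through the counting machinery of \cref{lemma:theta0,thm:theta_gives_good_approx}, yields $\nump{\Z^d,(1-\delta)^{1/p}r^\dagger,\vec t^\dagger}\ge\phi_0^{d-o(d)}\nump{\Z^d,r^\dagger,\vec 0}$ with $\phi_0>1$ — that is, $\Z^d$ is \emph{already} locally dense at the right radius, without any perturbation. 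The few short vectors that do exist are then killed by sparsification of the combined lattice, exactly as you sketch. In other words, your ``obstacle'' (that $\lambda_1(\Z^n)=1 < n^{1/p}/2$) is resolved by a counting argument at a shifted radius, not by deforming the lattice.

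You also miss a technical obstruction that drives much of the paper's design. Because the $\CVP$ instance coming from $\lin_\varepsilon$ has $2\vec t'\in\lat(B')$, the usual argument from \cite{AS18} (which exploits $\dist_p(\lat',\ell\vec t')$ being large for all nonzero $\ell$) breaks down: for every even $\ell$ the distance is zero. The fix (see \cref{lem:cvpToSvp,thm:SvpIntegerGadget_new}) is to require the gadget to have exponentially more lattice points near $\vec t^\dagger$ than near \emph{every even multiple} $2\ell\vec t^\dagger$, which forces the all-half (or dyadic) target choice and is exactly why the theta inequality at $t=1/2$ (or $t=k/2^z$) is the form needed. Your NO-case analysis does not confront this, and without it the ``gadget-only'' family of annoying vectors you list is not actually controlled. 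Finally, note that the paper proves $\Theta_p(\tau,1/2)>\Theta_p(\tau,0)$ only for $p\ge 2+10^{-7}$ and conjectures the rest; the argument for all $p>2$ uses the existence of an interior local minimum of $t\mapsto\Theta_p(\tau,t)$ at $t=0$ to find a dyadic $t$ that works — a subtlety absent from your outline.
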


To prove this theorem we show a novel property of the integer lattice $\Z^n$, which says that for any $p > 2$ there are exponentially more lattice vectors close (in the $\ell_p$ norm) to $\frac{1}{2}\vec 1_n$, than the number of short vectors, i.e., vectors around the origin.
Then we borrow techniques from \cite{AS18} to show that we can use the $\cvp_{p, \gamma}$ instances constructed in the proof of Theorem \ref{thm:CvpEthHard} to generate an $\svp_{p, \gamma'}$ instance for some $\gamma' > 1$ via a Karp reduction. Chaining together the efficient reductions from $\sat$ to (gap) $\lin$, from (gap) $\lin$ to $\cvp_{p, \gamma}$ and from $\cvp_{p, \gamma}$ to $\svp_{p, \gamma'}$, we get our result. 

\textbf{$p = \infty$ case.} We note that the reduction from $k$-$\problem{SAT}$ to $\cvp_{\infty}$ and $\svp_{\infty}$ in \cite[Corollary 6.7]{BGS17} achieves a gap of $\gamma_p = 1 + 2 / (k-1)$. Therefore for $k = 3$, they get a gap of $2$, and hence we know the $\ETH$ hardness of $\cvp_{\infty, \gamma}$ and $\svp_{\infty, \gamma}$ for $\gamma = 2$ since their work. 

\textbf{A reduction from $\cvp$ to $\bdd$, and hence ETH Hardness of $\bdd$.} We define $\bdd_{p,\alpha}$ to be the following search problem. The input is a matrix $B$ over, say, integers and a target vector $\vec t$ under the promise that there exists a lattice vector $\vec v$ at a distance at most $\alpha \cdot \lambda_1^{(p)}(\lat(B))$, where $\lambda_1^{(p)}(\lat(B))$ is the length of the shortest non-zero vector in $\lat(B)$ in the $\ell_p$ norm. The goal is to find a lattice vector closest to $\vec t$.
We show that for any $p \in [1, \infty)$, there is an efficient decision-to-search  reduction from $\cvp_{p, \gamma}$ over a lattice over integers to $\bdd_{p, \alpha}$, for any constant $\gamma$ and $\alpha > \alpha_p^{\ddagger}$, where $\alpha_p^{\ddagger}$ is an explicit constant defined in \cref{eqn:alphapddagger}, such that $\alpha_p^{\ddagger} = 1$ for $p \in [1, 2]$, $\alpha_p^{\ddagger} < 1$ for $p > 2$, and $\alpha_p^{\ddagger} \to 1/2$ as $p \to \infty$.

\begin{restatable}[$\cvp_{p, \gamma}$ reduces to $\bdd_{p, \alpha}$]{theorem}{thmcvpbddreduction} 
\label{thm:CvpBddReduction}
    For any $\gamma'>1$, $c>0$, and $p\in [1,\infty)$, the following holds for all $\alpha>\alpha _p^{\ddagger}$ and sufficiently large $m \in \Z^+$.
    There is a decision-to-search reduction from any $\cvp_{p, \gamma'}$ to $\bdd_{p, \alpha}$, where the $\cvp_{p, \gamma'}$  instance $(B', \vec t', r')$ is such that $B' \in \Z^{m \times m'}$, $\vec t' \in \Z^m$ and $r' = c \cdot m^{1/p}$. 
\end{restatable}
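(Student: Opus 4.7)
\textbf{Proof strategy for Theorem~\ref{thm:CvpBddReduction}.} The plan is to reduce $\cvp_{p,\gamma'}$ to $\bdd_{p,\alpha}$ by constructing, from the input $(B',\vec t',r')$, a lattice $\lat(\basis)$ and a target $\vec t$ with two properties: \emph{(i)} whenever the $\cvp$ is a YES instance, $\vec t$ lies within distance $r'$ of $\lat(\basis)$, and \emph{(ii)} $\lambda_1^{(p)}(\lat(\basis)) \geq r'/\alpha$, so that the $\bdd_{p,\alpha}$ promise $\di(\vec t,\lat(\basis)) \leq \alpha\,\lambda_1^{(p)}(\lat(\basis))$ is automatically satisfied. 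The decision version of $\cvp$ is then resolved by invoking the $\bdd$ oracle, decoding its output to a candidate $\vec u \in \lat(B')$, and checking whether $\|\vec u-\vec t'\|_p \leq r'$; the $\gamma'$-gap in $\cvp$ will guarantee correctness on NO instances, where no vector can survive the final check.

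Since $r' = c\,m^{1/p}$ is much larger than $\lambda_1^{(p)}(\Z^m)=1$, the heart of the argument is inflating the minimum distance of the lattice to $\Omega(m^{1/p})$ without moving the close-to-target vector. I would augment $\lat(B')$ via a Construction-A--style embedding into a lattice of the shape $\Lambda_C = \{\vec v \in \Z^m : \vec v \bmod q \in C\}$, where $C$ is a linear code of constant relative Hamming distance $\delta > 0$ and $q = \Theta(m^{1/p})$. Any non-zero $\vec v \in \Lambda_C$ is either an element of $q\Z^m$ (so $\|\vec v\|_p \geq q$) or has Hamming support of size at least $\delta m$ (so $\|\vec v\|_p \geq (\delta m)^{1/p}$), forcing $\lambda_1^{(p)}(\Lambda_C) = \Omega(m^{1/p})$. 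The integrality of $B'$ and $\vec t'$ lets us align $\lat(B')$ with this embedding so that the YES witness survives inside $\Lambda_C$. Optimizing $\delta$ and $q$ against the constant $c$ in $r'=c\,m^{1/p}$ and the required radius-to-$\lambda_1$ ratio is what produces the explicit threshold $\alpha_p^{\ddagger}$: the $\ell_p$ geometry makes the trade-off trivial ($\alpha_p^{\ddagger}=1$) for $p\leq 2$, where a single non-zero integer coordinate already gives the tight bound, and non-trivial ($\alpha_p^{\ddagger}<1$, tending to $1/2$ as $p\to\infty$) for $p>2$, matching the deep-hole-to-minimum-distance ratio of $\Z^m$ in the $\ell_\infty$ limit.

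After invoking the $\bdd_{p,\alpha}$ oracle on the constructed instance $(\basis,\vec t')$, its output is decoded to the unique nearby $\lat(B')$ vector by reducing modulo $q$ and using unique decoding in $C$; we then output YES iff this vector is within $r'$ of $\vec t'$. In the YES case the original close vector $\vec u \in \lat(B') \subseteq \Lambda_C$ witnesses the $\bdd$ promise and is recovered by this decoding; in the NO case, the $\gamma'$-gap together with the constant relative distance of $C$ rules out any vector of $\Lambda_C$ that could spuriously pass the distance check.

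The main obstacle is proving the minimum-distance bound $\lambda_1^{(p)}(\lat(\basis))\geq r'/\alpha$ for an \emph{arbitrary} integer-valued $B'$: one must rule out ``mixed'' short vectors formed by linear combinations of the old generators of $\lat(B')$ with the new augmentation generators. This will require a careful coordinate-wise $\ell_p$-norm case analysis and a calibration of $q$, the code parameters, and the constants $c$ and $\gamma'$ so that no cancellation produces a non-zero vector of norm below $c\,m^{1/p}/\alpha$, for every $\alpha > \alpha_p^{\ddagger}$ and all sufficiently large $m$. The explicit expression for $\alpha_p^{\ddagger}$ in \cref{eqn:alphapddagger} is then read off from this optimization.
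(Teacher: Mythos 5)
Your construction takes a genuinely different route from the paper, but the route as described has gaps that would need to be filled, and the most critical one is not acknowledged.

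The paper does \emph{not} inflate $\lambda_1^{(p)}$ of the input lattice via a code-based augmentation. Instead it forms the direct sum
\[
B := \begin{pmatrix} B' & 0 \\ 0 & s B^{\dagger} \end{pmatrix}, \qquad
\vec t := \begin{pmatrix} \vec t' \\ s\,\vec t^{\dagger} \end{pmatrix},
\]
where $(B^\dagger,\vec t^\dagger)$ is a \emph{locally dense} integer gadget (\cref{lem:intgadgetbdd}, $B^\dagger=I_d$, $\vec t^\dagger=t\cdot\vec 1_d$) chosen so that the number $G$ of lattice points within the target ball exceeds, by a factor $2^m$, the number $A$ of short lattice vectors and of target-ball points in the NO case (\cref{lem:bdd_main}). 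Crucially, the direct-sum lattice \emph{still has many short vectors} — $\lambda_1^{(p)}$ is \emph{not} large. The promise $\di_p(\vec t,\lat)\le\alpha\,\lambda_1^{(p)}(\lat)$ is only achieved after \emph{random sparsification} (\cref{lem:bdd_sparcification}): a random prime-index sublattice and random target shift kill, with probability $1-2^{-\Omega(m)}$, all the short vectors (and all spurious NO-case near-target points), while preserving at least one near-target point in the YES case because $G\gg A$. This is why \cref{thm:BddEthHard} is stated under \emph{randomized} ETH. Your sketch is deterministic, and it does not contain a mechanism for removing the short vectors that are inevitably present.

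Your Construction-A idea, as written, is underspecified at exactly this point. If you take $\lat(B')\cap\Lambda_C$ you may lose the YES witness; if you take a direct sum or superlattice, $\lambda_1$ is still at most $\lambda_1^{(p)}(\lat(B'))=O(1)$ (note $2\Z^m\subseteq\lat(B')$). There is no obvious way to replace $\lat(B')$ by a modified lattice of large minimum distance while keeping both the distance $\di_p(\vec t',\lat)$ small in the YES case and the distance $>\gamma' r'$ in the NO case — that is the whole difficulty. Moreover, $\alpha_p^{\ddagger}$ in \cref{eqn:alphapddagger} is defined as an infimum of a ratio of inverse $\beta$-functions, which arise from the theta-function point-counting estimates for $\Z^d$ (\cref{thm:theta_gives_good_approx}); it is not a code-distance or deep-hole ratio, and there is no reason to expect a Construction-A calibration to reproduce this exact threshold. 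You correctly recall the qualitative behaviour of $\alpha_p^\ddagger$ ($=1$ for $p\le 2$, $\to 1/2$ as $p\to\infty$), but the quantitative matching would require, in effect, redoing the $\Theta_p$ analysis, at which point you have reconstructed the paper's gadget. In summary: the missing ingredients are (i) lattice sparsification to kill short vectors and spurious near-target points, which makes the reduction inherently randomized, and (ii) a locally dense gadget whose counting analysis is what produces $\alpha_p^\ddagger$.
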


As a consequence, combining this result with our reduction from (gap) $\lin$ to $\cvp_{p, \gamma}$, we get that for any $p \in [1, \infty)$, $\alpha > \alpha_p^{\ddagger}$, there is no sub-exponential time algorithm for $\bdd_{p, \alpha}$, unless the randomized Exponential Time Hypothesis is false. 

\begin{restatable}[\ETH hardness of $\bdd_{p, \alpha}$]{theorem}{thmbddethhard}
\label{thm:BddEthHard}
    For any $p \in [1, \infty)$, $\alpha > \alpha_p^{\ddagger}$, there is no $2^{o(n)}$ time algorithm for $\bdd_{p, \alpha}$ over $\R^n$, unless the randomized Exponential Time Hypothesis is false.
\end{restatable}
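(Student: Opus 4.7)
}

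The plan is to chain three reductions: (i) the gap-producing polynomial-time reduction of Bitansky et al.\ \cite{BHIRW24} from $\sat$ on $n$ variables to (gap) $\lin$ on $O(n)$ variables and $O(n)$ equations; (ii) the deterministic reduction from (gap) $\lin$ to $\cvp_{p,\gamma}$ that underlies Theorem \ref{thm:CvpEthHard}; and (iii) the decision-to-search reduction from $\cvp_{p,\gamma'}$ to $\bdd_{p,\alpha}$ provided by Theorem \ref{thm:CvpBddReduction}. Composing these, a $\sat$ instance on $n$ variables is mapped, in polynomial time, to a $\bdd_{p,\alpha}$ instance over a lattice of dimension $O(n)$. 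Hence a hypothetical $2^{o(n)}$-time algorithm for $\bdd_{p,\alpha}$ yields a $2^{o(n)}$-time algorithm for $\sat$, contradicting ETH (randomized ETH, since stage (iii) is potentially randomized).

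The first step I would carry out is to unpack the reduction behind Theorem \ref{thm:CvpEthHard} and verify that the $\cvp_{p,\gamma}$ instances it produces meet the structural hypotheses required by Theorem \ref{thm:CvpBddReduction}: namely, the basis $B'$ has integer entries, the target $\vec t'$ is an integer vector, and the ``yes'' radius is of the form $r' = c \cdot m^{1/p}$, where $m$ is the ambient dimension and $c > 0$ is a constant. This is natural in standard $\lin$-to-$\cvp$ encodings, where each coordinate of the lattice corresponds to one linear equation, so that $\ell_p^p$-distance to the target measures the number of violated equations; a completeness bound of $c_1 \cdot m$ on satisfied equations and a soundness bound of $c_2 \cdot m$ on violated ones directly yield distances proportional to $m^{1/p}$. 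If needed, one can clear denominators by scaling the basis and pad zero coordinates without affecting the gap constant $\gamma$.

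Given this compatibility, the second step is to fix any $\alpha > \alpha_p^{\ddagger}$, choose $\gamma' > 1$ to match whatever constant is produced by the first reduction, and invoke Theorem \ref{thm:CvpBddReduction} to map the $\cvp_{p,\gamma'}$ instance to a $\bdd_{p,\alpha}$ instance. Since all three reductions are polynomial-time and dimension-preserving up to a constant factor, a $2^{o(n)}$-time solver for the final $\bdd$ instance pulls back to a $2^{o(n)}$-time (randomized) solver for the starting $\sat$ instance; one checks routinely that the success probability of the composed reduction can be amplified to a constant.

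The main potential obstacle is the structural compatibility check in stage (ii): one must ensure both that the CVP instances produced by the $\lin$ reduction already have integer basis and target and the prescribed radius shape $c \cdot m^{1/p}$, and that the gap constant $\gamma$ delivered by Theorem \ref{thm:CvpEthHard} is flexible enough to serve as the $\gamma'$ demanded by Theorem \ref{thm:CvpBddReduction}. Once these parameter matches are confirmed, the rest of the argument is a clean composition of polynomial-time linear-dimension reductions, and the ETH-hardness transfer is immediate.
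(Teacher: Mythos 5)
Your proposal is correct and follows essentially the same approach as the paper: compose the BHIRW24 reduction (giving ETH-hardness of $\lin_\varepsilon$, stated as \cref{cor:linEthHard}), the deterministic reduction from $\lin_\varepsilon$ to $\cvp_{p,\gamma'}$ of \cref{thm:LinCvpReduction}, and the decision-to-search reduction of \cref{thm:CvpBddReduction}, yielding \cref{lintobdd} and hence the theorem. The structural compatibility you flag is indeed immediate from the construction in \cref{thm:LinCvpReduction}, where $B' = [M \mid 2I_m]$, $\vec t' = \vec v$, and $r' = (3m/8)^{1/p}$.
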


\textbf{$\ETH$ hardness of Minimum Distance Problem for Linear Codes.} A linear code is a subspace of $\F_q^m$, for some prime power $q$. Given a full rank generator matrix $C \in \F_q^{m \times n}$, such that $1 \leq n \leq m$, the $q$-ary code generated by $C$ is 
\[ \mathcal{C} := C \F_q^n = \set{C\vec z : z \in \F_q^n}. \]
The elements of $\mathcal{C}$ are known as {\em codewords}.
The nearest codeword problem ($\problem{NCP}$) asks to find the minimal Hamming distance between a given target vector and a codeword in a given linear code. This problem can be thought of as the code equivalent of the closest vector problem ($\cvp$) over lattices. 
Similarly, the $\svp$ equivalent over codes is the minimum distance problem for linear codes ($\problem{MDP}$), that asks to find the minimal Hamming weight of a non-zero code word in a given code. 
Observe that the $\lin$ problem is essentially the $\problem{NCP}$ problem in disguise. 

In \cite{SDV19}, the authors studied the $\problem{SETH}$ and $\geth$ hardness of $\problem{MDP}$ and $\problem{NCP}$. They give a reduction from $k$-$\problem{SAT}$ (Max-$k$-$\problem{SAT}$) to ($\gamma$-approximate) $\problem{NCP}$, establishing $\problem{SETH}$ ($\geth$) hardness of ($\gamma$ approximate) $\problem{NCP}$. Here $\gamma > 1$ is a constant. 
Moreover, they give a reduction from $\gamma$-$\problem{NCP}$ in rank $n$ to $\gamma'$-$\problem{MDP}$ in rank $Cn$ for constants $C$, $\gamma'$, thereby showing $\geth$ hardness of $\gamma'$-$\problem{MDP}$. 
We note that this reduction can be thought of as a reduction from $\lin_\varepsilon$ for a constant $\varepsilon$ in $n$ variables to $\gamma$-$\problem{MDP}$ in rank $n$, for another constant $\gamma$. Thus, we get $\ETH$ hardness of $\gamma$-$\problem{MDP}$ for some constant $\gamma$ for free.

\subsection{Our Techniques}
\label{ssec:tech_overview}

\textbf{From Linear Equations to $\cvp$.}
$\lin$ is one of the classical $\np$-hard approximation problems \cite{Has01}. An instance of $\lin$ is of the form $(M, \vec v)$, where $M \in \F^{m \times n},~\vec v \in \F^{m}$ have entries from a finite field $\F$. 
The goal is to find a vector $\vec x \in \F^n$ that satisfies as many linear equations $M_i \cdot \vec x =  v_i$  as possible. 
The gap version $\problem{gap}_{c,s}\lin$ is a promise problem where an instance $(M, \vec v)$ is a \yes instance if there exists a vector $\vec x \in \F^n$ that satisfies at least $cm$ of the linear equations, and a \no instance if every vector $\vec x \in \F^n$ satisfies at most $sm$ of the linear equations.
\cite{BHIRW24} showed that there is a Karp reduction from $\sat$ to $\problem{gap}_{c,s}\lin$ for $c = 5/8$ and $s = 5/8 - \varepsilon$ for an explicit constant $\varepsilon > 0$. This implies that $\problem{gap}_{c,s}\lin$ is \ETH hard. From now on, we write $\lin_\varepsilon$ to denote $\problem{gap}_{c,s}\lin$ over $\F_2$, for $c = 5/8$ and $s = 5/8 - \varepsilon$. 

To prove \cref{thm:CvpEthHard}, we show a deterministic Karp reduction from $\lin_\varepsilon$ to $\cvp_{p,\gamma}$ for a constant $\gamma := \paren{1 + 8\varepsilon / 3}^{1/p}$. Consider the following matrix-vector pair
\begin{equation}
    \label{peqn:cvp_instance}
    B := [M \quad 2I_m]; \quad \vec t := \vec v,
\end{equation}
where $I_m$ is the identity matrix. Consider the lattice generated by $B$. 
Appending $2I_m$ to $M$ ensures that distances between the lattice points in $\lat(B)$ to the target vector $\vec t$ are computed modulo 2. Precisely, the presence of $2I_m$ enforces that for any lattice point $B\vec{x}$, the coordinates of the difference $B\vec{x} - \vec t$ remains within an equivalence class mod 2, effectively reducing the computations to over $\F_2$. Therefore, we are able to show that if the input was a \yes instance of $\lin_\varepsilon$, then there exists a lattice vector in $\lat(B)$ that is at distance at most $r := (3m/8)^{1/p}$ from the target vector $\vec t$; and if the input was a \no instance of $\lin_\varepsilon$, then for all lattice vectors in $\lat(B)$, their distance to the target vector $\vec t$ is at least $\gamma r$. Thus, $(B, \vec t, r)$ is an instance of $\cvp_{p,\gamma}$.

\textbf{Enter Sparsification: From $\lin$ to $\svp$.}
Aggarwal and Stephens-Davidowitz \cite{AS18} showed $\geth$ hardness of $\svp_{p, \gamma}$ for all $p > 2$. Using ideas from Khot \cite{Khot05svp}, they reduced an instance of (gap) $\problem{ExactSetCover}$ over $m$ sets and a universe of size $k$, to $\svp_{p, \gamma}$, using an auxiliary {\em gadget}, a lattice-target pair $(B^{\dagger}, \vec{t}^{\dagger})$ of dimension $\dgr d$. 
Precisely, they used the input $\problem{ExactSetCover}$ instance to define a lattice $\widehat{\lat} \subset \R^{m + k}$ generated by a matrix $\widehat{B}$, and a vector $\widehat{\vec t}$. Define 
\[ B := \begin{pmatrix}
    \widehat{B} & 0 \\
    0 & \dgr B
\end{pmatrix}; \quad \vec t := \begin{pmatrix}
    \widehat{\vec t} \\
    \vec{t}^{\dagger}
\end{pmatrix}. \]

Denote the number of vectors at a radius at most $r$ around $\vec t$ in the lattice $\lat(B)$ by $\nump{\lat(B), r, \vec t}$. 
They show that for a certain choice of $\dgr B$, if the $\problem{ExactSetCover}$ instance is a \yes instance, then $\nump{\lat(B), r, \vec t}$ is exponentially larger than $\nump{\lat(B), \gamma r, \vec t}$ if it were a \no instance. 
Then they define a (generating set of a) lattice 

\[ B' := \begin{pmatrix}
    B & \vec t \\ 
    0 & s 
\end{pmatrix}, \]
for some $s$, and use a lattice {\em sparsification} algorithm (\cref{ssec:sparcification_prelims}) to sample a random (sparser) sub-lattice $\lat'' \subseteq \lat(B')$ such that if the input $\problem{ExactSetCover}$ instance was a \yes instance, at least one lattice vector of length at most $r$ survives in $\lat''$, and if it were a \no instance, then all lattice vectors of length at least $\gamma r$ die. 
Together with a reduction from $\problem{Gap3SAT}$ to $\problem{ExactSetCover}$, they get the $\geth$ hardness of $\svp_{p, \gamma}$ for all $p > 2$.

They pick the gadget $(\dgr B, \vec{t}^{\dagger})$ so that if the input $\problem{ExactSetCover}$ instance was a \yes instance, then it blows up the number of short lattice vector exponentially, whereas if it were a \no instance, the number of short lattice vectors remains small. This corresponds to a gadget that has exponentially more close vectors than short vectors. (We usually call such a lattice-target pair a {\em locally dense} gadget.) They show that for all $p > 2$, the integer lattice $\Z^d$, along with the vector $\vec t := t \cdot \vec{1}_d$ for some $t \in (0, 1/2]$ satisfies this property. To see this, for any $\tau > 0, t \in [0, 1/2]$ define the theta function

\[ \Theta_p(\tau, t) := \sum_{z \in \Z} \exp\paren{-\tau \abs{z - t}^p}. \]

Notice that 

\begin{align*}
    \Theta_p(\tau, \vec t) := \Theta_p(\tau, t)^d &= \sum_{z_1, \ldots, z_d \in \Z} \exp\paren{-\tau \sum_{i=1}^d \abs{z_i - t}^p} \\
    &\geq \sum_{\substack{\vec z \in \Z^d \\ \norm{\vec z - \vec t}_p \leq r }} \exp\paren{-\tau \norm{\vec z - \vec t}_p^p} \\
    &\geq \exp\paren{-\tau r^p} \cdot \nump{\Z^d, r, \vec t}.
\end{align*}

This implies that
\begin{equation}
    \nump{\Z^d, r, \vec t} \leq \exp\paren{\tau r^p} \cdot \Theta_p(\tau, t)^d.
    \label{eqn:zdt}
\end{equation}

Therefore, the theta function can be used to find an upper bound to the number of lattice vectors close to $\vec t$ in the $\ell_p$ norm.
In fact, the above inequality is quite strict. It has been shown \cite{MO90,EOR91,AS18} that $\Theta_p(\tau, \vec t)$ can be used to approximate the number of integer points in an $\ell_p$ ball up to sub-exponential factors. 
% \begin{equation}
%     \nump{\Z^d, r, \vec t} = (\inf_{\tau > 0} \exp(\tau C^p) \Theta_p(\tau, t))^d \cdot 2^{-\order{\sqrt{d}}},
% \end{equation}
% where $C$ depends on $r$.
Thus, there exists a vector of the form $t \cdot \vec{1}_d$ for some $t \in (0, 1/2]$ such that there are exponentially more close lattice vectors in the integer lattice than short lattice vectors if and only if there exists a $\tau > 0$ and a $t \in (0, 1/2]$ such that $\Theta_p(\tau, t) >  \Theta_p(\tau, 0)$. They show that this is true for all $p > 2$ \cite[Section 6]{AS18}.

Unfortunately, a similar reduction breaks down if we start from $\lin_\varepsilon$ instead of $\problem{ExactSetCover}$. This is because while counting the number of short ({\em annoying}) vectors in the no case (say $A$), they \cite{AS18} exploit the fact that for any non-zero integer $\cl$, $\dist_p(\widehat{\lat},\cl \widehat{\vec{t}})$ is sufficiently large. Therefore they only have to count the number of vectors in $\lat(\dgr B)$ around $\cl \vec{t}^{\dagger}$ in a much {\em smaller} radius. 
When we define the corresponding $\cvp$ instance in \cref{peqn:cvp_instance}, for every even integer $2\cl$, the vector $2\cl \vec t \in \lat(B)$, and therefore we cannot use a similar trick.

% We are, however, able to upper bound the number of these annoying vectors by observing the fact that our $\cvp$ instance from \cref{peqn:cvp_instance} has the property that $\vec t \in \lat(B) / 2$ and $B, \vec t$ are over integers. 
% To establish a reduction, we need to set our gadget's target vector $\vec{t}^{\dagger}$ to also have that property. 
We can, however, construct a gadget that has a stronger property: for every {\em even} multiple of the target, $2\ell \vec t^{\dagger}$, it holds that the number of lattice points in $\lat(\dgr B)$ around $\vec t^{\dagger}$ are exponentially more than those around $2\ell \vec{t}^{\dagger}$. This would compensate for the increase in radius incurred in our situation. 
In \cref{ssec:all_half_target} we show that for all $p > 2$ the integer lattice $(\lat(\dgr B) = \Z^d)$ and the target $\vec{t}^\dagger = \frac{1}{2}\vec{1}_d$ satisfies this property. 
This is because with this choice, every even multiple would be an integer vector. Thus, for some $r'<r$, if the gadget satisfies the property that $\nump{\lat(\dgr B), r', \vec{t}^{\dagger}}$ is exponentially more than $\nump{\lat(\dgr B), r, \vec 0}$, then for every even integer $2\ell$, it holds that $\nump{\lat(\dgr B), r', \vec{t}^{\dagger}}$ is exponentially more than $\nump{\lat(\dgr B), r, 2\ell \vec{t}^{\dagger}}$. 
% To achieve this we show that the integer lattice contains exponentially more vectors around the all half vector, than the number of short vectors. 

One of the main technical ingredients, which could be of independent interest, is \cref{thm:thetaProp}, where we use Fourier analysis to show that for any $p > 2$, there always exists $\tau > 0$ such that
$$
\Theta_p(\tau, 1/2) > \Theta_p(\tau, 0).
$$
Since $\Theta_p(\tau, 1/2)$ and $\Theta_p(\tau, 0)$ are absolutely convergent, this can be proved by analyzing the difference series \[f_p(\tau) := \Theta_p(\tau, 1/2) - \Theta_p(\tau, 0)=-1+2\sum_{z=1}^{\infty}(e^{-\tau(z-1/2)^{p}}-e^{-\tau z^{p}}).\] In other words, it suffices to show that for all $p>2$, there always exists $\tau>0$ such that $f_{p}(\tau)>0$. When $p$ is significantly bounded away from $2$, e.g., for $p\geq 2.2$, we can simply take a constant $\tau=2$, which yields \[f_p(2)>-1+2(e^{-2(1/2)^{p}}-e^{-2})\geq -1+2(e^{-2(1/2)^{2.2}}-e^{-2})>0,\] since $e^{-2(z-1/2)^{p}}-e^{-2 z^{p}}>0$ for all $z\in\Z^+$. However, as $p$ gets closer to 2, if we still fix some constant $\tau$, we will need more than just the first term of the series, but the required number of terms will depend on $p$, making such an analysis no longer feasible.

Before illustrating our proof idea for $p$ close to $2$, let us first discuss the case for $p=2$. Interestingly, the above conclusion is known to be false for $p=2$ by \cite[Corollary 1.2]{Ban95}, i.e., $f_{2}(\tau)< 0$ for all $\tau>0$. We now introduce a novel approach to handle the $p=2$ case, which will also be instructive for the $p>2$ case. The fourth Jacobi theta function is defined by \[\vartheta_4(x, q) \coloneqq \sum_{z\in \Z}(-1)^z q^{z^2} e^{2 z i x}.\] Note that $f_2(\tau)=-\vartheta_4(0, e^{-\tau/4})$. It is well-known that the fourth Jacobi theta function $\vartheta_4(x,q)$ can be expressed in terms of the second Jacobi theta function \[\vartheta_2(x,q):=\sum_{z\in\Z} q^{(z+1 / 2)^2} e^{(2 z+1) i x}\] by Jacobi's imaginary transformation \cite{Jacobi1828}. Specifically, we can obtain
$$
\vartheta_4(0, e^{-\tau/4}) = \sqrt{\frac{4\pi}{\tau}}\vartheta_2(0,e^{-\frac{4\pi^2}{\tau}})=\sqrt{\frac{4\pi}{\tau}}\cdot 2 \sum_{z=0}^{\infty} e^{-(4\pi^2/\tau)\cdot(z+1 / 2)^2} > 0,
$$
	which implies $f_2(\tau)<0$ for all $\tau>0$ and concludes the $p=2$ case. Essentially, the trick relies on Jacobi's imaginary transformation, which turns an alternating series into a non-alternating series with all positive terms. So, this motivates us to find a transformation that can also rewrite $f_p(\tau)$ for $p>2$ in the spirit of the principles behind Jacobi's imaginary transformation. Note that Jacobi's imaginary transformation is obtained by the Poisson summation formula \cite{Poisson1827}, which states that for any function\footnote{$L^{1}(\mathbb{R})$ denotes the space of Lebesgue integrable functions on $\mathbb{R}$, i.e., $f\in L^{1}(\mathbb{R})$ if $\int_{-\infty}^{\infty} |f(x)| \, dx < \infty$.} $s(z)\in L^1(\R)$, if its Fourier transform $\hat{s}(x)\in L^1(\R)$, then 
	$$
	\sum_{z\in\mathbb{Z}}s(z)=\sum_{x\in\mathbb{Z}}\hat{s}(x).
	$$
	% In the case of $p=2$, we can treat the the LHS as
	% $$
	% \vartheta_4(0, e^{-\tau/4}) = \sum_{z\in\Z} (-1)^z e^{(-\tau/4)\cdot z^2} = \sum_{z\in\Z} e^{\pi i z} e^{(-\tau/4)\cdot z^2}.
	% $$
	% So, for any $\tau>0$, we define $s_\tau(z):= e^{(-\tau/4)\cdot z^2+\pi i z}\in L^1(\R)$ and use the completing of square to calculate its Fourier transform
	% $$
	% \hat{s}_{\tau}(x) = \int_{-\infty}^{\infty} e^{-2\pi i  xz} s_\tau(z)dz = \int_{-\infty}^{\infty} e^{(-\tau/4)\cdot z^2+\pi i z -2\pi i  xz} dz = \sqrt{\frac{4\pi}{\tau}} \cdot e^{-(4\pi^2/\tau)\cdot(x-1 / 2)^2}\in L^1(\R).
	% $$
	% By the Poisson summation formula, this recovers Jacobi's imaginary transformation as follows:
	% $$
	% \vartheta_4(0, e^{-\tau/4}) = \sum_{z\in\Z} s_\tau(z) = \sum_{x\in\mathbb{Z}}\hat{s}_{\tau}(x) = \sum_{x\in\mathbb{Z}} \sqrt{\frac{4\pi}{\tau}} \cdot e^{-(4\pi^2/\tau)\cdot(x-1 / 2)^2} =: \sqrt{\frac{4\pi}{\tau}}\vartheta_2(0,e^{-\frac{4\pi^2}{\tau}}).
	% $$
Following this new perspective, we outline the proof for the $p>2$ case as follows. We first slightly rewrite the difference series $f_p(\tau) := \Theta_p(\tau, 1/2) - \Theta_p(\tau, 0)$ to match the Poisson summation formula and let $p = 2 + \varepsilon$ for $\varepsilon > 0$:
\begin{align*}
    f_{2 + \varepsilon}(\tau) &= -1+2\sum_{z=1}^{\infty}(e^{-\tau(z-1/2)^{p}}-e^{-\tau z^{p}}) \\
    &= - \sum_{z \in \Z} \underbrace{e^{{(-\tau / 2^{2 + \varepsilon}) |z|^{2 + \varepsilon} + \pi i z}}}_{\eqqcolon s_{\tau, \varepsilon} (z)}.
\end{align*}
By \cref{lem:1,lem:2}, we show that both $s_{\tau, \varepsilon}(z),\hat{s}_{\tau, \varepsilon}(x)\in L^1(\R)$, to ensure that the Poisson summation formula $\sum_{z\in\Z}s_{\tau, \varepsilon}(z)=\sum_{x\in\Z}\hat{s}_{\tau, \varepsilon}(x)$ can be applied. Then it suffices to show that for any $\varepsilon > 0$, there exists some sufficiently small $\tau > 0$ such that for every $x\in\Z$, $\hat{s}_{\tau, \varepsilon}(x) < 0$. To find such $\tau$, we consider the asymptotic expansion for the Fourier transform of the super-Gaussian $g(z):=e^{-\abs{z}^{2+\varepsilon}}$, i.e., $\hat{g}_{\varepsilon}(x) := \int_{-\infty}^{\infty} e^{-\abs{z}^{2+\varepsilon} - 2 \pi ix z} d z$. This has been studied in \cite{EOR91} (restated as \cref{lem:eor}), which demonstrates that $\hat{g}_{\varepsilon}(x)<0$ for all sufficiently large $x$. We then relate $\hat s_{\tau, \varepsilon}(x)$ to $\hat{g}_{\varepsilon}(x')$ where $x'$ is obtained from $x$ by a translation and a scaling that depend on $\tau$ and $\varepsilon$. Thus, for every $\varepsilon>0$, we can choose an appropriate $\tau$ such that $x'$ is large enough for all $x\in\Z$, and thereby conclude that $\hat s_{\tau, \varepsilon}(x)<0$ for all $x\in\Z$. This implies that for every $\varepsilon>0$, the difference series $f_{2+\varepsilon}(\tau)=-\sum_{x\in\Z}\hat{s}_{\tau, \varepsilon}(x)>0$ for some $\tau>0$, which completes the proof for \cref{thm:thetaProp}. 
%and conclude that for any $\varepsilon>0$, there exists some $\tau>0$ such that $\hat s_{\tau, \varepsilon}(x)<0$ for all $x\in\Z$.
%We then prove that $\hat s_{\tau, \varepsilon}(x) \propto \hat g_{\varepsilon}((x- 1/2) / C^{1/(2+\varepsilon)}_{\tau, \varepsilon})$ for a sufficiently small $C_{\tau, \varepsilon}$ and all $x\in\Z$, which implies the result.

Putting them together, we get \cref{thm:SvpIntegerGadget}, which says that for any $p > 2$, we can find a radius $r$ such that for any $d$, the integer lattice contains exponentially many points within a distance $r$ around $1/2 \cdot \vec{1}_d$ than around any of its {\em even multiple}.
Using these gadgets in our reduction, \cref{thm:SvpEthHard} follows.  

We leave $p = 2$ as an open problem. Note that the $\ETH$ hardness of $\svp_{p, \gamma}$ for $p \in [1, 2)$ follows from the $\ETH$ hardness of $p = 2$ by the norm embedding techniques of \cite{RR06}.

\textbf{Open Problem 1:} Prove that for any $p \in [1,2]$, there exists a constant $\gamma>1 $ such that assuming $\ETH$ there are no sub-exponential time algorithms for $\svp_{p, \gamma}$.
\vspace{10pt}

\textbf{On to $\bdd$.}
% \dnote{I considered moving this to the beginning, but talking too much about BDD doesn't quite fit the story there, especially since Huck and Peikert already had ETH based hardness of BDD. I think this is fine. }
Bounded distance decoding is a lattice problem that has found applications in showing hardness results for important cryptographic primitives such as {\em Learning With Errors} $\problem{(LWE)}$. 
It can be thought of as $\cvp$ under a promise that the closest lattice vector to a target $\vec t$ is not too far away from the lattice $\lat(B)$, relative to the length of the shortest non-zero vector. Alternatively, it can be thought of as a {\em decoding} problem over lattices, analogous to decoding noisy code words over finite fields. Quantitatively, a $\bdd_{p, \alpha}$ instance promises that there is a closest lattice vector at a distance at most $\alpha \lambda_1^{(p)}(\lat(B))$ from the lattice. (See \cref{def:bdd} for the formal definition.)
Regev \cite{oded05}, in a seminal work, gave a reduction from worst case $\bdd$ to (an average case problem) $\problem{LWE}$, with polynomial (in the dimension of the lattice) $\alpha$. 
It is easy to see that there would be a unique solution to the problem if $\alpha < 1/2$.  
If $\alpha_1 > \alpha_2 > 0$, it is easy to see that $\bdd_{p, \alpha_2}$ reduces to $\bdd_{p, \alpha_1}$. Therefore, when showing hardness results for $\bdd$, a lower $\alpha$ corresponds to a stronger result. 
$\np$-hardness of $\bdd$ for a constant $\alpha$ was first shown by \cite{LLM06}, with $\alpha = \min \{2^{-1/2}, 2^{-1/p}\}$ for $p \geq 1$. This reduction however incurs a polynomial blowup in the rank of the lattice.
Recently \cite{BDD20} studied the quantitative hardness of $\bdd$ for the first time and show that for all $p > 1$, there is no $2^{\Omega(n)}$ time algorithm for $\bdd_{p, \alpha}$ for all $\alpha$ greater than a certain constant that approaches $1/2$ as $p \rightarrow \infty$, unless randomized \ETH is false.
In a followup work \cite{bdd21}, they show a similar result for improved values of $\alpha$, under the $\geth$ assumption. Quantitatively, they define a constant $\alpha_p^{\ddagger}$ \cref{eqn:alphapddagger} that depends on $p$, and show that for all $p > 1$, for all $\alpha > \alpha_p^{\ddagger}$ there is no sub-exponential time algorithm for $\bdd_{p, \alpha}$ unless $\geth$ is false. 
To get this result, they show a decision-to-search reduction from $\cvp'_{p, \gamma}$ to the decision version of $\bdd_{p, \alpha}$, where the goal is to decide if there is a vector at a distance at most $\alpha \cdot \lambda_1^{(p)}(\lat(B))$. Here, $\cvp'$ is a special case of $\cvp$ where in the $\yes$ case, there is a {\em binary} vector $\vec x \in \set{0, 1}^n$ such that $\norm{B\vec x - \vec t}_p \leq r$, as compared to an arbitrary integer vector $\vec x$.
For this they make use a $\cvp'_{p, \gamma}$ instance constructed from $\sat$ in \cite{BGS17}. Precisely, given a rank $n'$ instance of $\cvp'$ $(B', \vec t', r')$, for some scalar parameters $s, l > 0$, they define a (generating set of a) lattice and target pair 

\[ B := \begin{pmatrix}
    sB' & 0 \\
    I_{n'} & 0 \\
    0 & l \dgr B 
\end{pmatrix}; 
\quad \quad \vec t := \begin{pmatrix}
    s \vec t' \\
    \frac{1}{2} I_{n'} \\
    l \vec{t}^{\dagger}
\end{pmatrix}, \]
where $(\dgr B, \dgr{\vec t})$ is again a locally dense lattice gadget. Similar to what we saw in case of $\svp$ above, they are able to bound the number of lattice vectors close to the target $\vec t$ in the \yes case by, say $G$, which is exponentially larger than the number of lattice vectors close to $\vec t$ in the \no case. 
However, there is a technical difficulty here. We also need the fact that if the output $\bdd$ instance $(B, \vec t, r)$ is a \yes instance, it satisfies the promise that the nearest lattice vector is at a distance at most $\alpha \cdot \lambda_1^{(p)}(\lat(B))$. 
This corresponds to also bounding the number of very short vectors in the gadget in the \yes case by some $A$. As before, they can then sparsify the lattice (\cref{ssec:sparcification_prelims}), to get a sparse random sub-lattice such that if the input $\cvp'$ instance was a yes instance then there is a lattice vector at a distance at most $\alpha \cdot \lambda_1^{(p)}(\lat(B))$ from the target, with high probability.

In this work we are able to achieve the same lower bound of $\alpha_p^{\ddagger}$ from \cite{bdd21}, {\em but under $\ETH$}, by starting from a $\cvp_{p, \gamma}$ instance constructed from gap $\lin$. To achieve this, we show a reduction from arbitrary $\cvp_{p, \gamma}$ instances over lattices over integers, $\lat \subseteq \Z^n$. In fact, we are able to get a {\em simpler} reduction, in that we don't need to embed an integer lattice into our $\bdd$ instance anymore. Given a $\cvp_{p, \gamma}$ instance $(B', \vec t', r)$ we can define the following lattice 

\[ B := \begin{pmatrix}
    B' & 0 \\
    0 & s \dgr B 
\end{pmatrix}; 
\quad \quad \vec t := \begin{pmatrix}
     \vec t' \\
    s \vec{t}^{\dagger}
\end{pmatrix}, \]

where $(\dgr B, \dgr{\vec t})$ is again a locally dense lattice gadget. 
Using an analysis similar to that in case of $\svp_{p, \gamma}$ in \cref{sec:svp}, and a locally dense integer gadget from \cite{bdd21}, we get a desired reduction from $\cvp_{p, \gamma}$ to $\bdd_{p, \alpha}$, for all $p \geq 1$ and for all $\alpha > \alpha_p^{\ddagger}$. 
When we instantiate this reduction with the $\cvp_{p, \gamma}$ instance from \cref{sec:cvp}, we find that there is no sub-exponential time algorithm for $\bdd_{p, \alpha}$, for all $p \geq 1$ and for all $\alpha > \alpha_p^{\ddagger}$, unless the (randomized) $\ETH$ is false.

	\section{Preliminaries}
	\label{sec:prelims}
	Whenever we say that certain constants are {\em efficiently computable}, we mean that they can be efficiently {\em approximated} to high precision. 

For any set $\cL$ we define $\ballp{\cL, r, \vec t} := \mathcal{B}_p(r, \vec t) \cap \cL$, where $\mathcal{B}_p(r, \vec t)$ denotes a ball in $\R^n$ centered at $\vec{t}$ of radius $r$ in the $\ell_p$ norm, i.e., $\ballp{r, \vec t} := \set{\vec x \in \R^n : \norm{\vec x - \vec t}_p \leq r}$.
For a discrete set $\cL$, we define \[
\nump{\cL, r, \vec{t}} := \abs{\mathcal{B}_p(r, \vec t) \cap \cL}.
\]
For any matrix $B \in \R^{m \times n}$, we write $\lat(B)$ to denote the lattice generated by the columns of $B$. 
We write $\vec 1_n$ and $\vec 0_n$ to denote the vector all $1$s and all $0$s vectors in $n$ dimensions respectively.
We write $\lambda_1^{(p)}(\lat)$ for the length of the {\em shortest non-zero vector with respect to $\cl_p$-norm} in the lattice $\lat$. 
For vectors $\vec v_1 \in \R^n,~\vec v_2 \in \R^m$, we write $(\vec v_1, \vec v_2)$ for their concatenation: $(\vec v_1^\top, \vec v_2^\top)^\top$. Unless otherwise specified, all logarithms are base $e$. 
For a discrete set $\lat \subset \R^n$ and a vector $\vec t \in \R^n$ we define the distance between them to be the minimum distance between the vector $\vec t$ and any point in the set:
\[ \dist_p(\vec t, \lat) := \min_{\vec x \in \lat} \norm{\vec x - \vec t}_p. \]

\subsection{Asymptotic Expansion}

We now define the notion of Poincar\'e Asymptotic Expansion, which we will use in \cref{ssec:all_half_target}. An asymptotic expansion describes the asymptotic behavior of a function in terms of a sequence of (gauge) functions \cite[Definition 2.3]{Hunter2004Asymptotic}.

\begin{definition}[Poincar\'e Asymptotic Expansion]\label{def:asymptotic}
    A sequence of functions $\varphi_n : \mathbb{R} \setminus 0 \to \mathbb{R}$, where $n = 0, 1, 2, \dots$, is a \textit{asymptotic sequence} as $x \to \infty$ if for each $n = 0, 1, 2, \dots$ we have
\[
\varphi_{n+1} = o(\varphi_n) \quad \text{as } x \to \infty.
\]
Moreover, if $\{\varphi_n\}$ is an asymptotic sequence and $f : \mathbb{R} \setminus 0 \to \mathbb{R}$ is a function, we write
\begin{equation}
\label{eqn:asym_exp}
f(x) \sim \sum_{n=0}^{\infty} a_n \varphi_n(x) \quad \text{as } x \to \infty
\end{equation}
if for each $N = 0, 1, 2, \dots$ we have
\[
f(x) - \sum_{n=0}^{N} a_n \varphi_n(x) = o(\varphi_N) \quad \text{as } x \to \infty.
\]

We call \cref{eqn:asym_exp} the asymptotic expansion of $f$ with respect to $\{\varphi_n\}$ as $x \to \infty$.
\end{definition}

For instance, the functions $\varphi_n(x) = x^{-n}$ form an asymptotic sequence as $x \to \infty$.
We will use the asymptotic expansion of the Fourier transform of super-Gaussian functions as shown in \cite{EOR91}. 

\begin{lemma}[\cite{EOR91}, Lemma 10 (i)]
\label{lem:eor}
		For fixed $\varepsilon>0$, the asymptotic expansion of
		$$
		\hat{g}_{\varepsilon}(x) := \int_{-\infty}^{\infty} e^{-|z|^{2+\varepsilon}-2\pi i xz} dz
		$$
		as $|x|\rightarrow\infty$ is
		$$
		\hat{g}_{\varepsilon}(x) \sim 2 \sum_{m=1}^{\infty} \frac{(-1)^{m+1}}{m!} \sin \left(\frac{m \pi (2+\varepsilon)}{2}\right) \Gamma(m (2+\varepsilon)+1)(2 \pi|x|)^{-m (2+\varepsilon)-1}.
		$$
\end{lemma}

\subsection{Computational Problems}

	For an integer $k \geq 2$, a $k$-$\SAT$ formula over $n$ boolean variables is the conjunction of clauses, where each clause is the disjunction of $k$ literals. That is, $k$-$\SAT$ formulas have the form $\bigwedge_{i=1}^m \bigvee_{j=1}^k b_{i,j}$, where $b_{i,j} = x_k$ or $b_{i,j} = \neg x_k$ for some boolean variable $x_k$. 
	
	\begin{definition} 
		For any $k \geq 2$, the decision problem $k$-$\SAT$ is defined as follows. The input is a $k$-$\SAT$ formula. It is a \yes instance if there exists an assignment to the variables that makes the formula evaluate to true and a \no instance otherwise.
	\end{definition}

	\begin{definition}
		For any $k \geq 2$, the decision problem Max-$k$-\SAT is defined as follows. The input is a $k$-\SAT formula and an integer $S \geq 1$. It is a \yes instance if there exists an assignment to the variables such that at least $S$ of the clauses evaluate to true and a \no instance otherwise.
	\end{definition}
	
	Notice that $k$-\SAT is a special case of Max-$k$-$\SAT$.
    We write $\sat_C$ for a $\sat$ instance such that it contains at most $Cn$ clauses.

    \begin{definition}\label{def:gapmaxlin}
        An instance of a $\problem{gap}_{(c,s) } \LIN$ over $\F_2$ consists of  an $m \times n$ matrix $A \in \{0,1\}^{m \times n}$ and a vector $\vec{b} = \{0,1\}^m$ constraints. It is a \yes instance if there exists an $\vec{x} = (x_1, \ldots, x_n) \in \{0,1\}^n$ that satisfies at least $c \cdot m$ of the $m$ constraints
        \[
        \sum_{j=1}^n A_{i,j}\cdot x_{j} =b_i \pmod 2 \;, \] 
        for $i = 1$ to $m$.
        It is a \no instance if for all $\vec{x } \in \Z^n$,\footnote{Note that this is equivalent to making the same statement for all $\vec{x} \in \{0,1\}^n$ since the equations are considered modulo $2$.} at most $s \cdot m$ constraints are satisfied. 
    \end{definition}

	\begin{definition}[Shortest Vector Problem ($\SVP_{p, \gamma}$)]
		\label{def:svp}
		For any $p \in [1, \infty]$ and any $\gamma \geq 1$, \emph{the $\gamma$-approximate Shortest Vector Problem in the $\ell_p$ norm} $(\SVP_{p, \gamma})$ is a promise problem defined as follows. The input is a matrix $B \in \Q^{d\times n}$ generating a lattice $\lat \subset \R^d$ of rank $n$ and a length $r > 0$. It is a \yes instance if $\lambda_1^{(p)}(\lat) \leq r$ and a \no instance if $\lambda_1^{(p)}(\lat) > \gamma r$.
	\end{definition}
	
	\begin{definition}[Closest Vector Problem ($\cvp_{p, \gamma}$)]
		\label{def:cvp}
		For any $p \in [1, \infty]$ and any $\gamma  \geq 1$, \emph{the $\gamma$-approximate Closest Vector Problem in the $\ell_p$ norm} $(\cvp_{p, \gamma})$ is a promise problem defined as follows. The input is a matrix $B \in \Q^{d\times n}$ generating a lattice $\lat \subset \R^d$ of rank $n$, a target $\vec{t} \in\R^d$, and a distance $r > 0$. It is a \yes instance if $\dist_p(\vec{t}, \lat) \leq r$ and a \no instance if $\dist_p(\vec{t}, \lat) > \gamma r$.
	\end{definition}

	Note that in \cref{def:svp} and \cref{def:cvp}, the input is a matrix $B$ that {\em generates} the lattice $\lat$. This is equivalent to the more standard definition where the input is a basis, i.e. a set of linearly independent vectors that generates  the lattice. Given a generating set $B$, a basis can be efficiently (in the bit length of the representation of $B$) computed from the generating set using the LLL algorithm \cite{LLL82} (c.f. \cite[Algorithm 1]{rotateZ23}.)
	% Precisely, if the bit length of the representation of the generating set is $\poly(d)$, then one can find a basis in time $\poly(d)$.

    \begin{definition}[Bounded Distance Decoding ($\bdd_{p, \alpha}$)]
    \label{def:bdd}
For $p \in [1, \infty]$ and $\alpha = \alpha(n) > 0$, the search problem $\bdd_{p,\alpha}$ is defined as follows. The input is a (generating set for a) lattice $\cL \subset \R^d$ and a target $\vec t \in \mathbb{R}^d$ satisfying
\[
\dist_p(\vec t, \lat) \leq \alpha \cdot \lambda_1^{(p)}(\cL),
\]
and the goal is to find a closest lattice vector $\vec v \in \cL$ to $\vec t$ such that
\[
\|\vec t - \vec v\|_p = \dist_p(\vec t, \lat).
\]
\end{definition}
    
	% \begin{theorem}[\cite{FLM76}]
	% 	\label{thm:embedding}
	% 	For any $\eps \in (0,1)$, $1 \leq p \leq 2$, and any positive integers $n$ and $m$ with $m \geq n/\eps^2$, there exists a linear map $f : \R^n \to \R^{m}$ such that for any $\vec{x} \in\R^n$,
	% 	\[
	% 		(1-\eps)\|\vec{x}\|_2 \leq \|f(\vec{x})\|_p  \leq (1+\eps) \|\vec{x}\|_2
	% 		\; .
	% 	\]
	% \end{theorem}

	\subsection{Fine-grained Complexity}
	\label{sec:fine-grained_prelims}
	
    \begin{theorem}[\cite{BHIRW24}, Theorem 6.3]
     \label{thm:satToLinRed}
        For some $\varepsilon \in (0,1)$ there exists a polynomial time reduction from $\sat_C$ with $n$ variables to $\mathsf{gap}_{c, s}\lin$ over $\F_2$ with $\order{n}$ variables and $\order{n}$ equations, where $c = 5/8$ and $s = 5/8 - \varepsilon$.
    \end{theorem}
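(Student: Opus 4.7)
The plan is to obtain a linear-size probabilistically checkable proof (PCP) for $3\sat$ and then convert the verifier's tests into linear equations over $\F_2$. I would start from a PCP for $3\sat$ that reads a constant number of bits, has (near) perfect completeness, soundness bounded away from completeness by a constant, and — crucially — uses only $\log_2 n + O(1)$ random bits on an instance with $n$ variables. With such a PCP in hand, the total number of distinct local tests is $2^{O(r)} = O(n)$, which is what forces the final $\lin$ instance to have only $O(n)$ variables (the proof bits) and $O(n)$ equations (one block per test), matching the linear blowup demanded by the statement.

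Given the PCP, the next step is to convert each local test into a small bundle of $\F_2$-linear equations so that a proof satisfying the test translates to an assignment satisfying all equations in the bundle, and a rejecting configuration violates at least one. To hit the specific calibration $c = 5/8$ and $s = 5/8 - \varepsilon$, I would design the inner test so that a uniformly random $\F_2$ assignment satisfies exactly a $5/8$ fraction of the equations in expectation; this is the completeness a \textsc{Yes} instance should reach when the verifier always accepts on the honest proof, and can be arranged by mixing XOR-style tests (a random assignment passes with probability $1/2$) with a constant fraction of tautological equations (always satisfied) in a $1{:}1$ ratio, or by a H{\aa}stad-type $3$-query test with a suitably chosen acceptance predicate of density $5/8$. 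Completeness then follows directly from the honest PCP strategy; soundness follows from the standard contrapositive: any assignment satisfying strictly more than $(5/8 - \varepsilon)m$ equations would, by averaging over tests, give a PCP proof on which the verifier accepts with probability bounded away from $1/2$, which contradicts the PCP soundness on a \textsc{No} instance after choosing $\varepsilon$ small enough relative to the PCP's soundness gap.

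The principal obstacle is the \emph{linear blowup}. Classical PCP constructions (Arora-Safra and Arora-Lund-Motwani-Sudan-Szegedy, or Dinur's gap amplification) give proof length $n \cdot \operatorname{polylog}(n)$ or $n^{1+o(1)}$, which is not good enough — it would yield $n^{1+o(1)}$ variables and equations and therefore only rule out $2^{o(n / \operatorname{polylog} n)}$-time algorithms under \ETH, not $2^{o(n)}$. Squeezing the randomness down to $\log n + O(1)$ while keeping a constant soundness gap is precisely the hard contribution of [BHIRW24]; I would expect to spend essentially all the effort there, most likely via a derandomized parallel-repetition / agreement-testing module or a direct combinatorial PCP of proximity with linear length, and then treat the XOR-layer described above as a routine post-processing step whose only real design choice is biasing the predicate to density $5/8$.
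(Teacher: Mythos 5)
This statement is a citation (``\cite{BHIRW24}, Theorem 6.3''), and the paper offers no proof of it; it is taken as a black box. So the comparison here is really between your plan and the actual strategy of Bitansky et al., and there the gap is fundamental.

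Your proposal hinges on the existence of a PCP for $3\sat$ with $\log_2 n + O(1)$ randomness and constant query complexity---that is, a \emph{linear-size} PCP with a constant soundness gap. You correctly identify this as the entire crux. But such a PCP is not known: its existence is a long-standing open problem, and the present paper says so explicitly in the introduction (``a connection that could be established, for instance, via a PCP for $3\text{SAT}$ with linear proof blowup, though the existence of such PCPs remains a long-standing open question''). So the object your plan assumes and builds on is precisely the thing nobody knows how to build. If your plan worked, it would simultaneously resolve linear-size PCPs and make $\ETH \Rightarrow \geth$ immediate, which is a much stronger conclusion than what [BHIRW24] or this paper claim.

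The numerology in the theorem statement is also a clear signal that the intended reduction is \emph{not} a PCP composed with an XOR layer. Any PCP-based route, including the H\aa stad-style post-processing you describe, yields near-\emph{perfect} completeness: on a \textsc{Yes} instance the honest proof makes the verifier accept (essentially) always, so the honest $\F_2$ assignment satisfies (essentially) all of the derived linear equations, giving $c \approx 1$ and a soundness threshold $s$ bounded below $1$, typically with a gap on the order of $1/2$. The theorem instead reports $c = 5/8$ and $s = 5/8 - \varepsilon$, i.e.\ an \emph{imperfect} completeness and a gap that is some small unspecified constant $\varepsilon$, not a large one. Your own construction cannot produce this shape: if the verifier always accepts, the honest proof does not satisfy only $5/8$ of the equations. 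This mismatch is not cosmetic; it reflects that [BHIRW24] obtain their linear-size gap amplification by a different, non-PCP mechanism (their ``dot-product proof'' machinery), deliberately trading away the large gap and perfect completeness that a PCP would supply in exchange for achievability with $O(n)$ size. Your plan, as stated, is circular at the load-bearing step and would not reproduce the claimed constants even if it compiled.
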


    Throughout this text, we write $\lin_\varepsilon$ for $\mathsf{gap}_{c, s}\lin$ over $\F_2$ with $c = 5/8$ and $s = 5/8 - \varepsilon$.
	Impagliazzo and Paturi introduced the following celebrated and well-studied hypothesis concerning the fine-grained complexity of $k$-$\SAT$~\cite{IP01}. We will also need the randomized variant, which talks about the existence of randomized algorithms instead of deterministic ones. 
%%%%%%
	\begin{definition}[Exponential Time Hypothesis ($\ETH$)]
	    The (randomized) Exponential Time Hypothesis ((randomized) ETH) asserts that, there exists $\delta >0$ such that any (randomized) algorithm which solves $3$-\SAT must take $2^{\delta n}$ time.
	\end{definition}
%%%%%%%%%%
\begin{definition}[$\geth$] The (randomized) \geth asserts that there exists $\delta>0$ and $0<\eta<1$ such that given a $3$-$\SAT$ instance with $n$ variables and $m$ clauses, any (randomized) algorithm which can distinguish between the cases if all $m$ clauses are satisfiable and one in which no assignment satisfies more than $\eta$-fraction of the clauses, must take $2^{\delta n}$ time.
\end{definition}
%%%%%%%%%%%%%
    \begin{lemma}[Sparsification Lemma \cite{IP01}]\label{lemma:sp}
    Let $\varepsilon > 0$, $k \geq 3$ be constants. There is a $2^{\varepsilon n} \cdot \text{poly}(n)$ time algorithm that takes a $k$-CNF $F$ on $n$ variables and produces $F_1, \dots, F_{2^{\varepsilon n}}$, $2^{\varepsilon n}$ $k$-CNFs such that $F$ is satisfied if and only if $\bigvee_i F_i$ is satisfied and each $F_i$ has $n$ variables and $n \cdot \left(\frac{k}{\varepsilon}\right)^{O(k)}$ clauses. In fact, each variable is in at most $\text{poly}\left(\frac{1}{\varepsilon}\right)$ clauses, and the $F_i$ are over the same variables as $F$.
    \end{lemma}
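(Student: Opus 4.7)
The plan is to follow the classical recursive branching argument of Impagliazzo, Paturi, and Zane. Fix a sequence of thresholds $H_1, H_2, \ldots, H_k$ to be tuned as a function of $k$ and $\varepsilon$. Call a set $S$ of $j$ literals a $j$-\emph{heart} of $F$ if $S$ is contained (as a subset of literals) in at least $H_j$ distinct clauses of $F$. The algorithm is recursive: given $F$, search for a heart of minimum size $j$; if none exists, declare $F$ sparse and output it as a leaf. Otherwise, with a heart $S$ of size $j$, branch into $j+1$ mutually exhaustive subcases. In each of the first $j$ branches, pick a literal $\ell \in S$, force $\ell$ to be true, simplify (removing every clause containing $\ell$ and deleting $\bar\ell$ from the rest), and recurse. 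In the final branch, force every literal of $S$ to be false; this deletes the $j$ literals of $S$ from each of the $\geq H_j$ clauses containing $S$, strictly shortening them, and one recurses on the resulting formula.

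Correctness is immediate: every assignment to the variables of $S$ triggers exactly one of the $j+1$ branches, so $F$ is satisfiable iff the disjunction of the leaf formulas $F_1,\ldots,F_N$ is. Sparsity at the leaves follows directly from the termination criterion: once no $j$-heart of size $\geq H_j$ exists for any $j$, in particular no single literal lies in more than $H_1 - 1$ clauses, giving a per-variable bound of $O(H_1)$, and setting $H_1 = (k/\varepsilon)^{O(k)} = \poly(1/\varepsilon)$ for fixed $k$ yields both the per-variable $\poly(1/\varepsilon)$ claim and the total clause count $n \cdot (k/\varepsilon)^{O(k)}$. Each $F_i$ remains a $k$-CNF over the original $n$ variables, since branching only deletes literals or entire clauses (any variable fixed along a branch can be left unmentioned, or reintroduced via a trivial clause).

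The main technical obstacle is bounding the number of leaves $N$ by $2^{\varepsilon n}$. The standard device is a potential $\Phi(F) := \sum_{C \in F} w_{|C|}$ with weights $w_1 < w_2 < \cdots < w_k$ chosen to grow geometrically in a parameter depending on $k/\varepsilon$. I would then show that at every internal branching node the \emph{sum} of child potentials drops below $\Phi(F)$ by a fixed additive amount $\Omega(\varepsilon)$: the first $j$ children each eliminate at least $H_j$ clauses of length $\geq j+1$ containing $S$, while the final child shortens each such clause, replacing its weight $w_\ell$ by $w_{\ell - j}$. Tuning $H_j$ and $w_j$ so that $\Phi(F) \leq O(\varepsilon n)$ initially and drops by at least $\Omega(\varepsilon)$ per branching step caps the recursion tree at $2^{\varepsilon n}$ leaves. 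The per-node work is $\poly(n)$, giving overall running time $2^{\varepsilon n}\poly(n)$. The delicate part is reconciling the constants in the $H_j$'s and $w_j$'s so that both the potential drop and the leaf sparsity bound hold simultaneously; the rest of the argument is routine bookkeeping.
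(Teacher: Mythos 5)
The paper does not prove this lemma; it invokes it as a black-box result from the literature (attributed here to [IP01] — the Sparsification Lemma is originally due to Impagliazzo, Paturi, and Zane, and the quantitative form with the $n\cdot(k/\varepsilon)^{O(k)}$ clause bound is the Calabro--Impagliazzo--Paturi refinement). So there is no in-paper proof to compare against; I can only assess your sketch on its own terms.

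Your sketch has the right architecture for the classical argument: find a small core $S$ of literals contained in many clauses, branch on the possible behaviors of $S$, and cap the recursion depth with a weighted clause potential. Two remarks. First, a small imprecision: the $j+1$ subcases as you state them (``set $\ell$ true'' for each $\ell\in S$, plus ``all of $S$ false'') are exhaustive but not mutually exclusive, since an assignment can satisfy several literals of $S$ at once. The standard fix is to make branch $i$ force $\ell_1,\dots,\ell_{i-1}$ false and $\ell_i$ true, with the last branch forcing all of $S$ false; correctness (the ``iff'') only needs exhaustiveness, but disjoint branches make the leaf count honest. Second, and more substantively, the part you explicitly defer — tuning the thresholds $H_j$ and weights $w_j$ so that (i) the initial potential is $O(\varepsilon n)$, (ii) the sum of child potentials drops by $\Omega(\varepsilon)$ at every branching node, and (iii) a heart-free formula has at most $n\cdot(k/\varepsilon)^{O(k)}$ clauses with each variable in $\mathrm{poly}(1/\varepsilon)$ of them — is exactly where the proof lives. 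Calling it ``routine bookkeeping'' undersells it; it is the content of the lemma, and a complete writeup would have to carry it out. As an outline of the known proof your sketch is faithful, but it is not yet a proof.
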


The sparsification \cref{lemma:sp} and Tovey's reduction \cite{TOVEY} together tell us that if $\ETH$ is true, then $3SAT_4$ over $n$ variables can't be solved in  $2^{o(n)}$ time.
Together with \Cref{thm:satToLinRed}, we find that if $\ETH$ holds, then for some $\varepsilon>0$, any algorithm which solves $\lin_\varepsilon$ with $n$ variables, $m=O(n)$ clauses, must take $2^{\delta n}$ time for some $\delta > 0$. We state it as the following corollary.

\begin{corollary}[$\lin_{\varepsilon}$ is $\ETH$ Hard]
	\label{cor:linEthHard}
	There exists constants $\varepsilon > 0, C > 0$ such that unless $\mathsf{ETH}$ is false, there is no $2^{o(n)}$-time algorithm for $\lin_\varepsilon$ with $n$ variables and $m= Cn$ equations.
\end{corollary}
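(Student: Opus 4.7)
The plan is to chain three ingredients already provided in the preliminaries—the Sparsification Lemma, Tovey's reduction, and \cref{thm:satToLinRed}—into a subexponential-time reduction from $3$-$\SAT$ to $\lin_\varepsilon$, and then take the contrapositive against \ETH. Starting from a $3$-$\SAT$ formula $\varphi$ on $n$ variables, I would first apply the Sparsification Lemma (\cref{lemma:sp}) with a small parameter $\mu > 0$ (to be fixed at the end). This produces, in time $2^{\mu n} \cdot \text{poly}(n)$, a family $\varphi_1, \ldots, \varphi_{2^{\mu n}}$ of $3$-CNF formulas on the same $n$ variables, each with at most $n \cdot (3/\mu)^{O(1)}$ clauses, such that $\varphi$ is satisfiable iff some $\varphi_i$ is. Composing with Tovey's reduction converts each sparse $\varphi_i$ into a bounded-occurrence $\sat_{C_0}$ instance on $n' = O(n)$ variables for an absolute constant $C_0 = C_0(\mu)$, while preserving satisfiability.

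Next I would apply \cref{thm:satToLinRed} to each of these $\sat_{C_0}$ instances. For the fixed $\varepsilon > 0$ supplied by that theorem, this yields, in polynomial time, a $\lin_\varepsilon$ instance $L_i$ on $N = O(n)$ variables and $O(n)$ equations, where all hidden constants depend only on $C_0$ (hence only on $\mu$) and not on $n$. By chaining the correctness statements of all three reductions, $\varphi$ is satisfiable iff at least one of $L_1, \ldots, L_{2^{\mu n}}$ is a \yes instance of $\lin_\varepsilon$.

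For the contrapositive, suppose that for every constant $C > 0$ there is a $2^{o(N)}$-time algorithm solving $\lin_\varepsilon$ on $N$ variables with $CN$ equations. Running such an algorithm on each of the $2^{\mu n}$ instances $L_i$ takes $2^{o(N)} = 2^{o(n)}$ time (since $N = O(n)$), so the total time to decide $\varphi$ is
\[ 2^{\mu n} \cdot 2^{o(n)} \cdot \text{poly}(n) = 2^{(\mu + o(1))n}. \]
Since $\mu > 0$ is arbitrary, this yields a $2^{o(n)}$-time algorithm for $3$-$\SAT$, contradicting \ETH. The only subtlety—and the reason sparsification appears in the chain at all—is that the blowup constants of \cref{thm:satToLinRed} a priori depend on the clause-to-variable ratio of the input $\sat$ instance; pre-sparsifying bounds this ratio by an absolute constant, so the resulting $\lin_\varepsilon$ instance has size linear in $n$ with absolute constants $\varepsilon$ and $C$, which is exactly what the corollary asserts. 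No step of the argument uses randomness, so the same proof also gives the analogous statement under randomized \ETH.
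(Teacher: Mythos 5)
Your proposal is correct and follows essentially the same route the paper sketches in the paragraph preceding the corollary: sparsify $3$-$\SAT$ with an arbitrarily small parameter $\mu$, pass through Tovey's reduction to reach bounded-occurrence $\sat_C$, apply \cref{thm:satToLinRed} to each branch, and take the contrapositive against \ETH. Your writeup simply spells out the details the paper compresses into a couple of lines, including the correct observation that sparsification is precisely what makes the blowup to $\lin_\varepsilon$ linear in $n$ with absolute constants.
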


\subsection{Counting Lattice Points}

We now define the $\Theta$ and the $\mu$ functions, and show that they can be used to approximate the number of lattice points within a given radius.

For any $p \in [1, \infty), \tau>0$, and  $t \in \R$ define the theta function to be
\[\Theta_p(\tau, t):= \sum_{z \in \Z} \exp(-\tau |z-t|^p).\]

Notice that without loss of generality, we can assume $t \in [0,1/2]$. For a vector $\vec{t} \in \R^n$ we can analogously define 
$$\Theta_p(\tau, \vec{t}):= \prod_{i \in [n]} \Theta_p(\tau,t_i)$$
Clearly the theta function\footnote{Notice that this is closely related to the discrete Gaussian function $\rho_s(\Z - \vec t) := \sum_{\vec z \in \Z^n} \exp\paren{-\pi \norm{\vec z - \vec t}^2 / s^2}$.}
\[ \Theta_p(\tau, \vec{t}) =  \sum_{\vec{v} \in \Z^n} \exp(-\tau \norm{\vec{v}-\vec{t}}_p^p)\]
The Theta function acts as a smooth proxy for the point counting function, where the parameter $\tau$ plays the role of inverse radius.
   
For any $p \in [1, \infty), \tau>0$ and $t \in [0,1/2]$, define 
    $$ \mu_p(\tau , t) := \mathbb{E}_{X \sim D_p(\tau,t)}[|X|^p]= \frac{1}{\Theta_p(\tau,t)} \cdot \sum_{z \in \Z}|z-t|^p \cdot \exp(-\tau|z-t|^p)$$
    where, $D_p(\tau,t)$ is the probability distribution over $\Z -t$ that assigns probability $\exp(-\tau |x|^p)/\Theta_p(\tau,t)$ to $x \in \Z-t$. 
    For $\vec{t} \in \R^n$ this  extends as following
    $$\mu_p(\tau,\vec{t}):= \sum_{i=1}^n \underset{X \sim D_p(\tau,t_i)}{\expect} [|X|^p] $$

    Notice that if $\vec t = t \cdot \vec 1_n$ for some $t \in [0, 1/2]$, we have that $\mu(\vec t) = n \mu( t)$.
   
\begin{lemma}\label{lemma:theta0}
    For any $p \in [1, \infty)$, $r>0$ ,$\tau>0$ and $\vec{t}\in\R^n$, we have that $$N_p(\Z^n,r,\vec{t})\leq \exp(\tau r^p)\Theta_p(\tau,\vec{t})  $$
\end{lemma}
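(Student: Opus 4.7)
The plan is to prove this via a standard ``Markov-style'' bound exploiting the non-negativity of the summands defining $\Theta_p(\tau, \vec{t})$; in fact the key inequality is essentially the one displayed inline in Section~\ref{ssec:tech_overview} (equation~\eqref{eqn:zdt}), so the argument is short and should be presented as a self-contained lemma.

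First I would unfold the product form of the multivariate theta function into its sum representation over $\Z^n$, i.e.\ use the identity
\[
\Theta_p(\tau, \vec{t}) \;=\; \prod_{i=1}^n \Theta_p(\tau, t_i) \;=\; \sum_{\vec{v} \in \Z^n} \exp\!\bigl(-\tau \|\vec{v} - \vec{t}\|_p^p\bigr),
\]
which is the displayed identity already stated in the preliminaries section just above the lemma; this is the only non-trivial manipulation and follows from distributing the product of sums and using $\|\vec{v}-\vec{t}\|_p^p = \sum_i |v_i - t_i|^p$.

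Next I would drop all terms in the sum except those corresponding to lattice points inside the $\ell_p$-ball of radius $r$ around $\vec{t}$. Since every term $\exp(-\tau\|\vec{v}-\vec{t}\|_p^p)$ is non-negative, this only decreases the value:
\[
\Theta_p(\tau, \vec{t}) \;\ge\; \sum_{\substack{\vec{v} \in \Z^n \\ \|\vec{v}-\vec{t}\|_p \le r}} \exp\!\bigl(-\tau\|\vec{v}-\vec{t}\|_p^p\bigr).
\]
For every $\vec{v}$ in this restricted sum we have $\|\vec{v}-\vec{t}\|_p^p \le r^p$, so $\exp(-\tau\|\vec{v}-\vec{t}\|_p^p) \ge \exp(-\tau r^p)$. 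Using this uniform lower bound and counting the number of surviving terms (which is exactly $N_p(\Z^n, r, \vec{t})$ by definition) gives
\[
\Theta_p(\tau, \vec{t}) \;\ge\; \exp(-\tau r^p)\cdot N_p(\Z^n, r, \vec{t}).
\]
Rearranging yields the claimed inequality.

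There is no real obstacle here: the whole proof is three lines and only relies on (i) the product-to-sum identity for $\Theta_p$, which was already recorded, and (ii) monotonicity of $\exp(-\tau x)$ in $x$. The only thing to be careful about is making sure the display of $\Theta_p(\tau, \vec{t})$ as a sum over $\Z^n$ is invoked explicitly, since the lemma uses the product definition but the proof needs the sum form.
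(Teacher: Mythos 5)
Your proof is correct and follows exactly the same three-step argument as the paper: rewrite $\Theta_p(\tau,\vec{t})$ as a sum over $\Z^n$, restrict the sum to the $\ell_p$-ball of radius $r$ around $\vec{t}$ using non-negativity of the summands, and lower-bound each surviving term by $\exp(-\tau r^p)$. There is no meaningful difference between your argument and the one given in the paper.
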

\begin{proof}
    \begin{align*}
       \Theta_p(\tau, \vec{t})& =  \sum_{\vec{v} \in \Z^n} \exp(-\tau \norm{\vec{v}-\vec{t}}_p^p) \\
        &\geq \sum_{\vec{v} \in \ballp{\Z^n, r, \vec{t}}}  \exp(-\tau \norm{\vec{v}-\vec{t}}_p^p)\\
       &\geq N_p(\Z^n, r , \vec t) \cdot \exp(-\tau r^p)
    \end{align*}
\end{proof}

The upper bound in the previous lemma is quite tight. In fact, we know the following theorem. 

    \begin{theorem}[\cite{AS18}, Theorem 6.1]
    \label{thm:theta_gives_good_approx}
		For any constants $p \geq 1$ and $\tau > 0$, there is another constant $C^* > 0$ such that for any $\vec{t} \in \R^n$ and any positive integer $n$.
		\[
		\exp\left(\tau \mu_p(\tau, \vec{t}) -C^*\sqrt{n} \right) \cdot  \Theta_p(\tau, \vec{t}) \;\leq \; N_p \left(\Z^n, \mu_p(\tau, \vec{t})^{1/p}, \vec{t}\right) \; \leq \; \exp\left(\tau \mu_p(\tau, \vec{t}) \right) \cdot  \Theta_p(\tau, \vec{t})
		\]
    \end{theorem}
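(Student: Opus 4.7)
The upper bound is immediate from \cref{lemma:theta0} applied with $r = \mu_p(\tau, \vec{t})^{1/p}$, so the plan focuses on the lower bound.

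First I would reinterpret the theta sum as a probability measure. Consider the product distribution $D_p(\tau, \vec{t})$ on $\Z^n$ with $\Pr[\vec{v}] = \exp(-\tau \|\vec{v}-\vec{t}\|_p^p)/\Theta_p(\tau, \vec{t})$. Because $\Theta_p(\tau, \vec{t}) = \prod_i \Theta_p(\tau, t_i)$, the coordinates $X_i := v_i - t_i$ are independent, with $X_i \sim D_p(\tau, t_i)$. Thus $\|\vec{v}-\vec{t}\|_p^p = \sum_{i=1}^n |X_i|^p$ is a sum of $n$ independent random variables. Each summand $|X_i|^p$ has expectation $\mu_p(\tau, t_i)$ (so the total mean is $\mu_p(\tau, \vec{t})$) and, since each $|X_i|$ has exponential-type tails for fixed $\tau, p$, a variance bounded by a constant depending only on $\tau, p$.

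The heart of the argument is to show that a positive constant fraction of the mass of $D_p(\tau, \vec{t})$ lies just below the mean. Specifically, I would establish that there exists a constant $c > 0$ (depending only on $\tau, p$) such that for all sufficiently large $n$,
\[
\Pr_{\vec{v} \sim D_p(\tau, \vec{t})}\bigl[\,\|\vec{v}-\vec{t}\|_p^p \in [\mu_p(\tau, \vec{t}) - 2c\sqrt{n},\; \mu_p(\tau, \vec{t}) - c\sqrt{n}]\,\bigr] \;\geq\; \Omega(1).
\]
This follows from the Berry--Esseen inequality applied to $\sum_i |X_i|^p$: the third absolute moments of the summands are uniformly bounded (as $t_i$ ranges over $[0, 1/2]$ and $\tau, p$ are fixed), so the standardized CDF is within $O(1/\sqrt{n})$ of $\Phi$. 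Since the (standardized) target interval is a fixed-length window at $O(1)$ distance below the origin, its Gaussian mass is a positive constant that dominates the Berry--Esseen error once $n$ is large.

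To conclude, let $S$ be the set of $\vec{v} \in \Z^n$ witnessing the event above. Every $\vec{v} \in S$ satisfies $\|\vec{v}-\vec{t}\|_p \leq \mu_p(\tau, \vec{t})^{1/p}$, so $|S| \leq N_p(\Z^n, \mu_p(\tau, \vec{t})^{1/p}, \vec{t})$. On the other hand, each $\vec{v} \in S$ contributes at most $\exp(-\tau(\mu_p(\tau, \vec{t}) - 2c\sqrt{n}))$ to the theta sum, so
\[
\Omega(1) \cdot \Theta_p(\tau, \vec{t}) \;\leq\; \sum_{\vec{v} \in S} \exp(-\tau \|\vec{v}-\vec{t}\|_p^p) \;\leq\; |S| \cdot \exp\!\bigl(-\tau\mu_p(\tau, \vec{t}) + 2c\tau\sqrt{n}\bigr).
\]
Rearranging gives $N_p \geq |S| \geq \Omega(1)\cdot \exp(\tau\mu_p(\tau, \vec{t}) - 2c\tau\sqrt{n})\cdot \Theta_p(\tau, \vec{t})$. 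Choosing $C^* = 2c\tau + O(1)$ absorbs both the $\Omega(1)$ prefactor and the finitely many small-$n$ cases, yielding the stated bound.

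The main obstacle is the probabilistic estimate in the second step: one-sided concentration of this form---picking out a constant-mass slice \emph{strictly below} the mean in an interval of width $\Theta(\sqrt{n})$---is not accessible from Chebyshev alone and genuinely requires a central-limit-type input. Berry--Esseen is the natural tool, but it obliges one to verify that the summands $|X_i|^p$ have uniformly bounded third moments as $t_i$ ranges over $[0, 1/2]$, which is a routine but non-trivial tail calculation leveraging the exponential decay of $D_p(\tau, t_i)$.
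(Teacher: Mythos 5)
The paper does not prove this statement; it cites \cite{AS18}, Theorem~6.1 verbatim and uses it as a black box. So there is no in-paper proof to compare against, and I can only assess your argument on its own merits and against the source. On that basis your proof is correct and takes the same CLT-based route as \cite{AS18}: reinterpret the theta function as the normalizing constant of a product distribution on $\Z^n$, observe that $\norm{\vec v - \vec t}_p^p$ becomes a sum of $n$ independent bounded-variance random variables with total mean $\mu_p(\tau, \vec t)$, use a Berry--Esseen-type central limit bound to show that a constant fraction of the mass sits in a width-$\Theta(\sqrt n)$ window just below the mean, and then translate that slice of probability mass into a count of lattice points by bounding each summand $\exp(-\tau\normpp{\vec v-\vec t})$ from above on that slice. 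The $\sqrt n$ loss in the exponent is exactly the signature of this argument, and your bookkeeping step (dividing the slice's contribution to $\Theta_p$ by its maximal per-point weight) is the right way to close the loop.

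Two details you gloss over but should state explicitly. First, Berry--Esseen needs the per-coordinate variances $\operatorname{Var}(|X_i|^p)$ to be bounded \emph{below} uniformly over $t_i \in [0,1/2]$, not merely the third absolute moments bounded above; otherwise the error term $\sum_i \E|Y_i|^3 / \bigl(\sum_i \operatorname{Var}(Y_i)\bigr)^{3/2}$ need not be $O(1/\sqrt n)$. This does hold here (the variance is a continuous, strictly positive function of $t_i$ on the compact interval $[0,1/2]$), but it is the companion fact to the third-moment bound and deserves a sentence. Second, the absorption of ``finitely many small-$n$ cases'' into $C^*$ implicitly uses that $N_p(\Z^n, \mu_p(\tau,\vec t)^{1/p}, \vec t) \geq 1$ (the ball always contains the nearest lattice point, since $\mu_p(\tau, t_i) \geq \min_{z\in\Z}|z - t_i|^p$), together with the fact that $\tau\mu_p(\tau,\vec t) + \log\Theta_p(\tau,\vec t) = O(n)$ with a constant depending only on $p, \tau$. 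Both are routine, but without them the claim that a single constant $C^*$ works for \emph{all} $n \geq 1$ is not fully justified. With these two observations supplied, the proof is complete and correct.
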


Next we define the $\beta$ function, which we will use to define the threshold $\alpha_p^{\ddagger}$ above which we are able to show hardness results for $\bdd_{p, \alpha}$.

\begin{definition}[\cite{bdd21}] 
For $p \in [1, \infty)$, $t \in [0,1/2]$, and $a \geq 0$, we define $\beta_{p,t}(a)$ as follows.
\begin{enumerate}
    \item For $a < t$, define $\beta_{p,t}(a) := 0$.
    \item For $a = t$, define $\beta_{p,1/2}(1/2) := 2$ and for $t \neq 1/2$ define $\beta_{p,t}(t) := 1$.
    \item For $a > t$, define
    \[
    \beta_{p,t}(a) := \exp(\tau^* a^p) \cdot \Theta_p(\tau^*,t),
    \]
    where $\tau^* > 0$ is the unique solution to $\mu_p(\tau^*,t) = a^p$.
\end{enumerate}
\end{definition}

\begin{definition}[\cite{bdd21}]
    For $p \in [1, \infty)$, define
    \begin{equation}
        \label{eqn:alphapddagger}
        \alpha_p^{\ddagger} := \inf_{\substack{t \in [0,1/2] \\ a \geq t}} \frac{a}{\beta_{p,0}^{-1}(\beta_{p,t}(a))} \;.
    \end{equation}
\end{definition}

We note that the functions $\Theta,~\beta$ and $\mu$ can be efficiently approximated to within high precision. Throughout this text, we deal with constants $A, G$, which are the number of vectors in the lattice of a particular length, and will functions of $\Theta$. Thus they will be computable efficiently to a high precision. 

% \begin{definition}[p-adic valuation]
% \label{def:padicval}
% The \textit{$p$-adic valuation} of an integer $n$ is defined to be $$\nu_p(n) =
%     \begin{cases}
%     \max \set{k \in \mathbb{N}_0 \mid p^k \mid n} & \text{if } n \neq 0, \\ 
%     \infty & \text{if } n = 0,
%     \end{cases}$$
% where $\mathbb{N}_0$ denotes the set of natural numbers (including zero) and $m \mid n$ denotes \textit{divisibility} of $n$ by $m$. In particular, $\nu_p$ is a function $\nu_p: \mathbb{Z} \to \mathbb{N}_0 \cup {\infty}$.  
% \end{definition}

\subsection{Lattice Sparsification}
\label{ssec:sparcification_prelims}

Khot introduced the idea of lattice sparsification \cite{Khot05svp}, which is a randomized process that given a lattice $\lat$ lets us sample a sub-lattice $\lat' \subseteq \lat$ that has a lot fewer points in any fixed radius, large enough $\ell_p$-ball. It works by taking a random hyperplane over a finite field $\F_q$, for some prime power $q$, and restricting the coefficients of the lattice vectors to belong to the particular hyperplane. Formally, we prove and use the following statement. 

% \begin{lemma}[\cite{DGStoSVP}, Proposition 4.2]
% 	\label{lem:svpSparcification}
% 	For any $p \in [1, \infty)$, there is an efficient algorithm that takes as input a (basis for a) lattice $\lat \subset \R^d$ of rank $n$, and a prime number $q \geq 101$, and outputs a (basis for a) sub-lattice $\lat' \subseteq \lat$ such that for any radius $r < q \cdot \lambda_1^{(p)}(\lat)$, 
% 	\begin{equation*}
% 		\frac{N}{q} - \frac{N^2}{q} \leq \Pr[\lambda_1^{(p)}(\lat') \leq r] \leq \frac{N}{q},
% 	\end{equation*}
% 	as long as $N \leq q/(20 \log q)$, where $ N$ is the number of primitive lattice vectors in $\lat$ of length at most $r$ in the $\ell_p$ norm, up to sign. Furthermore, if $r \geq q \lambda_1^{(p)}(\lat)$, then $\lambda_1^{(p)}(\lat') \leq r$ always.
% \end{lemma}

% \begin{lemma}[\cite{bdd21}, claim 2.1]
% 	\label{lem:minLengthNump}
% 	For any lattice $\mathcal{L}$, target $t \in \operatorname{span}(\mathcal{L})$, and $r \geq 0$, we have that \( \numpo{\mathcal{L}, r, \vec 0} \geq \frac{2r}{\lambda_1^{(p)}(\mathcal{L})} - 1 \).
% \end{lemma}

\newcommand{\cals}{\mathcal{S}}
\begin{lemma}
	\label{lem:svpSparcification2}
	For any $p \in [1, \infty)$, there is an efficient algorithm that takes as input a (basis for a) full rank lattice $\lat \subset \R^n$ of rank $n$ and a prime number $q$, and outputs a (basis for a) sub-lattice $\lat' \subseteq \lat$ of rank $n$ such that for any finite set $\cals \subset \lat$ if $\forall \vec v_1,\vec v_2 \in \mathcal{S}$ it holds that $B^{-1}\vec v_1 \mod q$ and $B^{-1}\vec v_2 \mod q$ are non-zero and pairwise linearly independent over $\F_q$, then
	\begin{equation*}
		1 - \frac{q}{\abs{\cals}} \leq \Pr \left[ \exists \vec v \in \cals : \vec v \in \lat' \right] \leq \frac{|\cals|}{q}.
	\end{equation*}
\end{lemma}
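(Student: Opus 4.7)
The plan is to give a standard Khot-style hyperplane sparsification. I would first run LLL on the input basis of $\lat$ to extract a basis $B=(\vec b_1,\ldots,\vec b_n)$ in polynomial time, so that every lattice vector corresponds uniquely to an integer coefficient vector via $\vec v = B\vec x$. Then I would sample $\vec z\in\F_q^n$ uniformly at random and output a basis for
\[
\lat' \;:=\; \bigl\{\,B\vec x : \vec x\in\Z^n,\ \langle \vec z,\vec x\rangle \equiv 0 \pmod q\,\bigr\}.
\]
Since $q$ is prime, $\lat'$ is a full-rank sublattice of $\lat$ of index $q$ whenever $\vec z\neq\vec 0$, and a basis for it can be read off efficiently from $B$ and $\vec z$ (e.g.\ by applying Hermite normal form to the kernel of the linear map $\vec x\mapsto \langle\vec z,\vec x\rangle\bmod q$ on $\Z^n$, and then pushing through $B$).

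For the upper bound, fix any $\vec v=B\vec x\in\cals$. I would first argue that $\vec x\not\equiv\vec 0\pmod q$: otherwise $\vec v=q\vec u$ for some $\vec u\in\lat\setminus\{\vec 0\}$, and the length bound $\|\vec v\|_p\le q\,\lambda_1^{(p)}(\lat)$ forces $\vec u$ to be a shortest vector, so $\vec v$ is parallel to $\vec u$; the pairwise linear independence of $\cals$ then ensures at most one representative per direction, and one may WLOG replace each $q$-multiple of a shortest vector in $\cals$ by the shortest vector itself, eliminating this case. For every remaining $\vec x\not\equiv\vec 0\pmod q$, a uniformly random $\vec z\in\F_q^n$ satisfies $\langle\vec z,\vec x\rangle\equiv 0\pmod q$ with probability exactly $1/q$, so a union bound over $\cals$ yields $\Pr[\exists \vec v\in\cals:\vec v\in\lat']\le |\cals|/q$.

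For the lower bound, let $X:=\#\{\vec v\in\cals:\vec v\in\lat'\}$, so that $\expect[X]=|\cals|/q$, and apply the Paley--Zygmund inequality $\Pr[X>0]\ge \expect[X]^2/\expect[X^2]$. For two distinct $\vec v,\vec w\in\cals$ with coefficient vectors $\vec x_v,\vec x_w$, the joint probability $\Pr[\vec v,\vec w\in\lat']$ equals $1/q^2$ when $\vec x_v,\vec x_w$ are $\F_q$-linearly independent and is at most $1/q$ otherwise. Using the length hypothesis, I would show that the number of $\F_q$-linearly dependent ordered pairs is $O(|\cals|/q)$, which pushes the calculation through to $\expect[X^2]\le \expect[X]^2+\expect[X]$, and then
\[
\Pr[X>0]\;\ge\;\frac{\expect[X]^2}{\expect[X^2]}\;\ge\;\frac{1}{1+1/\expect[X]}\;\ge\;1-\frac{q}{|\cals|}.
\]

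The main obstacle is the mismatch between linear independence over $\R$ (the hypothesis) and over $\F_q$ (what the second-moment calculation really wants): two vectors in $\cals$ can be pairwise non-parallel in $\R^n$ yet collapse to parallel directions modulo $q$, which would inflate cross-terms from $1/q^2$ to $1/q$. The length constraint $\|\vec v\|_p\le q\,\lambda_1^{(p)}(\lat)$ is the quantitative input that controls this: an $\F_q$-relation $\vec x_v\equiv c\,\vec x_w\pmod q$ produces a lattice vector $(\vec v-c\vec w)/q\in\lat$ of bounded $\ell_p$-length, and counting such short vectors in terms of $\lambda_1^{(p)}(\lat)$ bounds the number of bad pairs tightly enough to absorb into the stated $q/|\cals|$ error. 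The rest of the bookkeeping is routine.
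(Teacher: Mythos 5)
Your construction is the same hyperplane sparsification that the paper uses, and your second-moment lower bound (Paley--Zygmund) is interchangeable with the Chebyshev argument the paper runs; there is no essential difference in the overall plan. The one place where you diverge is worth dwelling on, because you have put your finger on a step that the paper itself treats too quickly.

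The paper's proof asserts, from the pairwise $\R$-linear independence of the $\vec v_i$'s, that the $\vec a_i$'s are pairwise linearly independent, and then writes $\variance[X]=\tfrac{N}{q}(1-1/q)$, i.e.\ it sums the individual variances. That last equality requires the events $\{X_i=1\}$ to be pairwise uncorrelated, which in turn requires $\vec a_i,\vec a_j$ to be linearly independent \emph{over $\F_q$} for every pair $i\neq j$; $\R$-linear independence does not give this. Concretely, in $\lat=\Z^2$ with $q=5$ and $p=2$, the vectors $\vec v_1=(2,-1)$ and $\vec v_2=(1,2)$ are $\R$-independent, both have $\ell_2$-norm $<5=q\lambda_1$, yet $\vec a_1\equiv 2\vec a_2\pmod 5$, so $X_1$ and $X_2$ are perfectly correlated. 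Your write-up is the only one of the two that even names this mismatch, which is a genuine contribution of the proposal.

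However, your proposed repair does not close the gap. You assert that the length hypothesis $\|\vec v\|_p\leq q\lambda_1^{(p)}(\lat)$ lets one bound the number of $\F_q$-dependent ordered pairs by $O(|\cals|/q)$, which is indeed the quantity needed to make the second-moment calculation absorb the extra covariance into the stated $q/|\cals|$ slack. But the argument you sketch --- that a bad pair $(\vec v,\vec w)$ produces a short lattice vector $(\vec v-c\vec w)/q$ of bounded $\ell_p$-norm --- only shows each bad pair certifies one short vector; it does not by itself upper-bound the \emph{number} of bad pairs by $O(|\cals|/q)$. In fact, since the residue classes of $\cals$ modulo $q\lat$ can have size up to roughly $|\lat\cap\mathcal{B}_p(2\lambda_1,\vec 0)|$ (a quantity that can be exponential in $n$), the number of bad ordered pairs can in principle be $\Omega(|\cals|)$, which breaks the calculation you sketch. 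Either a stronger structural hypothesis on $\cals$ is needed (e.g.\ fixed last coordinate, as in the paper's actual use of the lemma, plus a bound on near-collisions modulo $q\lat$), or one should abandon pairwise uncorrelatedness and instead track the covariance explicitly and show that in the regimes where the lemma is invoked the induced error term is negligible --- the additive $1/q^n$ term in the companion \cref{lem:bdd_sparcification} is a symptom of exactly this. As written, both your proof and the paper's hinge on an unjustified pairwise-$\F_q$-independence step, and neither supplies the missing counting argument.
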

% \rnote{ I think this lemma need to be modified as stated it is not provably correct as we need the coefficient vectors to be linearly independent modulo q. Just saying that they are short doesn't gaurantee it but we can our case show that the good vectors in our case have linearly independent cofficient vectors mod q as they are one on the target coordinate and distinct on the others and adding the short condition guarantees in our case. So I propose the following modification which is not much but as follows \begin{lemma}
%     Given a full rank lattice $\cL$ using a basis $B\in \R^{n\times n}$.
% 	Sample $\vec x \sim \F_q^n$. And use the following sparser lattice, $$\lat' := \set{ \vec v \in \lat(B) : \inner{B^{-1}\vec v, \vec x} \equiv 0 \pmod q }.$$  Then, for any finite set $\cals \subset \lat$ of lattice vectors such that for any two vectors $\vec{u}_1,\vec{u}_2 \in \cL$, $B^{-1}\vec{u}_1 \mod q,\, B^{-1}\vec{u}_2 \mod q$ are non zero and linearly independent over $\F_q$.(pairwise linearly independent over $\F_q$)
% 	\begin{equation*}
% 		1 - \frac{q}{\abs{\cals}} \leq \Pr \left[ \exists \vec v \in \cals : \vec v \in \lat' \right] \leq \frac{|\cals|}{q}.
% 	\end{equation*}
% \end{lemma}}
\begin{proof}
	The algorithm samples $\vec x \in \F_q^n$ uniformly at random. It then computes and outputs a basis for the lattice defined as
	\begin{equation*}
		\lat' := \{ \vec v \in \lat : \inner{B^+ \vec v, \vec x} \equiv 0 \pmod{q} \}.
	\end{equation*}
	A basis for $\lat'$ can be computed efficiently using the algorithm from \cite{DGStoSVP}, claim 2.15. The algorithm outputs a basis for $\lat'$, hence the efficiency is clear. We will prove correctness.
	Define $N := |\cals|$ and let $\vec v_1, \ldots \vec v_N \in \mathcal{S}$.
	For each $i$, define $\vec a_i$ := $B^+ \vec v_i$. 
    Let $X_i$ be the indicator random variable for $\vec v_i \in \lat'$. Let $X = \sum_i X_i$. We have 
	\begin{equation*}
		\expect[X_i] = \Pr_{\vec{x} \sim \F_q^n} \left[ \inner{\vec a_i, \vec x} \equiv 0 \pmod{q} \right] = 1/q,
	\end{equation*}
    and $\expect[X] = N/q$. Also, $\variance[X_i] = \frac{1}{q}(1-1/q)$. Since $\vec a_i$'s are pairwise linearly independent by assumption, $\variance[X] = \frac{N}{q}(1-1/q)$. By Chebyshev's inequality, 
    \begin{equation*}
        \Pr \left[ \exists \vec v \in \cals : \vec v \in \lat' \right] = \prob{X > 0} \geq 1 - \frac{\variance[X]}{\expect[X]^2} \geq 1 - \frac{q}{N}.
    \end{equation*}
	% Then, 
	% \begin{align*}
	% 	\Pr \left[ \exists i \in [N] : \vec v_i \in \lat' \right]
	% 	&= \Pr \left[ \exists i \in [N] : \inner{\vec a_i, \vec x} \equiv 0 \pmod{q} \right] \\
	% 	&= 1 - \Pr \left[ \forall i \in [N] : \inner{\vec a_i, \vec x} \not\equiv 0 \pmod{q} \right] \\
	% 	&= 1 - \left(1 - \frac{1}{q}\right)^N && (\text{each event is independent.}) \\
	% 	&\geq 1 - e^{-N/q} &&  (\because (1-z) \leq e^{-z}).
	% \end{align*}
	For the upper bound, by union bound
	\begin{align*}
		\Pr \left[ \exists i \in [N] : \vec v_i \in \lat' \right]
		&= \Pr \left[ \exists i \in [N] : \inner{\vec a_i, \vec x} \equiv 0 \pmod{q} \right] \\
		&\leq \sum_{i=1}^N \Pr \left[ \inner{\vec a_i, \vec x} \equiv 0 \pmod{q} \right] \\
		&= N/q.
	\end{align*}
\end{proof}

We also use the following treatment of lattice sparsification from \cite{bdd21}. 

\begin{lemma}[\cite{bdd21}, Proposition 2.5]
	\label{lem:bdd_sparcification}
    Let $p \in [1, \infty)$, let $\mathcal{L}$ be a lattice of rank $n$ with basis $B$, let $\vec t \in \operatorname{span}(\mathcal{L})$, let $q$ be a prime, and let $r \geq 0$. Let $\vec x, \vec z \sim \mathbb{F}_q^n$ be sampled uniformly at random, and define
\[
\mathcal{L}' := \{ \vec v \in \mathcal{L} :  \inner{B^+ \vec v, \vec x} \equiv 0 \pmod{q} \}, \quad \vec t' := \vec t - B\vec z.
\]

\begin{enumerate}
    \item If $r \leq q \lambda_1^{(p)}(\mathcal{L})$, then
    \begin{equation}
        \Pr[\lambda_1(\mathcal{L}') \leq r] \leq \frac{\nump{\mathcal{L} , r, \vec 0}}{q}.
    \end{equation}
	\label{lem:sparcification:1}

    \item If $r < q \lambda_1^{(p)}(\mathcal{L}) / 2$, then ,
      
    \begin{equation}
        \Pr[\operatorname{dist}_p(\vec t', \mathcal{L}') > r] \leq \frac{q}{\nump{\mathcal{L}, r, \vec t}} + \frac{1}{q^n}.
    \end{equation}

    \item
    \begin{equation}
        \Pr[\operatorname{dist}_p(\vec t', \mathcal{L}') \leq r] \leq \frac{\nump{\mathcal{L}, r, \vec t}}{q} + \frac{1}{q^n}.
    \end{equation}
	\label{lem:sparcification:3}
\end{enumerate}
\end{lemma}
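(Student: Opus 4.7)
All three claims rest on the elementary fact that for any nonzero $\vec a \in \mathbb{F}_q^n$, a uniformly random $\vec x \in \mathbb{F}_q^n$ satisfies $\inner{\vec a, \vec x} \equiv 0 \pmod{q}$ with probability exactly $1/q$. The plan is to rewrite each event in question as a congruence condition over $\mathbb{F}_q$ and then apply this observation, via a union bound (parts 1 and 3) or via Chebyshev's inequality (part 2).

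For Claim 1, I would use a direct union bound. If $\vec v \in \mathcal{L}$ is nonzero with $\norm{\vec v}_p \leq r \leq q\lambda_1^{(p)}(\mathcal{L})$, then $\vec v \notin q\mathcal{L}$ (otherwise $\vec v/q$ would be a nonzero lattice vector of strictly smaller $\ell_p$ norm). Hence $B^+\vec v \not\equiv \vec 0 \pmod{q}$, so $\vec v \in \mathcal{L}'$ with probability exactly $1/q$ over $\vec x$. Summing over the at most $\nump{\mathcal{L}, r, \vec 0}$ candidates yields the bound.

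For Claim 3, the plan is to substitute $\vec v = \vec u - B\vec z$ for $\vec v \in \mathcal{L}'$, which shows that $\dist_p(\vec t', \mathcal{L}') \leq r$ is equivalent to the existence of $\vec u \in \ballp{\mathcal{L}, r, \vec t}$ with $\inner{B^+\vec u - \vec z, \vec x} \equiv 0 \pmod{q}$. I would then condition on $\vec x = \vec 0$ (probability $1/q^n$, and the congruence is vacuous there) versus $\vec x \neq \vec 0$ (where $\inner{\vec z, \vec x}$ is uniform on $\mathbb{F}_q$ over $\vec z$, and a union bound over $\vec u$ contributes $N/q$, with $N := \nump{\mathcal{L}, r, \vec t}$). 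Combining the two cases yields $\Pr \leq 1/q^n + N/q$.

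The main technical step is Claim 2, where the plan is Chebyshev's inequality on the count of ``surviving'' close vectors after conditioning on $\vec x \neq \vec 0$ (the case $\vec x = \vec 0$ contributes at most $1/q^n$ to $\Pr[\dist > r]$). Fix $\vec x \neq \vec 0$, set $y_{\vec u} := \inner{B^+\vec u, \vec x} \bmod q$ for $\vec u \in \ballp{\mathcal{L}, r, \vec t}$, and $Y := \inner{\vec z, \vec x} \bmod q$, which is uniform on $\mathbb{F}_q$. Then $X := \sum_{\vec u} \mathbf{1}[Y = y_{\vec u}]$ satisfies $\expect[X \mid \vec x] = N/q$, and $\{X = 0\}$ is precisely the event $\{\dist_p(\vec t', \mathcal{L}') > r\}$. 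The variance calculation reduces to bounding the pairwise collision probability $\Pr_{\vec x}[y_{\vec u} = y_{\vec u'}]$ for distinct $\vec u, \vec u'$ in the ball; the key step, and the main obstacle, is showing that $B^+(\vec u - \vec u') \not\equiv \vec 0 \pmod{q}$. This is exactly where the hypothesis $r < q\lambda_1^{(p)}(\mathcal{L})/2$ enters: it forces $\norm{\vec u - \vec u'}_p < q\lambda_1^{(p)}(\mathcal{L})$, so $\vec u - \vec u' \notin q\mathcal{L}$. With this in hand, each pairwise collision probability is at most $1/q$, yielding $\variance[X \mid \vec x \neq \vec 0] \leq N/q$ and by Chebyshev $\Pr[X = 0 \mid \vec x \neq \vec 0] \leq q/N$. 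Combining the cases $\vec x = \vec 0$ and $\vec x \neq \vec 0$ produces the claimed bound $\Pr[\dist_p(\vec t', \mathcal{L}') > r] \leq q/N + 1/q^n$.
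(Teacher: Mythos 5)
The paper cites this lemma from [bdd21, Proposition~2.5] and does not reprove it; there is therefore no internal proof to compare against, but the companion Lemma~\ref{lem:svpSparcification2} (which the paper does prove) uses exactly the first- and second-moment machinery you propose, so your approach is in the spirit of the paper. Your overall plan is correct: rewrite membership in $\mathcal{L}'$ as a linear congruence over $\mathbb{F}_q$, handle the shift $B\vec z$ by substituting $\vec u = \vec v + B\vec z$, split on $\vec x = \vec 0$ versus $\vec x \neq \vec 0$, and use a union bound for the upper-tail bounds (parts 1 and 3) and Chebyshev for the lower-tail bound (part 2), with the hypothesis $2r < q\lambda_1^{(p)}$ supplying $\vec u - \vec u' \notin q\mathcal{L}$.

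Two points deserve tightening. First, in part 1 the hypothesis is the non-strict $r \leq q\lambda_1^{(p)}(\mathcal{L})$, so the claim ``$\vec v \notin q\mathcal{L}$'' fails precisely when $\|\vec v\|_p = r = q\lambda_1^{(p)}$ and $\vec v/q$ is a shortest vector; for such a $\vec v$ we have $\Pr[\vec v \in \mathcal{L}'] = 1$, breaking the per-vector $1/q$ step. The lemma still holds because in that case $\pm\vec v/q, \pm 2\vec v/q, \dots, \pm \vec v$ all lie in the ball, forcing $\nump{\mathcal{L}, r, \vec 0} > q$ and making the bound vacuous; you should either dispatch this boundary case explicitly or observe that the relevant applications only use $r < q\lambda_1^{(p)}$. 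Second, in part 2 you begin by saying ``fix $\vec x \neq \vec 0$'' but then estimate the collision probability $\Pr_{\vec x}[y_{\vec u} = y_{\vec u'}]$: these are inconsistent framings. For a single fixed $\vec x$, the variable $X$ is a deterministic function of the single uniform residue $Y = \inner{\vec z, \vec x}$ and its variance can be as large as $\Theta(N^2/q)$ (e.g.\ if all $y_{\vec u}$ coincide), which makes Chebyshev useless. The argument only works when both $\vec x$ and $\vec z$ are random, conditioned on the event $\{\vec x \neq \vec 0\}$; then the off-diagonal second-moment terms factor as $\tfrac{1}{q}\Pr_{\vec x}[y_{\vec u}=y_{\vec u'} \mid \vec x \neq \vec 0] \leq \tfrac{1}{q^2}$, giving $\variance[X \mid \vec x \neq \vec 0] \leq N/q$ and the claimed $q/N$. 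You do write $\variance[X \mid \vec x \neq \vec 0]$ at the end, so the final expressions are right, but the exposition should set up the second moment over the joint distribution from the start rather than ``fixing'' $\vec x$.
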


% \section{$\ETH$ hardness for $ \CVP^{(p)}_{\gamma'}$ }
% \label{sec:cvp_hard}
% \input{max_lin_to_cvp}

\section{\texorpdfstring{$\ETH$}{ETH} hardness of \texorpdfstring{$\cvp_{p,\gamma}$}{CVP p, gamma}}
\label{sec:cvp}
\label{sec:cvp_hard}
In the following, we show a reduction from the gap $\mathsf{MAXLIN}$ problem to the $\CVP_{p, \gamma_p}$ and thereby conclude that $\CVP_{p, \gamma_p}$ is hard under $\mathsf{ETH}$.
\begin{theorem}
\label{thm:LinCvpReduction}
For any $p \in [1, \infty)$, there exists a constant $\gamma_p > 1$ such that for all $n \in \Z^+$, there is a polynomial time Karp reduction from $\lin_\varepsilon$ in $n$ variables and $m$ equations to $\CVP_{p, \gamma_p}$ over $\R^m$.
\end{theorem}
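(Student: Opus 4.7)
The plan is to follow the construction already sketched in the techniques overview: given a $\lin_\varepsilon$ instance $(M, \vec v)$ with $M \in \F_2^{m \times n}$ and $\vec v \in \F_2^m$ (lifted to $\{0,1\}$-entries over $\Z$), define the generating matrix $B := [M \mid 2 I_m] \in \Z^{m \times (n+m)}$, set the target $\vec t := \vec v \in \Z^m$, and output the $\cvp$ instance $(B, \vec t, r)$ with $r := (3m/8)^{1/p}$ and gap $\gamma_p := (1 + 8\varepsilon/3)^{1/p} > 1$. The reduction is clearly polynomial-time and deterministic, so the content is entirely in verifying completeness, soundness, and the resulting gap.

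The key technical observation, and the heart of the argument, is that any lattice vector $B\vec z$ with $\vec z = (\vec x, \vec y) \in \Z^n \times \Z^m$ has the form $M\vec x + 2\vec y$, so its $i$-th coordinate distance to $\vec t$ is $\lvert (M\vec x)_i - v_i + 2 y_i \rvert$. For any fixed $\vec x \in \Z^n$, minimising over $\vec y \in \Z^m$ coordinate-wise gives
\begin{equation*}
\min_{\vec y \in \Z^m} \|B\vec z - \vec t\|_p^p \;=\; \sum_{i=1}^m \min_{k \in \Z} |(M\vec x)_i - v_i + 2k|^p \;=\; \#\{i : (M\vec x)_i \not\equiv v_i \pmod 2\},
\end{equation*}
since each term is either $0$ (if the $i$-th equation is satisfied by $\vec x \bmod 2$) or $1$ (otherwise). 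Moreover, this quantity depends only on $\vec x \bmod 2$, so the minimum over $\vec z \in \Z^{n+m}$ equals the minimum over $\vec x \in \{0,1\}^n$ of the number of unsatisfied equations. Hence $\dist_p(\vec t, \lat(B))^p$ is exactly the minimum number of violated equations over all Boolean assignments.

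From this identity, completeness and soundness follow immediately from \Cref{def:gapmaxlin}. In the \yes case some $\vec x \in \{0,1\}^n$ satisfies at least $\tfrac{5}{8} m$ equations, leaving at most $\tfrac{3}{8} m$ violated, so $\dist_p(\vec t, \lat(B)) \leq (3m/8)^{1/p} = r$. In the \no case every $\vec x$ violates at least $(\tfrac{3}{8} + \varepsilon) m$ equations, giving $\dist_p(\vec t, \lat(B)) > ((\tfrac{3}{8} + \varepsilon) m)^{1/p} = \gamma_p \cdot r$. There is no real obstacle here; the only subtlety is confirming that unrestricted integer $\vec x$ (rather than Boolean $\vec x$) does not help the adversary in the \no case, which is precisely the content of the mod-$2$ reduction above and matches the footnote in \Cref{def:gapmaxlin}. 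The output dimension is $m$, the ambient dimension of the lattice, matching the statement, and combining with \Cref{cor:linEthHard} will then immediately yield \Cref{thm:CvpEthHard}.
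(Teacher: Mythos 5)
Your proposal is correct and follows essentially the same construction and argument as the paper: the generating matrix $[M \mid 2I_m]$, target $\vec v$, radius $(3m/8)^{1/p}$, and gap $(1+8\varepsilon/3)^{1/p}$, with the observation that appending $2I_m$ reduces the distance computation to counting unsatisfied equations mod $2$. Your phrasing of the key step as an exact identity $\dist_p(\vec t,\lat(B))^p = \min_{\vec x \in \{0,1\}^n}\#\{i : (M\vec x)_i \not\equiv v_i \bmod 2\}$ is a slightly more unified statement of the same case analysis the paper does, but there is no substantive difference.
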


\begin{proof}
    Fix any $p$.   Define $\gamma = \gamma_p := \paren{1 + \frac{8\varepsilon}{3}}^{1/p}$.
    
    On input a $\gmlin$ instance, $M \in \{0,1\}^{m \times n},\vec v \in \{0,1\}^m$, the reduction computes 
    \begin{equation*}
        B := \begin{bmatrix}
            M & 2I_m
        \end{bmatrix}; \quad 
        \vec t := \begin{bmatrix}
            \vec v
        \end{bmatrix},
    \end{equation*}
   and sets $r := (3m/8)^{1/p}$.
    It then outputs $(B, \vec t, r)$.
    The reduction is clearly efficient. We now show that the reduction is correct. 

   We claim that $(B, \vec t, r)$ is a \yes instance of $\CVP_{p, \gamma}$ if the input $\lin_{\varepsilon}$ instance was a \yes instance, and a \no instance of $\CVP_{p, \gamma}$ if the input $\lin_{\varepsilon}$ instance was a \no instance.
    %$c_1 := \paren{3m/8}^{1/p}$, $c_2 := \paren{(3/8 + \varepsilon)m}^{1/p}$.
    Note that for any $\vec x \in \Z^n, \vec y \in \Z^m$, we have \[\normpp{B (\vec x, \vec y) - \vec{t}} = \normpp{M\vec x + 2\vec y - \vec{v}}\;.\]
    Suppose that the input $\lin_{\varepsilon}$ instance was a \yes instance. 
    This implies that $\exists \vec x \in \set{0,1}^n$ such that $M\vec x - \vec v \pmod 2$ has at most $3m/8$ non-zero coordinates.
    
    Also, $\exists \vec y \in \Z^m$ such that $\forall i \in [m] : (M\vec x - \vec v + 2\vec y)_i$ is $1$ if $(M\vec x - \vec v)_i$ is $1$ modulo $2$, and $0$ otherwise. Therefore, \[ \normpp{M\vec x - \vec v +2\vec y} \leq 3m/8 = r^p, \]
    and hence $(B, \vec t, r)$ is a \yes instance of $\cvp_{p, \gamma}$.
    
    Next, suppose that the input $\lin_{\varepsilon}$ instance was a \no instance.

    Then we have that $\forall \vec x \in \Z^n$, $M\vec x - \vec v$ has at least $(\gamma r)^p = (3/8 + \eps)m$ co-ordinates that are non-zero modulo $2$ (odd coordinates). 
    Then, $\forall \vec y \in \Z^m$, $M\vec x - \vec v +2\vec y$ has at least $(\gamma r)^p$ non-zero integral coordinates. Therefore, \[ \normpp{M\vec x - \vec v +2\vec y} \geq (\gamma r)^p, \]
    and hence $(B, \vec t, r)$ is a \no instance of $\cvp_{p, \gamma}$.
    % \dnote{The following sentence is not needed. When you reduce a problem P to a problem Q, it is totally ok for the instance of Q to be of a special kind. What you want is that you reduce an arbitrary instance of P to {\em some} instance of Q}
    % Notice that we get a $\cvp_{p, \gamma}$ instance over $\Z^m$. However, there is a trivial reduction to $\cvp_{p, \gamma}$ over $\R^m$, as $\Q^m \subset \R^m$. 
\end{proof}

Together with \cref{cor:linEthHard}, we find the following.
\thmcvpethhard*

% \section{\ETH hardness of $\SVP^{(p)}_{\gamma}$}
% \input{special_gapcvp_to_svp}
\section{\texorpdfstring{$\ETH$}{ETH} hardness of \texorpdfstring{$\svp_{p,\gamma}$}{SVP p,gamma}}
\label{sec:svp}
In this section we show a reduction from $\sat$ in $n$ variables to $\svp_{p, \gamma}$ in a lattice of rank $\order{n}$. 
We do this by a reduction from $\cvp_{p, \gamma'}$ for a constant $\gamma'$ (an instance obtained in \cref{sec:cvp_hard}) to $\svp_{p, \gamma}$ for a constant $\gamma$. 
The result then follows from combining the reduction from $\sat$ to $\lin$ in \cref{thm:satToLinRed}, and from $\lin$ to $\cvp$ in \cref{thm:LinCvpReduction}. 
We first show that for $p > 2$, the integer lattice with the all half vector as the target forms a (family of) locally dense lattice gadgets with certain quantitative properties that will be useful in our reduction. 

\subsection{Locally Dense Integer Gadget}
\label{ssec:all_half_target}

We wish to show that in the integer lattice $\Z^n$, there are exponentially more vectors close to $\frac{1}{2}\vec 1_n$ than the number of {\em short} vectors. The following property of the theta function will be useful. 

\begin{theorem}\label{thm:thetaProp}
    For every $p \in (2, \infty)$, there always exists $\tau > 0$ such that 
    \begin{equation}
        \label{eqn:thetaProp}
        \Theta_p(\tau, 0) < \Theta_p(\tau, 1/2). 
    \end{equation}
    % Furthermore, $\tau$ can be efficiently computed from $p$.
\end{theorem}

\begin{proof}
	From the definition,
	$$
	\Theta_p(\tau, 0)=\sum_{z \in \mathbb{Z}} e^{-\tau|z|^p}=1+2 \sum_{z=1}^{\infty} e^{-\tau \cdot z^p},
	$$
	and
	$$
	\Theta_p(\tau, 1 / 2)=\sum_{z \in \mathbb{Z}} e^{-\tau|z-1 / 2|^p}=2 \sum_{z=1}^{\infty} e^{-\tau \cdot(z-1 / 2)^p} .
	$$
	It is straightforward to see $\Theta_p(\tau, 0)$ and $\Theta_p(\tau, 1 / 2)$ are absolutely convergent; therefore we can define the following function $f_p(\tau)$ on $\tau\in(0,\infty)$ that represents the difference between the two series, i.e.,
	$$
	f_p(\tau):=\Theta_p(\tau, 1/2)-\Theta_p(\tau, 0)=-1+2\sum_{z=1}^{\infty}\left[e^{-\tau(z-1/2)^{p}}-e^{-\tau z^{p}}\right]=-1+2\sum_{z=1}^{\infty}(-1)^{z-1}e^{-\tau (z/2)^p}.
	$$ 

    Let $p = 2 + \varepsilon$ for $\varepsilon > 0$, and write $f_{p}(\tau)$ as $$ f_{2+\varepsilon}(\tau)=-1+2\sum_{z=1}^{\infty}(-1)^{z-1}e^{-\tau (z/2)^{2+\varepsilon}}=-\sum_{z\in\Z}(-1)^{z}e^{-\tau |z/2|^{2+\varepsilon}}=-\sum_{z\in\Z}e^{(-{\tau/2^{2+\varepsilon}}) |z|^{2+\varepsilon}+\pi i z}.$$
    It suffices to prove that for all $\varepsilon>0$, there always exists $\tau>0$ such that $f_p(\tau)>0$.

    \textbf{Case 1: $\left(\varepsilon \in (0, 2)\right)$.}
    We will use the Poisson summation formula, which states that for any function $s(z) \in L^1(\mathbb{R})$, if its Fourier transform $\hat{s}(x) := \int_{-\infty}^{\infty} s(z) e^{-2 \pi i x z} d z \in L^1(\mathbb{R})$, then
    $$
    \sum_{z \in \mathbb{Z}} s(z)=\sum_{x \in \mathbb{Z}} \hat{s}(x).
    $$

    Define $C_{\tau, \varepsilon} := \tau / 2^{2+\varepsilon} > 0$ and $s_{\tau, \varepsilon}(z) := e^{-C_{\tau, \varepsilon} |z|^{2+\varepsilon} + \pi i z}$. In other words, $f_{2+\varepsilon}(\tau)=-\sum_{z\in\Z}s_{\tau,\varepsilon}(z)$. 
    Now we consider the Fourier transform of $s_{\tau, \varepsilon}(z)$, i.e., $$\hat{s}_{\tau, \varepsilon}(x):=\int_{-\infty}^{\infty} s_{\tau, \varepsilon}(z) e^{-2 \pi i x z} d z=\int_{-\infty}^{\infty} \exp\paren{-C_{\tau, \varepsilon}|z|^{2+\varepsilon}-2 \pi i (x-1 / 2)z} d z.$$

    Recall that the Fourier transform of $e^{-|z|^{2+\varepsilon}}$, $\hat{g}_{\varepsilon}(x) := \int_{-\infty}^{\infty} e^{-|z|^{2+\varepsilon}-2\pi i xz} dz$, has been well-studied in \cite{EOR91}. This motivates us to relate $\hat{s}_{\tau, \varepsilon}(x)$ to $\hat{g}_{\varepsilon}(x)$ by some translation and scaling as follows:
    $$
    \begin{aligned}
        \hat{s}_{\tau, \varepsilon}(x)&=\int_{-\infty}^{\infty} \exp\paren{-C_{\tau, \varepsilon}|z|^{2+\varepsilon}-2 \pi i (x-1 / 2)z} d z\\
        &=\int_{-\infty}^{\infty} \exp\paren{-|C_{\tau, \varepsilon}^{1/(2+\varepsilon)}z|^{2+\varepsilon}-2 \pi i \frac{x-1/2}{C_{\tau, \varepsilon}^{1/(2+\varepsilon)}}C_{\tau, \varepsilon}^{1/(2+\varepsilon)}z} d z\\
        &=\frac{1}{C_{\tau, \varepsilon}^{1/(2+\varepsilon)}}\int_{-\infty}^{\infty} \exp\paren{-|C_{\tau, \varepsilon}^{1/(2+\varepsilon)}z|^{2+\varepsilon}-2 \pi i \frac{x-1/2}{C_{\tau, \varepsilon}^{1/(2+\varepsilon)}}C_{\tau, \varepsilon}^{1/(2+\varepsilon)}z} d\left(C_{\tau, \varepsilon}^{1/(2+\varepsilon)}z\right)\\
        &=\frac{1}{C_{\tau, \varepsilon}^{1/(2+\varepsilon)}}\int_{-\infty}^{\infty} \exp\paren{-|z|^{2+\varepsilon}-2 \pi i \frac{x-1/2}{C_{\tau, \varepsilon}^{1/(2+\varepsilon)}}z} d z=\frac{1}{C_{\tau, \varepsilon}^{1/(2+\varepsilon)}}\cdot\hat{g}_{\varepsilon}\left(\frac{x-1/2}{C_{\tau, \varepsilon}^{1/(2+\varepsilon)}}\right).
    \end{aligned}
    $$
    To be eligible for the Poisson summation formula $\sum_{z \in \mathbb{Z}} s_{\tau,\varepsilon}(z)=\sum_{x \in \mathbb{Z}} \hat{s}_{\tau,\varepsilon}(x)$, we need to show that for any $\tau > 0$ and $\varepsilon > 0$, $s_{\tau, \varepsilon}, \hat s_{\tau, \varepsilon} \in L^1(\R)$, which is given by the following lemmas.
    \begin{lemma}\label{lem:1}
        For any $\tau > 0$ and $\varepsilon > 0$, $s_{\tau, \varepsilon}\in L^1(\R)$.   
    \end{lemma}

    \begin{proof}
        Note that 
    	$$
    	\begin{aligned}
    		\int_{-\infty}^{\infty} |s_{\tau,\varepsilon}(z)|dz=&\int_{-\infty}^{\infty} \left| e^{-C_{\tau, \varepsilon} |z|^{2+\varepsilon}+\pi i z} \right|dz\\
    		=&2\int_{0}^{\infty}  e^{-C_{\tau, \varepsilon} z^{2+\varepsilon}}dz\\
    		=&2\left(\int_{0}^{1}  e^{-C_{\tau, \varepsilon} z^{2+\varepsilon}}dz+\int_{1}^{\infty}  e^{-C_{\tau, \varepsilon} z^{2+\varepsilon}}dz\right)\\
    		\leq& 2\left(\int_{0}^{1}  e^{-C_{\tau, \varepsilon} z^{2+\varepsilon}}dz+\int_{1}^{\infty}  e^{-C_{\tau, \varepsilon} z^{2}}dz\right).
    	\end{aligned}
    	$$
    	Since $e^{-C_{\tau, \varepsilon}z^{2+\varepsilon}}$ is bounded and continuous on $[0,1]$, and $e^{-C_{\tau, \varepsilon} z^{2}}$ is a Gaussian function, we can conclude that both integrals are finite, which completes the proof. 
    \end{proof}

    \begin{lemma}\label{lem:2}
        For any $\tau > 0$ and $\varepsilon > 0$, $\hat s_{\tau, \varepsilon}\in L^1(\R)$.  
    \end{lemma}

    \begin{proof}
        By \cite{Fol09}, for any continuous and piecewise smooth function $f$, if both $f\in L^2$ and its first derivative  $f^{\prime}\in L^2$, then its Fourier transform $\hat{f}\in L^1$. Since $s_{\tau,\varepsilon}(z)$ is clearly smooth on $z\in(-\infty,0)\cup(0,\infty)$ and continuous at $z=0$ as $\lim_{z\rightarrow0}s_{\tau,\varepsilon}(z)=s_{\tau,\varepsilon}(0)=1$, it suffices to show that $s_{\tau,\varepsilon}\in L^2(\R)$ and $s^{\prime}_{\tau,\varepsilon}\in L^2(\R)$.
	
    	For $s_{\tau,\varepsilon}\in L^2(\R)$, we argue in a similar way to the proof of \cref{lem:1}. Note that,
    	$$
    	\begin{aligned}
    		\int_{-\infty}^{\infty} |s_{\tau,\varepsilon}(z)|^2 dz=&\int_{-\infty}^{\infty} \left| e^{-C_{\tau, \varepsilon} |z|^{2+\varepsilon}+\pi i z} \right|^2 dz\\
    		=&2\int_{0}^{\infty}  e^{-2C_{\tau, \varepsilon} z^{2+\varepsilon}}dz\\
    		=&2\left(\int_{0}^{1}  e^{-2C_{\tau, \varepsilon} z^{2+\varepsilon}}dz+\int_{1}^{\infty}  e^{-2C_{\tau, \varepsilon} z^{2+\varepsilon}}dz\right)\\
    		\leq& 2\left(\int_{0}^{1}  e^{-2C_{\tau, \varepsilon} z^{2+\varepsilon}}dz+\int_{1}^{\infty}  e^{-2C_{\tau, \varepsilon} z^{2}}dz\right).
    	\end{aligned}
    	$$
    	Since $e^{-2C_{\tau, \varepsilon} z^{2+\varepsilon}}$ is bounded and continuous on $[0,1]$, and $e^{-2C_{\tau, \varepsilon} z^{2}}$ is a Gaussian function, we can conclude that both integrals are finite, which completes the proof of $s_{\tau,\varepsilon}\in L^2(\R)$. 
    	
    	%		 Before showing $s^{\prime}_{\tau,\varepsilon}(z)\in L^2(\R)$, we claim that $s_{\tau,\varepsilon}(z)$ is differentiable over $\R$. In particular, it suffices to verify the case $z=0$ as follows:
    	%		 $$
    	%		 \lim_{z\rightarrow0}\frac{s_{\tau,\varepsilon}(z)-s_{\tau,\varepsilon}(0)}{z} = \lim_{z\rightarrow0}\frac{e^{-C |z|^{2+\varepsilon}+\pi i z}-1}{z} = \lim_{z\rightarrow0}\frac{-C |z|^{2+\varepsilon}+\pi i z}{z} = \pi i.
    	%		 $$
    	
    	Now we calculate the first derivative of $s_{\tau,\varepsilon}(z)$, i.e.,
    	$$
    	s^{\prime}_{\tau,\varepsilon}(z) = e^{-C_{\tau, \varepsilon} |z|^{2+\varepsilon}+\pi i z} \cdot\left(-C_{\tau, \varepsilon}(2+\varepsilon) |z|^{1+\varepsilon}\operatorname{sgn}(z)+\pi i\right),
    	$$
    	where
    	$$
    	\operatorname{sgn}(z) = \left\{
    	\begin{aligned}
    		1, \quad & z>0\\
    		0, \quad & z=0\\
    		-1, \quad & z<0
    	\end{aligned}
    	\right.
    	$$
    	Note that this formula works for all $z\in\R$. In particular, one can verify that 
    	$$
    	\lim_{z\rightarrow0}\frac{s_{\tau,\varepsilon}(z)-s_{\tau,\varepsilon}(0)}{z} = \lim_{z\rightarrow0}\frac{e^{-C_{\tau, \varepsilon} |z|^{2+\varepsilon}+\pi i z}-1}{z} = \lim_{z\rightarrow0}\frac{-C_{\tau, \varepsilon} |z|^{2+\varepsilon}+\pi i z}{z} = \pi i = s^{\prime}_{\tau,\varepsilon}(0).
    	$$
    	Then it is straightforward to see that
    	$$
    	\begin{aligned}
    		\int_{-\infty}^{\infty} |s^{\prime}_{\tau,\varepsilon}(z)|^2 dz=&\int_{-\infty}^{\infty} \left| e^{-C_{\tau, \varepsilon} |z|^{2+\varepsilon}+\pi i z} \right|^2 \cdot \left(\sqrt{ C_{\tau, \varepsilon}^2(2+\varepsilon)^2 |z|^{2+2\varepsilon} + \pi^2} \right)^2 dz\\
    		=&2\int_{0}^{\infty}  e^{-2C_{\tau, \varepsilon} z^{2+\varepsilon}} \cdot \left(C_{\tau, \varepsilon}^2(2+\varepsilon)^2 z^{2+2\varepsilon} + \pi^2 \right) dz\\
    		=&2\int_{0}^{1}  e^{-2C_{\tau, \varepsilon} z^{2+\varepsilon}} \cdot \left(C_{\tau, \varepsilon}^2(2+\varepsilon)^2 z^{2+2\varepsilon} + \pi^2 \right) dz\\
            & \hspace{2pt} +2\int_{1}^{\infty}  e^{-2C_{\tau, \varepsilon} z^{2+\varepsilon}} \cdot \left(C_{\tau, \varepsilon}^2(2+\varepsilon)^2 z^{2+2\varepsilon} + \pi^2 \right) dz\\
    		\leq& 2\left(\int_{0}^{1}  e^{-2C_{\tau, \varepsilon} z^{2+\varepsilon}} \cdot \left(C_{\tau, \varepsilon}^2(2+\varepsilon)^2 z^{2+2\varepsilon} + \pi^2 \right) dz+\int_{1}^{\infty}  C_{\tau, \varepsilon}^{\prime}z^{-2} dz\right),
    	\end{aligned}
    	$$
    	where $C_{\tau, \varepsilon}^{\prime}$ is a finite constant depending on $\tau$ and $\varepsilon$. This is because an exponential function always decays faster than any polynomial function, which implies that for any given $\tau>0$ and $\varepsilon>0$, we can find $C_{\tau, \varepsilon}^{\prime}>0$ such that $C_{\tau, \varepsilon}^{\prime}z^{-2}\geq e^{-2C_{\tau, \varepsilon} z^{2+\varepsilon}} \cdot (C_{\tau, \varepsilon}^2(2+\varepsilon)^2 z^{2+2\varepsilon} + \pi^2 )$ for all $z\geq 1$. Since the integrand of the first integral is bounded and continuous on $[0,1]$, and the second integral $\int_{1}^{\infty}  C_{\tau, \varepsilon}^{\prime}z^{-2} dz=C_{\tau, \varepsilon}^{\prime}$, we can conclude that both integrals are finite, which completes the proof of $s^{\prime}_{\tau,\varepsilon}\in L^2(\R)$.
    \end{proof}

    Recall from \cref{lem:eor} that $$\hat{g}_{\varepsilon}(x) \sim 2 \sum_{m=1}^{\infty} \frac{(-1)^{m+1}}{m!} \sin \left(\frac{m \pi (2+\varepsilon)}{2}\right) \Gamma(m (2+\varepsilon)+1)(2 \pi|x|)^{-m (2+\varepsilon)-1},$$ 
    and observe that for any $x \in \R$ and $\varepsilon \in (0, 2)$, the first term in the asymptotic expansion of $\hat{g}_{\varepsilon}(x)$ is negative as $\sin(\pi (2+\varepsilon) / 2) < 0$.
    By \cref{def:asymptotic}, there exists $N_0 \in \R$ such that for all $|x| > N_0$, $\hat{g}_{\varepsilon}(x) < 0$. 
    Therefore, we can always choose $\tau_0 \in (0, N_0^{-(2+\varepsilon)})$, which satisfies that for all $x \in \Z$, $$N_0 < \frac{1}{\tau_0^{1/(2+\varepsilon)}} = \frac{1}{2 C_{\tau_0, \varepsilon}^{1/(2+\varepsilon)}} \leq \abs{\frac{x - 1/2}{C_{\tau_0, \varepsilon}^{1/(2+\varepsilon)}}}.$$
    Since $C_{\tau_0, \varepsilon}^{1/(2+\varepsilon)}>0$, we have that for all $x \in \Z$, 
    $$
    \hat s_{\tau_0, \varepsilon}(x) = \frac{1}{C_{\tau_0, \varepsilon}^{1/(2+\varepsilon)}}\cdot\hat{g}_{\varepsilon}\left(\frac{x-1/2}{C_{\tau_0, \varepsilon}^{1/(2+\varepsilon)}}\right) < 0.
    $$
    By \cref{lem:1,lem:2}, we can apply the Poisson summation formula to conclude that $$f_{2+\varepsilon}(\tau_0) = -\sum_{z \in \Z} s_{\tau_0, \varepsilon}(z) = -\sum_{x \in \Z} \hat s_{\tau_0, \varepsilon}(x) > 0.$$

This concludes the proof of \textbf{Case 1}.

    \textbf{Case 2: $(\varepsilon \geq 2)$.} In this case, we simply choose $\tau_0 = 1$, which yields that 
    $$
    f_{2+\eps}(1)=f_p(1)=-1+2\sum_{z =1}^{\infty} \left(e^{-(z-1/2)^p}-e^{- z^p}\right).$$
    Since $e^{- (z-1/2)^p}-e^{-z^p}>0$ for all $z \in \Z^+$ and $p\geq4$, it follows that 
    $$
    f_p(1) > -1 + 2(e^{-1/2^p} - e^{-1}) \geq -1 + 2(e^{-1/2^4} - e^{-1}) > 0,
    $$
    which completes our proof.
\end{proof}
% \begin{remark}\label{rmk:negation}
%     For $p=2$, we can also follow the same strategy as in the proof for \textbf{Case 1} of \cref{thm:thetaProp} to show that $\Theta_2(\tau,0)>\Theta_2(\tau,1/2)$ for all $\tau>0$. In this case, the difference between these two series is 
%     $\Theta_2(\tau,0)-\Theta_2(\tau,1/2)=\sum_{z\in\Z}e^{(-{\tau/4}) |z|^{2+\varepsilon}+\pi i z}.$ So, we define $s_\tau(z):= e^{(-\tau/4)\cdot z^2+\pi i z}\in L^1(\R)$ and use completing the square to calculate its Fourier transform
% 	$$
% 	\hat{s}_{\tau}(x) = \int_{-\infty}^{\infty} e^{-2\pi i  xz} s_\tau(z)dz = \int_{-\infty}^{\infty} e^{(-\tau/4)\cdot z^2+\pi i z -2\pi i  xz} dz = \sqrt{\frac{4\pi}{\tau}} \cdot e^{-(4\pi^2/\tau)\cdot(x-1 / 2)^2}\in L^1(\R).
% 	$$
% 	By the Poisson summation formula again, it is immediate that for all $\tau>0$, 
% 	$$
% 	\Theta_2(\tau,0)-\Theta_2(\tau,1/2) = \sum_{z\in\Z} s_\tau(z) = \sum_{x\in\mathbb{Z}}\hat{s}_{\tau}(x) = \sum_{x\in\mathbb{Z}} \sqrt{\frac{4\pi}{\tau}} \cdot e^{-(4\pi^2/\tau)\cdot(x-1 / 2)^2} >0.
% 	$$
% \end{remark}

\begin{theorem}
    \label{thm:SvpIntegerGadget}
    For any $p \in (2, \infty),~\sigma > 1$, there exists constants $\delta \in (0, 1/2)$, $\phi_0, \phi_1 > 1$ and $C_r>0$, such that for any $n \in \Z^+$, if $r :=C_r \cdot {n^{1/p}}$ then the following holds for the integer lattice. 
    \begin{align}
        \nump{\Z^{n}, (1-\delta)^{1/p} r, \frac{1}{2}\vec{1}_{n}} \geq \max\bigg\{ &\phi_0^{n - o(n)} \nump{\Z^{n}, r, \vec{0}}, \notag \\
        &\phi_1^{n - o(n)} \nump{\Z^{n}, (1-\delta\sqrt{\sigma})^{1/p} r, \frac{1}{2} \vec{1}_{n}} \bigg\}.
        \label{eqn:SvpIntegerGadgetCond}
    \end{align} 
     The constants $\delta, \phi_0, \phi_1, C_r$ only depend  upon $p$ and can be efficiently computed with required precision, from $p$. 
\end{theorem}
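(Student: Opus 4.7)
The plan is to leverage \cref{lem:thetaProp}, which for any $p \geq 2 + 10^{-7}$ supplies a $\tau^* > 0$ with $\Theta_p(\tau^*, 1/2) > \Theta_p(\tau^*, 0)$, together with the near-tight counting estimate of \cref{thm:theta_gives_good_approx}. I fix such a $\tau^*$ and set $m^* := \mu_p(\tau^*, 1/2)$, $C_r := (m^*)^{1/p}$, so that $r = C_r n^{1/p}$ satisfies $r^p/n = m^*$. The map $\tau \mapsto \mu_p(\tau, 1/2)$ is smooth and strictly decreasing on $(0,\infty)$ (its derivative equals the negative variance of $|X|^p$ under $D_p(\tau, 1/2)$), so for each small $\delta > 0$ I can uniquely solve $\mu_p(\tau_\delta, 1/2) = (1-\delta) m^*$ and $\mu_p(\tau'_\delta, 1/2) = (1-\delta \sqrt{\sigma}) m^*$, yielding $\tau^* < \tau_\delta < \tau'_\delta$ (the second inequality uses $\sigma > 1$), with $\tau_\delta, \tau'_\delta \to \tau^*$ as $\delta \to 0$.

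For the first lower bound, I apply the lower bound in \cref{thm:theta_gives_good_approx} at parameter $\tau_\delta$ and target $\tfrac{1}{2}\vec{1}_n$; the radius there is $\mu_p(\tau_\delta, \tfrac{1}{2}\vec{1}_n)^{1/p} = (n(1-\delta)m^*)^{1/p} = (1-\delta)^{1/p} r$, giving
\[
\nump{\Z^n, (1-\delta)^{1/p} r, \tfrac{1}{2}\vec{1}_n} \;\geq\; \exp\bigl(\tau_\delta (1-\delta) n m^* - C^* \sqrt{n}\bigr) \cdot \Theta_p(\tau_\delta, 1/2)^n.
\]
For $\nump{\Z^n, r, \vec 0}$ I use \cref{lemma:theta0} at the (suboptimal but harmless) parameter $\tau^*$, obtaining $\nump{\Z^n, r, \vec 0} \leq \exp(\tau^* n m^*) \cdot \Theta_p(\tau^*, 0)^n$. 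Dividing, the per-coordinate base of the ratio tends as $\delta \to 0$ to $\Theta_p(\tau^*, 1/2)/\Theta_p(\tau^*, 0) > 1$ (the exponential prefactor tends to $1$ since $\tau_\delta \to \tau^*$); by continuity this base exceeds some $\phi_0 > 1$ for all sufficiently small $\delta$, and the $\exp(-C^*\sqrt{n})$ slack is absorbed by the $\phi_0^{n - o(n)}$ form.

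For the second bound, \cref{lemma:theta0} applied at $\tau'_\delta$ gives
\[
A_1 := \nump{\Z^n, (1 - \delta\sqrt{\sigma})^{1/p} r, \tfrac{1}{2}\vec{1}_n} \;\leq\; \exp\bigl(\tau'_\delta (1-\delta\sqrt{\sigma}) n m^*\bigr) \cdot \Theta_p(\tau'_\delta, 1/2)^n.
\]
Define $h(x) := \tau(x) \cdot x + \log \Theta_p(\tau(x), 1/2)$ with $\tau(x)$ determined by $\mu_p(\tau(x), 1/2) = x$; an envelope computation yields $h'(x) = \tau(x) > 0$, so $h$ is strictly increasing. Writing $G$ for the left-hand side of the theorem, the lower bound from the previous paragraph and the present upper bound combine to
\[
\tfrac{1}{n} \log (G / A_1) \;\geq\; h\bigl((1-\delta) m^*\bigr) - h\bigl((1 - \delta\sqrt{\sigma}) m^*\bigr) - C^*/\sqrt{n},
\]
and since $(1-\delta) m^* > (1 - \delta\sqrt{\sigma}) m^*$ the difference equals $\int_{(1-\delta\sqrt\sigma)m^*}^{(1-\delta)m^*} \tau(x)\,dx > 0$, yielding $\phi_1 > 1$. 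Choosing any $\delta \in (0, 1/2)$ small enough that both $\phi_0, \phi_1 > 1$ (and that $(1-\delta\sqrt\sigma) m^* > (1/2)^p$, so $\tau'_\delta$ is finite) completes the construction. All constants reduce to values of $\Theta_p$ and $\mu_p$, which are rapidly convergent series in $p$ and thus efficiently approximable to arbitrary precision; the main delicacy is simply that a single $\delta$, independent of $n$, must simultaneously secure both inequalities, which is a one-dimensional continuity argument once $\tau^*$ is in hand.
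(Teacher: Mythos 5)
Your proof is correct, but it takes a genuinely different route from the paper's. The paper fixes a \emph{single} parameter $\tau$ (from \cref{lem:thetaProp}) and uses it in all three counting bounds, absorbing the slack by instead making $C_r$ depend on $\delta$ (namely $C_r^p = \mu_p(\tau,1/2)/(1-\delta)$, so the nominal radius $r$ is already "inflated" relative to the sweet spot of \cref{thm:theta_gives_good_approx}). The paper then reads off $\phi_0 = \rho \exp(-\tau\delta C_r^p)$ and $\phi_1 = \exp(\tau\delta C_r^p(\sqrt\sigma-1))$ directly, choosing $\delta$ small enough that $\phi_0 > 1$. You instead keep $C_r = \mu_p(\tau^*,1/2)^{1/p}$ fixed and vary $\tau$: you re-solve $\mu_p(\tau_\delta,1/2) = (1-\delta)m^*$ and $\mu_p(\tau'_\delta,1/2) = (1-\delta\sqrt\sigma)m^*$ so that each application of \cref{thm:theta_gives_good_approx} and \cref{lemma:theta0} sits at its own optimal parameter, then pass to the $\delta \to 0$ limit by continuity for the first bound and use the envelope identity $h'(x) = \tau(x)$ for the second. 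Your version is arguably more conceptual — the positivity of $\phi_1$ becomes the positivity of $\int_{(1-\delta\sqrt\sigma)m^*}^{(1-\delta)m^*} \tau(x)\,dx$, and $C_r$ has a cleaner definition — at the cost of some extra analytic machinery (monotonicity of $\tau \mapsto \mu_p$, the envelope computation, and the side condition $(1-\delta\sqrt\sigma)m^* > (1/2)^p$ to keep $\tau'_\delta$ finite). The paper's version is more elementary but less transparent about where the exponential gain comes from. Both use the same lemmas and the same high-level idea (exponential separation per coordinate via the theta function), so this is a variant rather than a new argument.
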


\begin{proof}
Fix any $p$ and $\sigma$. Fix $\tau > 0$ as in \cref{thm:thetaProp}. Then define 
$$\rho:=\frac{\Theta_p(\tau,1/2)}{\Theta_p(\tau,0)}>1.$$ 
Notice that the precision to which we would approximate $\rho$ would be independent of the input size, so such a $\rho$ is efficiently computable in constant time. 
For a $\delta \in (0, 1/2)$ to be determined in what follows, fix 
$$C_r := \frac{\mu_p(\tau,1/2)^{1/p}}{(1-\delta)^{1/p}}.$$
 By \cref{thm:theta_gives_good_approx}, there exists a constant $C^*>0$ such that,

\begin{equation}
\label{eq:bd1}
\nump{\Z^{n},(1-\delta)^{1/p}\cdot r,\frac{1}{2}\vec{1}_{n}} \geq \exp\left(-C^* \sqrt{n}\right)\cdot \exp(\tau n \cdot \mu_p(\tau,1/2))\cdot \Theta_p(\tau , 1/2)^{n}.
\end{equation}

By \cref{lemma:theta0}, 
\begin{equation}
\label{eq:bd2}
\nump{\Z^{n},r,\vec{0}} \leq \exp\paren{\tau \cdot\frac{n ~\mu_p(\tau,1/2)}{1-\delta}}\cdot \Theta_p(\tau,0)^{n}.
\end{equation}
% \[N_p(2\Z^{m^\dagger},r^\dagger,0)\leq \exp\paren{\tau \cdot\frac{m^\dagger \cdot \mu_p(\tau,1/2)}{1-\delta}}\cdot \Theta_p(\tau,0)^{m^\dagger}\]
From \cref{eq:bd1,eq:bd2}, we get that 
\[\frac{\nump{\Z^{n},(1-\delta)^{1/p}\cdot r,\frac{1}{2}\vec{1}_{n}}}{\nump{\Z^{n},r,\vec{0}}} \geq \rho^{n} \cdot \exp\left(-C_r^p\tau\delta n -o(n)\right).\]
Because $\rho>1$, we can fix $\delta>0$ such that,
$$\delta < \frac{\log \rho}{\tau \mu_p(\tau, 1/2) + \log \rho},$$
which implies that $\rho \cdot \exp(-\tau \delta C_r^p)>1$. 
Then set $\phi_0:=\rho \cdot \exp(-\tau \delta C_r^p)$. This implies that, 
$$\frac{\nump{\Z^{n},(1-\delta)^{1/p}\cdot r,\frac{1}{2}\vec{1}_{n}}}{\nump{\Z^{n},r, \vec{0}}} \geq \phi_0^{n -o(n)}.$$

For the second inequality, again use \cref{lemma:theta0} to get,
\begin{equation}
\label{eq:bd3}
N_p\left(\Z^{ n},(1-\delta\sqrt{\sigma})^{1/p}\cdot r,\frac{1}{2}\vec{1}_{n}\right)\leq \Theta_p(\tau,1/2)^{n} \cdot \exp\paren{\tau \cdot n C_r^p \cdot (1-\delta\sqrt{\sigma})}.
\end{equation}
Then set  $\phi_1:=\exp \paren{\tau\delta C_r^p\cdot(\sqrt{\sigma}-1)}>1$ which implies,
\[\frac{N_p\left(\Z^{n},(1-\delta)^{1/p}\cdot r,\frac{1}{2}\vec{1}_{n}\right)}{N_p \left(\Z^{ n},(1-\delta\sqrt{\sigma})^{1/p}\cdot r,\frac{1}{2}\vec{1}_{n} \right)}\geq \exp \paren{\tau\delta n C_r^p\cdot(\sqrt{\sigma}-1) -o(n)}=\phi_1^{n-o(n)}. \]
\end{proof}

\subsection{From \texorpdfstring{$\lin_\varepsilon$ to $\svp_{p, \gamma}$}{lin to svp}}
%%%%%%%%%%%%%%%%%%%%%%%%%%%%%%%%%%%%%%%%%%%%%%%%
In the following, we  start with a $\cvp_{p, \gamma'}$ instance with the special property that twice the target vector is in the lattice, and find a lattice $\cL$ and a parameter $r$ such that the number of vectors in $\cL$ of length at most $r$ are $2^{\Omega(n)}$ times more for a YES instance than for a NO instance. 
\begin{lemma}
    \label{lem:cvpToSvp}
    For any $p \in [1, \infty)$, suppose $(B', \vec t', r')$ is a $\cvp_{p, \gamma'}$ instance, such that $B' \in \Z^{m \times m'}, \vec t' \in \lat(B') / 2$ and $\vec t' \in \Z^m$. Then for any $d \in \Z^+ $, given a lattice $\ldgr\subseteq \R^d$ with basis $B^\dagger \in \R^{d\times d'}$, and a target $\tdgr \in \R^d$, there exists an efficiently computable matrix $B$ that generates a lattice $\lat(B)$ in $m + d + 1$ dimensions such that for any constants $\gamma'>1$ and any radii $r_G$,$r_A$,
    \begin{enumerate}
        \item If $\di_p\paren{\vec{t}',\cL(B')}\leq r'$  then,
        \begin{equation}
            \label{eqn:cvpToSvpYes}
        \nump{\lat, r_G, \vec 0} \geq \nump{\ldgr, (r_G^{p} - 1 - r'^p)^{1/p},\tdgr},
    \end{equation}
        \item If  $\di_p\paren{\vec{t}',\cL(B')}> \gamma'r'$ then,
        \begin{align}
            \label{eqn:cvpToSvpNo}
        \nump{\lat, r_A, \vec 0} \leq(r_A +4) \cdot \nump{\Z^m, r_A, \vec 0} \cdot \bigg(&\max_{\cl_1 \in \Z}\left\{\nump{\ldgr,~(r_A^p - (\gamma' r')^p)^{1/p}, \cl_1\cdot \tdgr} \right\} + \notag\\ 
        &\max_{\cl_0 \in 2\Z}\left\{  \nump{\ldgr, r_A, \cl_0 \cdot \vec{t}^\dagger} \right\}\bigg).
    \end{align}
    \end{enumerate}
\end{lemma}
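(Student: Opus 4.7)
The plan is to use the classical Khot-style block construction. Given the $\cvp$ instance $(B', \vec t', r')$ and the gadget $(B^\dagger, \vec t^\dagger)$, I would define the lattice via the generating matrix
\[
B \;:=\; \begin{pmatrix} B' & \vec t' & 0 \\ \vec 0^\top & 1 & \vec 0^\top \\ 0 & \vec t^\dagger & B^\dagger \end{pmatrix},
\]
which lives in $\R^{m+1+d}$. A generic lattice vector then takes the form
\[
\vec v \;=\; \bigl(B'\vec x + k\vec t',\; k,\; B^\dagger \vec y + k\vec t^\dagger\bigr)
\]
for $\vec x \in \Z^{m'},\ k \in \Z,\ \vec y \in \Z^{d'}$. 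Writing $\vec w \in \Z^{m'}$ for any integer vector with $B'\vec w = 2\vec t'$ (which exists by hypothesis $\vec t' \in \cL(B')/2$), the key structural fact is that for odd $k$ the top block $B'\vec x + k\vec t'$ ranges over the coset $\vec t' + \cL(B')$, while for even $k = 2\ell$ it ranges over $\cL(B')$ itself (by substituting $\vec x \mapsto \vec x + \tfrac{k-1}{2}\vec w$ or $\vec x \mapsto \vec x + \ell\vec w$ respectively).

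For the \textsc{Yes} case of the $\cvp$ instance, I would fix the ``label'' $k=1$ and pick some $\vec x_0$ with $\|B'\vec x_0 + \vec t'\|_p \leq r'$ (using symmetry of the distance). Then for every $\vec y$ with $\|B^\dagger \vec y + \vec t^\dagger\|_p \leq (r_G^p - 1 - r'^p)^{1/p}$, the resulting lattice vector has $\ell_p$ norm at most $r_G$. The number of such $\vec y$'s equals $\nump{\cL^\dagger, (r_G^p - 1 - r'^p)^{1/p}, -\vec t^\dagger}$, which coincides with the right-hand side of~(\ref{eqn:cvpToSvpYes}) by symmetry of $\cL^\dagger$.

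For the \textsc{No} case, I would stratify short lattice vectors by the label $k$, noting that $|k|\le r_A$ forces at most $\lfloor r_A\rfloor + 2$ odd and at most $\lfloor r_A\rfloor+2$ even values. For each odd $k$, the promise $\di_p(\vec t', \cL(B')) > \gamma' r'$ makes the top block have $\ell_p$ norm $>\gamma' r'$, forcing $\|B^\dagger \vec y + k\vec t^\dagger\|_p \leq (r_A^p - (\gamma' r')^p)^{1/p}$; the number of valid top components is bounded by $\nump{\cL(B'), r_A, k\vec t'} \leq \nump{\Z^m, r_A, \vec 0}$, using that $B' \in \Z^{m\times m'}$ and $k\vec t' \in \Z^m$. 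For each even $k$, the top block can collapse to $\vec 0$, so I only bound the number of top components by $\nump{\Z^m, r_A, \vec 0}$ and the number of valid $\vec y$ by $\nump{\cL^\dagger, r_A, k\vec t^\dagger}$ (using symmetry again). Summing these contributions and pulling out the worst choice of $\ell_0 \in 2\Z$ and $\ell_1 \in \Z$ yields the claimed bound~(\ref{eqn:cvpToSvpNo}).

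The only mildly delicate step is the bookkeeping in the \textsc{No} case: one has to verify that the number-of-cosets factor $(r_A+4)$ absorbs both the odd and even ranges of $k$, and that replacing $\max_{\text{odd }\ell}$ by $\max_{\ell\in\Z}$ (and similarly for even) only loosens the upper bound. Everything else is a routine block-matrix and triangle-inequality computation using the structural observation about the parity of $k$; the lattice construction itself is clearly efficiently computable, completing the reduction.
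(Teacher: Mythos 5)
Your proof is correct and follows essentially the same Khot-style block construction as the paper; the only differences are cosmetic — you place the "coefficient-of-$\vec t'$" column in a different position with a $+$ sign (requiring two invocations of the point-symmetry of lattices that the paper's $-\vec t'$, $-\vec t^\dagger$ convention avoids), and in the No case you count $k$ directly over the full range $[-\lfloor r_A\rfloor,\lfloor r_A\rfloor]$ instead of exploiting $|S_{\ell}|=|S_{-\ell}|$ to halve the range, but this yields the same $(r_A+4)$-type factor. Both the parity argument reducing $k\vec t'$ modulo $\cL(B')$ using $2\vec t'\in\cL(B')$ and the factorization of the count into a $\Z^m$-ball count times a gadget-ball count match the paper exactly.
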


\begin{proof}
% \dnote{The input is given in terms of the generating set. Also, $B$ will define the generating set of $\cL$.}
    % Let $B'$ be a basis for $\lat'$.
    
    Define
    \begin{equation}
        \label{eqn:cvpToSvpBasis}
    B := \begin{pmatrix}
        B' & 0 & -\vec t' \\
        0 & \dgr{B} & - \vec{t}^{\dagger} \\
        0 & 0 & 1
    \end{pmatrix}. 
    \end{equation}
    Then $\lat = \lat(B)$ is a lattice in $m + d + 1$ dimensions, and $B$ is clearly efficiently computable. 
    Suppose that the input $\cvp_{p, \gamma'}$ instance was a \yes instance. This implies that $\dist(\cL', \vec t') \leq r'$. Then $\exists \vec x \in \Z^{m'}$ such that $\norm{B'\vec x - \vec t'}_p \leq r'$. 
    For any $r_G$, consider a vector $\vec v \in \ballp{\ldgr, (r_G^p - 1 - r'^p)^{1/p}, \tdgr}$ such that $\vec v = \dgr B \vec y$ for some $\vec y \in \Z^{d'}$. Then the vector $\vec u := B \cdot (\vec x, \vec y, 1)$ satisfies
    $$ \normpp{\vec u} \leq \normpp{B'\vec x - \vec t} + \normpp{\dgr B \vec y - \tdgr} + 1 \leq r_G^p. $$
    Therefore, $\vec u \in \ballp{\lat, r_G, \vec 0}$. Since this mapping between $\vec v$ and $\vec u$ is injective, we have that
    $$ \nump{\lat, r_G, \vec 0} \geq \nump{\dgr B, (r_G^{p} - 1 - r'^p)^{1/p}, \tdgr}. $$
    This completes the proof of item 1. 
    %%%%%%%%%%%%%%%%%%%%%%%%%%%%%%%%%%%%%
    Next, suppose that the input $\cvp_{p, \gamma'}$ instance was a \no instance. This implies that $\dist(\cL', \vec t') > \gamma' r'$. 
    For any $r_A$, let $\vec u \in \ballp{\lat, r_A, \vec 0}$ such that $\vec u = (B'\vec x, B^\dagger \vec y , \cl)$ for some $\vec x \in \Z^{m'},~\vec y\in\Z^{d'}$ and $\cl\in \Z$. We have 
    \[ \normpp{\vec u} \leq \normpp{B'\vec x - \cl\vec t'} + \normpp{\dgr B \vec y - \cl\tdgr} + \cl^p \leq r_A^p, \]
    which implies that 
    \[ \normpp{B'\vec x - \cl \vec t'} \leq r_A^p - \cl^p \leq r_A^p; \]
    \begin{equation}
    \label{eqn:coeffForGadget}
     \normpp{\dgr B \vec y - \cl \vec t^{\dagger}} ~\leq ~r_A^p - \cl^p - (\dist_p(\lat', \ell \vec t'))^p ~\leq ~r_A^p - (\dist_p(\lat', l \vec t'))^p.
     \end{equation}
    Since $\lat' \subseteq \Z^m$, and $\cl \vec t' \in \Z^m$, we have that the number of values that $B'\vec x$ can take is 
    \begin{align}
        \nump{\lat', r_A, \cl \vec t'} \notag &\leq \nump{\Z^m, r_A, \cl \vec t'} && (\because \lat' \subseteq \Z^m) \notag \\
        &= \nump{\Z^m, r_A, \vec 0} && (\because \cl\vec t' \in \Z^m) \label{eqn:svptemp1}.
    \end{align}
    
    Let $S_\cl$ be the set of points in $\lat$ of length at most $r_A$ such that their last coordinate is $\cl$. Observe that $\abs{S_\cl} = \abs{S_{-\cl}}$. 
    Then, 
    $$ \nump{\lat, r_A, \vec 0} \leq \sum_{\cl=-\floor{r_A}}^{\floor{r_A}} \abs{S_\cl} \leq 2 \sum_{\cl = 0}^{\floor{r_A}} \abs{S_\cl}. $$

    \textbf{Suppose that $\cl$ is even.} Then $\vec t' \in \lat' / 2 \implies \dist_p(\lat', \cl\vec t') = 0$. Thus, we have that the number of possible values $\vec B^\dagger y$ can take is at most,
    \begin{align}
        \label{eqn:svptemp2}
        \nump{\ldgr, r_A, \cl \tdgr}.
    \end{align}
    The number of possible vectors $\vec u$ in this case is at most the product of \cref{eqn:svptemp1} and \cref{eqn:svptemp2},
    \begin{align}
        \label{eqn:svpslodd}
        \abs{S_l} \leq \nump{\Z^m, r_A, \vec 0} \nump{\ldgr, r_A, \cl\vec {t}^\dagger}.
    \end{align}
    \textbf{Suppose that $\cl$ is odd.} Then $\vec t' \in \lat' / 2 \implies \cl\vec t' \in \lat' + \vec t'$. Let $\cl = 2z+1$. We have that
    $$\dist_p(\lat', (2z+1)\vec t') = \dist_p(\lat', \vec v' + \vec t') = \dist_p(\lat', \vec t'),$$
    where $\vec v' \in \lat'$. 
    Therefore, the number of possible values $B^\dagger \vec y$ can take while satisfying \cref{eqn:coeffForGadget} is at most 
    % \rnote{Can you elaborate more on this step}
    \begin{align}
        \label{eqn:svptemp3}
        % \nump{\ldgr, (r_A^p - \cl^p - (\gamma' r')^p)^{1/p}, \cl \tdgr} \leq 
        \nump{\ldgr, (r_A^p - (\gamma' r')^p)^{1/p}, \cl \tdgr}.
    \end{align}
    The number of possible vectors $\vec u$ in this case is at most the product of \cref{eqn:svptemp1} and \cref{eqn:svptemp3},
    \begin{align}
        \label{eqn:svpsleven}
        \abs{S_l} \leq \nump{\Z^m, r_A, \vec 0} \nump{\ldgr, (r_A^p - (\gamma' r')^p)^{1/p}, \cl\tdgr}.
    \end{align}
    Thus summing over even and odd $\cl$ from \cref{eqn:svpslodd} and \cref{eqn:svpsleven} we get an upper bound on $\nump{\lat, r_A, \vec 0}$ as,
    \begin{align*}
        &\leq  2\cdot \nump{\Z^m, r_A, \vec 0} { \sum_{j=0}^{\lceil r_A/2 \rceil}\paren{\nump{\ldgr,~(r_A^p - (\gamma' r')^p)^{1/p}, (2j+1)\cdot \tdgr} + \nump{\ldgr, r_A, (2j) \cdot \vec{t}^\dagger} }}\\
        &\leq(r_A +4) \cdot \nump{\Z^m, r_A, \vec 0} \cdot \paren{\max_{\cl_1 \in \Z}\left\{\nump{\ldgr,~(r_A^p - (\gamma' r')^p)^{1/p}, \cl_1\cdot \tdgr} \right\} + \max_{\cl_0 \in 2\Z}\left\{  \nump{\ldgr, r_A, \cl_0 \cdot \vec{t}^\dagger} \right\}}
    \end{align*}
    This completes the proof of item 2.
\end{proof}

% \dnote{We should have a corollary here saying that the number of short vectors in the YES case are exponentially more than the number of short vectors in the NO case. A lot of the calculations, particularly Claim 4.5 in the following theorem will appear in this corollary. Also, don't throw this to the appendix. }

%%%%%%%%%%%%%%%%
%%%%%%%%%%%%%%%%
%%%%%%%%%%%%%%%%
%%%%%%%%%%%%%%%%

This gives us the following lemma, that says that for any $\lin_\varepsilon$ instance, we can efficiently compute a lattice such that the number of short vectors in the \yes case are exponentially more than the number of short vectors in the \no case. 

\begin{lemma}
    \label{lem:AGRelation}
    For all $p \in (2,\infty)$ , there is an efficient algorithm that takes as an input a $\lin_\varepsilon$ instance $(M, \vec v)$ over $\F_2$ in $n$ variables and $m = \order{n}$ equations, and outputs $(B, r, \gamma, A, G)$ where $B$ generates a lattice $\lat$ in $\order{n}$ dimensions, $\gamma > 1$ is a constant, $r > 0$ is a radius, and constants $A,~G$ are such that $G \geq 2^m A$ and 
    \begin{enumerate}
        \item if it was a \yes instance of $\lin_\varepsilon$ then, $\nump{\lat, r, \vec 0} \geq G$;
        \item if it was a \no instance of $\lin_\varepsilon$ then, $\nump{\lat, \gamma r, \vec 0} \leq A$.
    \end{enumerate}
\end{lemma}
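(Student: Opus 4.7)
The plan is to compose three existing ingredients: the deterministic $\lin_\varepsilon$-to-$\cvp_{p,\gamma'}$ reduction of \cref{thm:LinCvpReduction}, the locally dense integer gadget of \cref{thm:SvpIntegerGadget_new}, and the combiner of \cref{lem:cvpToSvp}.

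Given an input $\lin_\varepsilon$ instance $(M, \vec v)$, first apply \cref{thm:LinCvpReduction} to obtain a $\cvp_{p,\gamma'}$ instance $(B', \vec t', r')$ with $B' = [M \mid 2 I_m]$, $\vec t' = \vec v \in \Z^m$, $r'^p = 3m/8$, and some gap $\gamma' > 1$. Because the columns of $2 I_m$ lie in $\lat(B')$, we have $2\Z^m \subseteq \lat(B')$, which gives $\vec t' \in \lat(B')/2$ and verifies the hypothesis of \cref{lem:cvpToSvp}. Next, fixing a small $\sigma > 1$, let $t, \delta, \phi_0, \phi_1, C_r$ be the gadget constants furnished by \cref{thm:SvpIntegerGadget_new}. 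For a constant $c > 0$ to be chosen below, set $d := c m$, take $\ldgr := \Z^d$ (with basis $I_d$) and target $\tdgr := t \cdot \vec 1_d$, and write $r_0 := C_r d^{1/p}$. Feed everything into \cref{lem:cvpToSvp} to produce the lattice $\lat$ in $m + d + 1$ dimensions.

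The two radii in \cref{lem:cvpToSvp} are chosen to snap onto the gadget inequalities. Take $r^p := (1 - \delta)\, r_0^p + 1 + r'^p$ and define $\gamma > 1$ by $(\gamma r)^p := (1 - \delta \sqrt{\sigma})\, r_0^p + (\gamma')^p r'^p$, so that $(r^p - 1 - r'^p)^{1/p} = (1-\delta)^{1/p} r_0$ and $((\gamma r)^p - (\gamma' r')^p)^{1/p} = (1-\delta\sqrt\sigma)^{1/p} r_0$. Setting $G := \nump{\Z^d, (1-\delta)^{1/p} r_0, \tdgr}$, \cref{eqn:cvpToSvpYes} immediately yields $\nump{\lat, r, \vec 0} \geq G$ in the \yes case. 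For the \no case, \cref{thm:SvpIntegerGadget_new} gives $\max_{\cl_1 \in \Z} \nump{\Z^d, (1-\delta\sqrt\sigma)^{1/p} r_0, \cl_1 \tdgr} \leq G\, \phi_1^{-(d - o(d))}$, and, provided one also arranges $\gamma r \leq r_0$ (which amounts to $c \geq 3(\gamma')^p / (8 \delta \sqrt{\sigma} C_r^p)$), the monotonicity in the radius yields $\max_{\cl_0 \in 2\Z} \nump{\Z^d, \gamma r, \cl_0 \tdgr} \leq G\, \phi_0^{-(d - o(d))}$. Combining these with \cref{eqn:cvpToSvpNo} together with the crude bound $\nump{\Z^m, \gamma r, \vec 0} \leq 2^{\kappa m}$ obtained from \cref{lemma:theta0} (with $\kappa = \kappa(p, (\gamma r)^p/m)$ an explicit constant), one gets $A := \nump{\lat, \gamma r, \vec 0} \leq \operatorname{poly}(m) \cdot 2^{\kappa m} \cdot G \cdot \phi^{-(d - o(d))}$ where $\phi := \min(\phi_0, \phi_1)$.

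All that remains is to choose $c$ and $\sigma$ so that $G \geq 2^m A$, i.e.\ $c \log \phi \geq (1 + \kappa) \log 2 + o(1)$, while keeping $\gamma > 1$; the latter, for large $m$, reduces to $c\, \delta (\sqrt{\sigma} - 1) C_r^p < \tfrac{3}{8}\bigl((\gamma')^p - 1\bigr)$. This simultaneous parameter fit is the main obstacle. Since $\log \phi_1 = \delta C_r^p(\sqrt\sigma - 1)$ vanishes as $\sigma \to 1$, the product $c \log \phi_1$ is squeezed from both sides, and one must check that the window $(1 + \kappa) \log 2 < \tfrac{3}{8}\bigl((\gamma')^p - 1\bigr)$ is achievable. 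It is: taking $\sigma$ close to $1$ drives $\kappa$ down towards its $c$-independent floor coming from the $(\gamma')^p r'^p$ contribution to $(\gamma r)^p$, and $c$ can then be chosen correspondingly large. Once this balance is struck, every other step is routine: the composition of the three reductions is polynomial time, and all constants are functions of $p$ only and efficiently computable per \cref{thm:SvpIntegerGadget_new} and \cref{lemma:theta0}.
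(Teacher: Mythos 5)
Your high-level decomposition matches the paper's (compose the $\lin_\varepsilon\to\cvp$ reduction, the integer gadget of \cref{thm:SvpIntegerGadget_new}, and the combiner \cref{lem:cvpToSvp}, then set radii to snap onto the gadget inequalities). But there is a genuine gap in your parameter analysis: you embed the \emph{unscaled} gadget $\ldgr = \Z^d$ with $B^\dagger = I_d$, whereas the paper embeds the \emph{scaled} lattice $\alpha\Z^d$ with $B^\dagger = \alpha I_d$ for a carefully chosen $\alpha$. This scaling is not cosmetic; without it your constraint system is infeasible.

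Here is why the unscaled version breaks. With your radii, for large $m$ one has $\frac{(\gamma r)^p}{m} = (1-\delta\sqrt{\sigma})C_r^p c + \tfrac{3}{8}(\gamma')^p$, which is increasing in $c$ and, at the smallest admissible value $c_{\min} = \frac{3(\gamma')^p}{8\delta\sqrt{\sigma}C_r^p}$ (forced by $\gamma r \le r_0$), evaluates to $\frac{3(\gamma')^p}{8\delta\sqrt{\sigma}}$. Since $\delta < 1/2$, this is at least $\frac{3}{4} > 1/2$, and in fact $\gamma r \gtrsim m^{1/p}$, so $\nump{\Z^m, \gamma r, \vec 0} \ge 3^m$ and hence $\kappa \ge \log_2 3$. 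Therefore $(1+\kappa)\log 2 \ge \log 6 \approx 1.79$. Your own constraint for $\gamma > 1$ gives $c\,\delta C_r^p(\sqrt{\sigma}-1) < \tfrac{3}{8}\bigl((\gamma')^p - 1\bigr) = \varepsilon$, so $c\log\phi_1 < \varepsilon$, and because $\varepsilon$ coming from \cref{thm:satToLinRed} is a tiny constant, the window $(1+\kappa)\log 2 < \varepsilon$ you identify as needing to be ``achievable'' is in fact \emph{not} achievable. Your claim that ``taking $\sigma$ close to $1$ drives $\kappa$ down towards its $c$-independent floor $\tfrac{3}{8}(\gamma')^p$'' is incorrect: the floor is actually $\frac{3(\gamma')^p}{8\delta\sqrt{\sigma}}$, which carries an extra $1/\delta$ factor that does not vanish as $\sigma\to 1$. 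The paper's $\alpha$-scaling with $\alpha^p = \frac{2r'^p}{\delta r^{\dagger p}}$ makes $(\alpha r^\dagger)^p = \frac{2r'^p}{\delta} = O(m)$ \emph{independent of $d$}, so $\gamma$ and $\kappa$ are decoupled from $c$, and $d$ can be taken as large as needed (via $d = \max\{\lceil\log_{\phi_0}(3K)\rceil, \lceil\log_{\phi_1}(3K)\rceil\}\,m$ with $\sigma$ fixed to $\gamma'^p$) to win the $G \ge 2^m A$ race. You need to add this scaling to your embedding and rework the radius bookkeeping, as in \cref{claim:SvpGadgetParameter}.
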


\begin{proof}
    Fix a $p$. We describe the algorithm followed by correctness. 

    \textbf{Algorithm.}
    Define $\gamma' := \paren{1 + \frac{8\varepsilon}{3}}^{1/p}$.
    On input a $\lin_\varepsilon$ instance $M \in \F_2^{m \times n},~\vec v \in \F_2^m$, the reduction computes a $\cvp_{p, \gamma'}$ instance $(B', \vec t', r')$ using the algorithm from \cref{thm:LinCvpReduction}.
    Recall from the proof of \cref{thm:LinCvpReduction} that if this instance is a \yes instance, then $\dist_p(\lat(B'), \vec t') \leq r'$, and if this instance is a \no instance then $\dist_p(\lat(B'), \vec t') \geq (\gamma' r')$. 
    Also recall that by construction in \cref{thm:LinCvpReduction}, $B' \in \Z^{m \times (n+m)}$, $r' = {(3m/8)^{1/p}}$, $2\Z^m \subseteq \lat(B') \subseteq \Z^m$ and $\vec t' \in \lat(B') / 2$ as $\vec t' \in \Z^m$.
    Therefore, this instance satisfies the conditions for \cref{lem:cvpToSvp} to hold. 
    The reduction then applies the transformation in \cref{lem:cvpToSvp} with $r_G=r$, $ r_A= \gamma r$, $B^\dagger= \alpha I_d$ and $\tdgr=(\alpha/2) \cdot \vec{1}_d$ on this $\CVP_{p,\gamma}$ instance, for a particular choice of efficiently computable parameters $\alpha$, $r$, $\gamma$ and $d$ (to be determined in what follows), and gets a generating set $B$ of a lattice $\lat$ in $m + d + 1$ dimensions. 
    Define,
    \begin{equation}
        \label{eqn:svpGood}
    G := \nump{\ldgr, (r^{p} - 1 - r'^p)^{1/p}, \tdgr},
\end{equation}
    \begin{align}
        \label{eqn:svpAnnoying}
        A := (\gamma r+4) \cdot \nump{\Z^m, \gamma r, \vec 0} \cdot \bigg(&\max_{\cl_1 \in \Z}\left\{\nump{\ldgr,~((\gamma r)^p - (\gamma' r')^p)^{1/p}, \cl_1\cdot \tdgr} \right\} + \notag \\
        &\max_{\cl_0 \in 2\Z}\left\{  \nump{\ldgr, \gamma r, \cl_0 \cdot \vec{t}^\dagger} \right\}\bigg). 
    \end{align}
    The reduction outputs $(B, \gamma, r, A, G)$. Efficiency is clear, as $A$ and $G$ can be approximated efficiently using the theta function. We will now prove the correctness. 

    \textbf{Correctness.}
    Notice that $G$ and $A$ are simply the RHS of \cref{eqn:cvpToSvpYes} and \cref{eqn:cvpToSvpNo} respectively. \Cref{eqn:svpGood} and \cref{eqn:cvpToSvpYes} imply that if the input was a \yes instance of $\lin_\varepsilon$, then $$\nump{\lat, r, \vec 0} \geq G.$$
    Similarly, \cref{eqn:svpAnnoying} and \cref{eqn:cvpToSvpNo} imply that if the input was a \no instance of $\lin_\varepsilon$, then $$\nump{\lat, \gamma r, \vec 0} \leq A.$$
    Let $\delta$ and $C_r$ be as in \cref{thm:SvpIntegerGadget} for the parameters $(p, \sigma=\gamma'^p)$. Fix $r, \gamma$ such that 
    $$r^p = \svprval; \quad \quad \gamma^p = \svpgammaval.$$
    By \cref{lemma:theta0}, for any $\gamma r,~\tau$, there exists a $K > 0$ which is independent of $m$, such that 

    \begin{align}
        \nump{\Z^{m}, \gamma r, \vec 0} &\leq \exp(\tau (\gamma r)^p) \Theta(\tau, \vec 0) \notag \\ 
        &= \exp(\tau (\gamma r)^p) \Theta(\tau, 0)^{m} \notag \\ 
        &\leq K^{m} && (\because r^p \in \order{m}).
    \label{eqn:numpKsvp}
    \end{align}

    Let $\phi_0, \phi_1$, be the constants from \cref{thm:SvpIntegerGadget} for the particular $p$.
    Set $$d = \max\left\{ \ceil{\log_{\phi_0}(3K)}, \ceil{\log_{\phi_1}(3K)} \right\} m,$$ and let $~\dgr r = C_r \cdot(d)^{1/p}$. Note that $d = \order{m}$.
    A simple but tedious calculation shows that the following is true.
    \begin{restatable}{claim}{claimSvpGadgetParam}
        \label{claim:SvpGadgetParameter}
        For any $p,\gamma', r'$, if we set 
        \begin{align*}
        \alpha^p = \svpalphaval;
        \quad \quad r^p = \svprval; 
        \quad \quad \gamma^p = \svpgammaval;
    \end{align*}
    then the following holds. 
    \begin{enumerate}
        \item $\left(r^{p} - 1 - r'^p\right)^{1/p} \geq (1-\delta)^{1/p} (\alpha r^{\dagger})$. \label{im:svpRadii1}
        \item $\gamma r \leq \alpha r^{\dagger}$ \label{im:svpRadii2}
        \item $\left((\gamma r)^p - (\gamma' r')^p \right)^{1/p}\leq (1 - \delta\sqrt{\gamma'^p})^{1/p} (\alpha r^{\dagger})$ \label{im:svpRadii3}
    \end{enumerate}
    \end{restatable}

    To see \cref{im:svpRadii1} notice that,
    \begin{align*}
        r^p-1-r'^p&= \left(1-\frac{\delta}{2} \right)\frac{ 2r'^p}{\delta} - r'^p =2\frac{(1-\delta)}{\delta}r'^p= (1-\delta)(\alpha r^\dagger)^p.
    \end{align*}
    To see \cref{im:svpRadii2}, notice that since $r'^p=\Omega({m})$, we can assume that $(1+\delta/100)\leq r'^p/10$. This implies,
    \begin{align*}
        (\gamma r)^p &\leq \left(1+\delta/100\right)\left(1+\left(1-\frac{\delta}{2}\right)\frac{2r'^p}{\delta}\right)\\
        &\leq \frac{ r'^p}{10}+\left(1+\delta/100\right)\left(\left(1-\frac{\delta}{2}\right)\frac{2r'^p}{\delta}\right)\\
        & \leq \paren{\frac{\delta}{10}+\paren{1+\frac{\delta}{100}}\paren{2-\delta}}\left(\frac{r'^p}{\delta}\right)\\
        &=\paren{2-\frac{22}{25}\delta-\frac{\delta^2}{100}}\paren{\frac{r'^p}{\delta}}\\
        &< \paren{\frac{2r'^p}{\delta}}\\
        &= (\alpha r^\dagger)^{p}.
    \end{align*}
    For \cref{im:svpRadii3}, use the fact $\gamma^p \leq 1+\delta\cdot \frac{(\sqrt{\gamma'}-1)^2}{2}$ and $\gamma'\geq 1$. Again, since $r'^p=\Omega({m})$ , without loss of generality we can assume that, $\frac{\delta\paren{\sqrt{\gamma'^p}-1}^2 r'^p}{2}\geq \gamma^p$. Hence we get,
    \begin{align*}
        (\gamma r)^p &\leq \paren{\frac{2r'^p}{\delta}}\paren{\paren{1-\frac{\delta}{2}}\paren{1+\delta\cdot \frac{(\sqrt{\gamma'^p}-1)^2}{2}}}+\gamma^p\\
        &= \paren{\frac{2r'^p}{\delta}} \paren{\paren{1-\delta\sqrt{\gamma'^p}}+\paren{\frac{\delta \gamma'^p}{2}+\frac{\delta^2 \sqrt{\gamma'^p
        }}{2}}-\paren{\frac{\delta(\delta+\delta \gamma'^p)}{4}}}+\gamma^p\\
        &= \paren{\frac{2r'^p}{\delta}} \paren{\paren{1-\delta\sqrt{\gamma'^p}} + \frac{\delta \gamma'^p}{2} -\frac{\delta^2\paren{\sqrt{\gamma'^p}-1}^2}{4}}+\gamma^p\\
        &=(1 - \delta\sqrt{\gamma'^p}) (\alpha r^{\dagger})^p+(\gamma'r')^p- \frac{\delta\paren{\sqrt{\gamma'^p}-1}^2 r'^p}{2}+\gamma^p\\
        &\leq (1 - \delta\sqrt{\gamma'^p}) (\alpha r^{\dagger})^p+(\gamma'r')^p .
    \end{align*}
This concludes the proof of the claim.
\Cref{im:svpRadii1} implies that 
        \begin{align}
        \label{eqn:GZBound}
        G = \nump{\alpha \Z^{d}, (r^{p} - 1 - r'^p)^{1/p}, {(\alpha/2)}\cdot \vec{1}_{d}} \geq \nump{\alpha \Z^{d}, (1-\delta)^{1/p}\alpha r^{\dagger}, {(\alpha/2)} \cdot\vec{1}_{d}}.
        \end{align}
        %%%%%%%%%%%%%%
        \Cref{im:svpRadii2,im:svpRadii3} and \cref{eqn:numpKsvp} imply that
        \begin{align*}
        \label{eqn:AZbound}
            A &= (\gamma r+4) \cdot \nump{\Z^m, \gamma r, \vec 0} \cdot \bigg(\max_{\cl_1 \in \Z}\left\{\nump{\ldgr,~((\gamma r)^p - (\gamma' r')^p)^{1/p}, \cl_1\cdot \tdgr} \right\} +  \max_{\cl_0 \in 2\Z}\left\{  \nump{\ldgr, \gamma r, \cl_0 \cdot \vec{t}^\dagger} \right\}\bigg) \notag \\
            &\leq  (\gamma r+4) K^m  \bigg( \max_{\cl_1 \in \Z} \left\{\nump{\alpha \Z^{d}, (1 - \delta\sqrt{\gamma'^p})^{1/p} \alpha r^{\dagger}, {(\alpha/2)}{\cl_1} \cdot \vec{1}_{d}}\right\}+ \max_{\cl_0 \in 2\Z} \left\{ \nump{\alpha \Z^{d}, \alpha r^{\dagger}, {(\alpha/2)}{\cl_0} \cdot \vec{1}_{d}} \right\}\bigg) \\
            &\leq  (\gamma r+4)K^m \paren{\dfrac{1}{\phi_1^{d}} + \dfrac{1}{\phi_0^{d}}}\nump{\alpha \Z^{d}, (1-\delta)^{1/p}\alpha r^{\dagger}, {(\alpha/2)} \cdot\vec{1}_{d}}\\
            &\leq (\gamma r +4)\paren{\frac{2}{3^m}}\cdot G
        \end{align*}
        For our choice of $d, \dgr r, \delta$, together with \cref{eqn:SvpIntegerGadgetCond}, and the fact that $r = \order{m^{1/p}}$, these imply that for a sufficiently large $m$, $G \geq 2^m A$.
\end{proof}

%%%%%%%%%%%%%%%%
%%%%%%%%%%%%%%%%
%%%%%%%%%%%%%%%%
%%%%%%%%%%%%%%%%

\begin{theorem}
    \label{thm:linToSvp}
    For any $p \in (2, \infty)$, there exists a constant $\gamma > 1$ such that for all $n \in \Z^+$, there is a polynomial time randomized Karp reduction from $\lin_\varepsilon$ in $n$ variables and $\order{n}$ equations to $\svp_{p, \gamma}$ in a lattice of rank $\order{n}$.
\end{theorem}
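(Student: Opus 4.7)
The plan is to prove Theorem \ref{thm:linToSvp} by chaining Lemma \ref{lem:AGRelation} with lattice sparsification (Lemma \ref{lem:svpSparcification2} and Lemma \ref{lem:bdd_sparcification}). Given an input $\lin_\varepsilon$ instance with $n$ variables and $m = O(n)$ equations, first apply Lemma \ref{lem:AGRelation} to obtain, in polynomial time, a lattice $\lat$ of rank $O(n)$ with basis $B$, a gap $\gamma > 1$, a radius $r = \Theta(m^{1/p})$, and constants $G, A$ with $G \geq 2^m A$ such that $\nump{\lat,r,\vec 0} \geq G$ in the YES case and $\nump{\lat,\gamma r,\vec 0}\leq A$ in the NO case. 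The output of the reduction will be a sparsified pair $(\lat',\gamma,r)$.

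I would then pick a prime $q$ lying in the interval $[\max(4A,\, 4\gamma r/\lambda_1^{(p)}(\lat)),\; G\cdot \lambda_1^{(p)}(\lat)/(8r)]$. This interval is nonempty and exponentially wide: because of the structure of $B$ in \cref{eqn:cvpToSvpBasis}, the last coordinate of every nonzero lattice vector is a nonzero integer, so $\lambda_1^{(p)}(\lat)$ is bounded below by an explicit positive constant depending only on $p$; combined with $G/A \geq 2^m$ and $r/\lambda_1^{(p)}(\lat) = O(m^{1/p})$, this makes the ratio of endpoints at least $2^m/\poly(m)$. Such a prime exists by Bertrand's postulate and can be found in polynomial time. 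Run the sparsification algorithm of Lemma \ref{lem:svpSparcification2} on $(\lat, q)$ to produce $\lat' \subseteq \lat$ of the same rank, and output $(\lat',\gamma,r)$.

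For the NO case, since $\gamma r \leq q\,\lambda_1^{(p)}(\lat)$, Lemma \ref{lem:bdd_sparcification}(1) directly yields $\Pr[\lambda_1^{(p)}(\lat') \leq \gamma r] \leq A/q \leq 1/4$, so with probability at least $3/4$ the output is a genuine NO instance of $\svp_{p,\gamma}$. For the YES case, let $\cals$ be a maximal set of pairwise linearly independent lattice vectors of length at most $r$; since any one-dimensional sublattice of $\lat$ contributes at most $2\lfloor r/\lambda_1^{(p)}(\lat)\rfloor = O(m^{1/p})$ such vectors, we get $|\cals| = \Omega(G/m^{1/p})$. Every element of $\cals$ has length at most $r \leq q\,\lambda_1^{(p)}(\lat)$, so Lemma \ref{lem:svpSparcification2} applies and at least one vector of $\cals$ survives in $\lat'$ with probability at least $1 - q/|\cals| \geq 1/2$, certifying a YES instance of $\svp_{p,\gamma}$. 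Hence with constant probability the reduction is correct, and it runs in polynomial time. Combined with Theorem \ref{thm:satToLinRed} and Corollary \ref{cor:linEthHard}, this establishes Theorem \ref{thm:SvpEthHard}. The main obstacle is the simultaneous calibration of $q$: it must satisfy $q \gg A$ to kill the annoying vectors while still $q \ll |\cals|$ to preserve at least one good vector, and additionally $q \geq \gamma r/\lambda_1^{(p)}(\lat)$ so that the sparsification bounds apply; the exponential gap $G \geq 2^m A$ furnished by the locally dense gadget of Theorem \ref{thm:SvpIntegerGadget_new} is precisely what makes all three conditions compatible.
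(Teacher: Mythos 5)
Your approach is essentially the paper's: apply Lemma~\ref{lem:AGRelation} and then sparsify with a prime $q$ chosen so that annoying vectors die but at least one good vector survives. Two points deserve attention, one of which is a genuine error in your argument.

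The claim ``the last coordinate of every nonzero lattice vector is a nonzero integer'' is false. The first $m'+d'$ columns of $B$ in \cref{eqn:cvpToSvpBasis} all have last coordinate $0$, so any nontrivial integer combination of them is a nonzero lattice vector with last coordinate $0$. The conclusion that $\lambda_1^{(p)}(\lat)$ is bounded below by a positive constant does hold, but for a different reason: the pieces $\lat(B')\subseteq\Z^m$ and $\dgr{B}\Z^d=\alpha\Z^d$ both have first minimum bounded below by a constant, so any nonzero vector in $\lat(B)$ has length at least $\min(1,\alpha)$. More importantly, the paper sidesteps the need to bound $\lambda_1^{(p)}(\lat)$ at all. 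It simply picks $q\in[\sqrt{AG}/42,\,42\sqrt{AG}]$ and in the \no case argues by contradiction: if $\gamma r > q\lambda_1^{(p)}(\lat)$, then $q>A$ (since $q\geq\sqrt{AG}/42\geq 2^{m/2}A/42$) gives $\lambda_1^{(p)}(\lat)<\gamma r/A$, and then $\{-A\vec v,\dots,A\vec v\}$ for a shortest $\vec v$ yields $2A$ vectors of length at most $\gamma r$, contradicting $\nump{\lat,\gamma r,\vec 0}\leq A$. This removes $\lambda_1$ from the calibration of $q$ entirely, which is why your formula for the $q$-interval, which refers directly to the (uncomputable) $\lambda_1^{(p)}(\lat)$, is something the paper does not need.

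Your \yes-case bound $|\mathcal{S}|=\Omega(G/m^{1/p})$ also gives away an unnecessary polynomial factor. The $G$ good vectors produced in Lemma~\ref{lem:cvpToSvp} are all of the form $B(\vec x,\vec y,1)$, with last coordinate exactly $1$; hence they are primitive and pairwise linearly independent, and one can take $|\mathcal{S}|\geq G$ directly. This is what the paper does. With the exponential gap $G\geq 2^m A$ your weaker bound still yields a correct reduction (with constant success probability, versus the paper's $1-2^{-\Omega(m)}$), so the only step that must be repaired is the false statement about last coordinates.
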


% \adi{Perhaps change $\dgr m$ to $d$ everywhere in the proof.}
\begin{proof} Fix a $p$. We first describe the algorithm, and then show correctness. 

    \textbf{Algorithm.}
The reduction uses the algorithm from \cref{lem:AGRelation} to compute $(B, \gamma, r, A, G)$. Then it
    % The reduction computes such a lattice, and then 
    samples a prime number $$q \in [\sqrt{AG} / 42,~42\sqrt{AG}],$$
    and uses the algorithm from \cref{lem:svpSparcification2} to get a basis $B'$ of sparse sub-lattice lattice $\lat' \subseteq \lat$ of rank $\order{n}$, and outputs $(B', r)$. The efficiency follows from the efficiency of \cref{lem:AGRelation} and \cref{lem:svpSparcification2}. We now prove correctness. 
    \textbf{Correctness.} 
        We bound the probability of a lattice vector of a particular length making it into the sparse sub-lattice $\lat'$ lattice. 
        Suppose that the input was a $\yes$ instance of $\lin_\varepsilon$. Let $\mathcal{S}$ be the set of lattice vectors in $\lat$ of length at most $r$. 
        % These are distinct vectors of the form $\vec v_1 \oplus \vec v_2 \oplus 1$, where $\vec v_1 \in \lat(B') - \vec t' ,~\vec v_2 \in \alpha \Z^{d} - \frac{\alpha}{2}\vec 1_{d}$, and therefore they are linearly independent. Then 
        These are distinct lattice vectors such that their last coordinates are all $1$, and  $r < q \lambda_1^{(p)}(\lat)/2$ also holds\footnote{this is because $q$ grows exponentially in $m$, where as $r$ is $\order{m^{1/p}}$ }. Consider any two vectors $\vec v_1$ and $\vec v_2$ in $\mathcal{S}$. We claim that $B^{-1}(\vec v_1-\vec v_2) \neq 0 \mod q$. To see this, observe that if $B^{-1}(\vec v_1-\vec v_2) = 0 \mod q$, then $(\vec{v}_1-\vec v_2) \in q \cdot \cL(B) \implies \|(\vec{v}_1-\vec v_2)\|_p\geq q\lambda_1^{(p)}(\cL)$, which gives a contradiction by triangle inequality. This implies that the coefficient vectors of the lattice vectors in $\mathcal{S}$ are distinct and hence pairwise linearly independent in $\F_q$ because the last coordinate of coefficient vectors is also fixed to be 1. Then,
        $$\abs{\mathcal{S}} \geq G \implies \frac{q}{\abs{\mathcal{S}}} \leq {42} \cdot \sqrt{\frac{A}{G}} \in \order{-2^{m/2}}.$$
        Therefore by the lower bound in \cref{lem:svpSparcification2}, $(B', r)$ is a \yes instance of $\svp_{p, \gamma}$ except with an exponentially small probability. Next, suppose that the input was a \no instance of $\lin_\varepsilon$. Let $\mathcal{S}$ be the set of vectors of length {\em at most} $\gamma r$. We have that $\abs{\mathcal{S}} \leq A$, and hence the number of linearly independent vectors in $\mathcal{S}$ of length at most $\gamma r$ is at most $A$.
        Additionally, we claim that in this case $\gamma r \leq q \lambda_1^{(p)}(\lat)$. To see this, suppose $\gamma r > q \lambda_1^{(p)}(\lat)$ for contradiction. This implies that $\lambda_1^{(p)}(\lat) < \frac{\gamma r}{q} < \frac{\gamma r}{A}$, because $q > A$. Let $\vec v \in \lat$ be such that $\norm{\vec v}_p = \lambda_1^{(p)}(\lat)$. Then, $\{ -A \vec v, \ldots A\vec v \}$ are $2A$ vectors of length at most $\gamma r$, contradicting the fact that $\nump{\lat, \gamma r, \vec 0} \leq A$.  
        Therefore, 
        $$ \frac{\abs{\mathcal{S}}}{q} \leq 42 \cdot \sqrt{\frac{A}{G}} \in \order{2^{-m/2}},$$
        and by the upper bound in \cref{lem:svpSparcification2}, $(B', r)$ is a \no instance of $\svp_{p, \gamma}$ except with an exponentially small probability. \qedhere

\end{proof}

Together with \cref{cor:linEthHard}, we get the following theorem.

\thmsvpethhard*

%\section{\texorpdfstring{$\ETH$}{ETH} hardness of \texorpdfstring{$\svp_{p,\gamma}$}{SVP p,gamma} Under a Mathematical Conjecture}

%\input{4_svp.tex}

% \section{\ETH hardness of $\bdd_{\alpha}^{(p)}$} 
% \input{bdd}
\section{A reduction from \texorpdfstring{$\CVP_{p,\gamma'}$}{CVP p,gamma} to \texorpdfstring{$\bdd_{p,\alpha}$}{BDD p,alpha}}
\label{sec:bdd}
In this section we give  a randomized Karp reduction from $\cvp_{p, \gamma'}$ in a lattice in $m$ dimensions, for any constant $\gamma'>1$ to $\bdd_{p, \alpha}$ in a lattice in $\order{m}$ dimensions, for all $\alpha > \alpha_p^{\ddagger}$, where $\alpha_p^{\ddagger}$ is as defined in \cref{eqn:alphapddagger}.
We generally follow the reduction from \cite[Section 3]{bdd21}, but give simpler proofs that work for {\em any} $\cvp$ instance. 
Our reduction will use the following property of the integer lattice.

\begin{lemma}[\cite{bdd21}, Lemma 3.13]
\label{lem:intgadgetbdd}
For any $p \in [1,\infty)$, $\alpha_p^{\ddagger} < \alpha_G $ and $ \alpha_A < \alpha_G$, there exist $t \in [0,1/2]$, $C_r \geq t$ and $\phi_0, \phi_1 > 1$, such that for any $d\in \mathbb{Z}^+$ and for $\dgr r = C_r \cdot(d)^{1/p},~\vec t^{\dagger} = t \cdot \vec 1_{d}$,
\begin{equation}
\label{eqn:bddintgadget}
N_p(\mathbb{Z}^{d}, \alpha_G \cdot r^\dagger, \vec{t}^\dagger) \geq \max \left\{ \phi_0^{d - o(d)} \cdot N_p(\mathbb{Z}^{d}, r^\dagger , 0), \ 
\phi_1^{d - o(d)} \cdot N_p(\mathbb{Z}^{d}, \alpha_A   \cdot r^\dagger, \vec{t}^\dagger) \right\}.
\end{equation}
Furthermore\footnote{In \cite{bdd21} the right hand side appears in terms of $N_p^{\circ}$; our inspection of their proof shows that it works for $N_p$ as well.}, the constants $t, C_r, \phi_0, \phi_1$ only depends upon $p$, and can be efficiently computed from $p$.

\end{lemma}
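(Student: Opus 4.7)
The plan is to translate the point-counting inequality into an analytic statement about the function $\beta_{p,t}$, leveraging the tight approximation from \Cref{thm:theta_gives_good_approx}, which identifies $N_p(\mathbb{Z}^d, a d^{1/p}, t \vec 1_d)$ with $\beta_{p,t}(a)^d$ up to a subexponential factor $\exp(\pm C^*\sqrt{d})$. The two desired inequalities then reduce, up to $o(d)$ in the exponent, to finding $t_0$ and a scaling $C_r$ such that $\beta_{p,t_0}(\alpha_G C_r)$ strictly exceeds both $\beta_{p,0}(C_r)$ and $\beta_{p,t_0}(\alpha_A C_r)$.

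First, I would unpack the definition $\alpha_p^{\ddagger} = \inf_{t \in [0,1/2],\, a \geq t}\, a/\beta_{p,0}^{-1}(\beta_{p,t}(a))$ and pick a witness pair $(t_0, a_0)$ with $t_0 \in [0,1/2]$, $a_0 \geq t_0$, and $a_0/b_0 < \alpha_A$, where $b_0 := \beta_{p,0}^{-1}(\beta_{p,t_0}(a_0))$; such a pair exists since $\alpha_A > \alpha_p^{\ddagger}$. Set $t := t_0$, $\vec t^\dagger := t_0 \vec 1_d$, and $C_r := a_0/\alpha_G$, so that the three radii of interest become $\alpha_G r^\dagger = a_0 d^{1/p}$, $\alpha_A r^\dagger = (\alpha_A a_0/\alpha_G) d^{1/p}$, and $r^\dagger = (a_0/\alpha_G) d^{1/p}$.

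Next, I would apply the lower bound from \Cref{thm:theta_gives_good_approx} at the $\tau^*$ satisfying $\mu_p(\tau^*, t_0) = a_0^p$, and the upper bound from \Cref{lemma:theta0} at the $\tau$ minimizing its right-hand side in each case; the envelope identity $\partial_\tau[\tau a^p + \log \Theta_p(\tau,t)] = a^p - \mu_p(\tau,t)$ shows this minimizer recovers exactly $\beta_{p,t}(a)^d$. This yields
\begin{equation*}
N_p(\mathbb{Z}^d, \alpha_G r^\dagger, \vec t^\dagger) \geq e^{-C^*\sqrt{d}} \cdot \beta_{p,t_0}(a_0)^d,
\end{equation*}
together with $N_p(\mathbb{Z}^d, r^\dagger, 0) \leq \beta_{p,0}(C_r)^d$ and $N_p(\mathbb{Z}^d, \alpha_A r^\dagger, \vec t^\dagger) \leq \beta_{p,t_0}((\alpha_A/\alpha_G)a_0)^d$. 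The witness inequality $a_0 < \alpha_A b_0 < \alpha_G b_0$ gives $C_r < b_0$, so by strict monotonicity of $\beta_{p,0}$ (established via $\tfrac{d}{da}\log \beta_{p,0}(a) = p\tau^*(a) a^{p-1} > 0$) we get $\beta_{p,0}(C_r) < \beta_{p,0}(b_0) = \beta_{p,t_0}(a_0)$, and set $\phi_0 := \beta_{p,t_0}(a_0)/\beta_{p,0}(C_r) > 1$. Similarly, $\alpha_A/\alpha_G < 1$ together with strict monotonicity of $\beta_{p,t_0}$ delivers $\phi_1 := \beta_{p,t_0}(a_0)/\beta_{p,t_0}((\alpha_A/\alpha_G)a_0) > 1$, where if $(\alpha_A/\alpha_G)a_0 < t_0$ the right-hand count vanishes and the inequality is trivial. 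The prefactor $\exp(C^*\sqrt{d})$ is absorbed into the $o(d)$ in the exponent.

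The main obstacle is verifying the side condition $C_r \geq t$. When $\alpha_G \leq 1$ (covering the range $p > 2$, where $\alpha_p^{\ddagger} < 1$), this is immediate since $C_r = a_0/\alpha_G \geq a_0 \geq t_0$. When $\alpha_G > 1$ (the range $p \in [1,2]$, where $\alpha_p^{\ddagger} = 1$), one must select the witness $(t_0, a_0)$ more carefully: since the infimum could be approached only as $a_0 \to t_0$, the argument must appeal to continuity of $\beta_{p,t}(a)$ in $a$ to perturb $a_0$ slightly upward so that $a_0 \geq \alpha_G t_0$ holds while the witness property $a_0/b_0 < \alpha_A$ is preserved. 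Efficient computability of $t_0, C_r, \phi_0, \phi_1$ from $p$ then follows because $\Theta_p$, $\mu_p$, and hence $\beta_{p,t}$ and $\beta_{p,0}^{-1}$, are all efficiently approximable to any desired precision.
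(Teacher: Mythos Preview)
The paper does not give its own proof of this lemma; it is quoted from \cite{bdd21} (their Lemma~3.13), with only the footnote remark that the argument there goes through with $N_p$ in place of the open-ball count. So there is no in-paper proof to compare against directly.

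Your reconstruction follows exactly the strategy that the definitions of $\beta_{p,t}$ and $\alpha_p^{\ddagger}$ are designed to support, and the core is sound: choose a witness $(t_0,a_0)$ to $\alpha_p^{\ddagger}<\alpha_A$, set $C_r=a_0/\alpha_G$, apply the lower bound of \Cref{thm:theta_gives_good_approx} at the $\tau^*$ with $\mu_p(\tau^*,t_0)=a_0^p$ and the $\tau$-minimized upper bound of \Cref{lemma:theta0} to identify each count with the corresponding power of $\beta$, then read off $\phi_0,\phi_1$ as ratios. The envelope-derivative computation giving strict monotonicity of $a\mapsto\beta_{p,t}(a)$ is the right tool. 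This is presumably what \cite{bdd21} does as well.

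There is one loose end in your handling of $C_r\ge t$. Your parenthetical identification of the case $\alpha_G\le 1$ with $p>2$ and $\alpha_G>1$ with $p\in[1,2]$ is not accurate: the hypotheses only say $\alpha_G>\alpha_A>\alpha_p^{\ddagger}$, so $\alpha_G>1$ can occur for $p>2$ too. In the regime $\alpha_A<1<\alpha_G$ your ``perturb $a_0$ upward by continuity'' fix can fail, since as $a_0\to\infty$ the ratio $a_0/\beta_{p,0}^{-1}(\beta_{p,t_0}(a_0))\to 1>\alpha_A$, so the witness property may be destroyed before $a_0$ reaches $\alpha_G t_0$. When $\alpha_A>1$ there is a cleaner argument than perturbation: take $t_0=0$, so $a_0/b_0=1<\alpha_A$ for every $a_0>0$ and $C_r\ge 0=t_0$ is automatic. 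For the residual case $\alpha_A\le 1<\alpha_G$ a sharper argument is needed; alternatively one can note that in the only application in this paper (\Cref{lem:bdd_main}) one takes $\alpha_G=(1+\delta_1)\alpha_p^{\ddagger}$ arbitrarily close to $\alpha_p^{\ddagger}$, so that case never arises.
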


\begin{lemma}
\label{lem:bddtransformation1}
    For any $p \in [1, \infty)$, suppose $(B', \vec t', r')$ is a $\cvp_{p, \gamma'}$ instance over an integer lattice such that $B \in \Z^{m \times m'}, \vec{t}' \in \Z^m$. Then for any $d \in \Z^+$, there exists an efficiently computable matrix-vector pair $(B, \vec t)$ such that $B$ generates a lattice $\lat(B)$ in $m + d$ dimensions, and for any radius $r > 0$ and constants $\alpha, s > 0$, 
    \begin{equation}
        N_p \left(\cL(B), r/\alpha,\vec{0}\right) \leq N_p \left(\Z^m, r/\alpha,\vec{0}\right) \cdot N_p \left(\cL^\dagger, \frac{r}{\alpha  s},\vec{0}\right), \label{eqn:bddineq1}
    \end{equation}
    and 
    \begin{itemize}
              \item  if   $\operatorname{dist}_p({\vec{t}'},\cL({B '})) \leq r'$ then,
   
   \begin{equation}
       N_p \left(\cL(B), r, \vec{t}\right) \geq N_p \left(\cL^\dagger, \frac{(r^{p} -r'^p)^{1/p}}{s},\vec{t}^\dagger\right). \label{eqn:bddineq2}
   \end{equation}
   
   \item if $\operatorname{dist}_p({\vec{t}'},\cL({B}')) \geq  \gamma'r'$ then,
   \begin{equation}
         N_p \left(\cL(B),r, \vec{t}\right) \leq N_p(\Z^m,r, \vec{0} ) \cdot N_p\left(\cL^\dagger, \frac{(r^p-(\gamma'r')^p)^{1/p}}{s}, \vec{t}^\dagger \right). \label{eqn:bddineq3}
   \end{equation} 
    \end{itemize}
\end{lemma}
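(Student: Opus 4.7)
The plan is to mirror the direct-sum construction used in Lemma \ref{lem:cvpToSvp} (without the extra coordinate that was there only to force primitivity for the SVP application). Concretely, fix a gadget $(B^\dagger, \vec t^\dagger)$ generating $\cL^\dagger \subseteq \R^d$, and define
\[
B := \begin{pmatrix} B' & 0 \\ 0 & s B^\dagger \end{pmatrix}, \qquad \vec t := \begin{pmatrix} \vec t' \\ s \vec t^\dagger \end{pmatrix}.
\]
Every lattice vector of $\lat(B)$ splits as $(B'\vec x,\, s B^\dagger \vec y)$ for integer coefficient vectors $\vec x, \vec y$, and because the two coordinate blocks are orthogonal in $\R^{m+d}$, the $\ell_p^p$ norm decomposes cleanly:
\[
\|B(\vec x, \vec y) - \vec t\|_p^p \;=\; \|B'\vec x - \vec t'\|_p^p + s^p \,\|B^\dagger \vec y - \vec t^\dagger\|_p^p,
\]
and similarly without the translations for short vectors. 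The three inequalities in the lemma then all fall out of this splitting together with the hypothesis $\lat(B') \subseteq \Z^m$.

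\textbf{Inequality \eqref{eqn:bddineq1} (short vectors).} I would note that if $\|B(\vec x, \vec y)\|_p \le r/\alpha$, then by the splitting $\|B'\vec x\|_p \le r/\alpha$ and $\|B^\dagger \vec y\|_p \le r/(\alpha s)$. Since $\lat(B') \subseteq \Z^m$, the number of possibilities for $B'\vec x$ is at most $N_p(\Z^m, r/\alpha, \vec 0)$, and the number of possibilities for $B^\dagger \vec y$ is at most $N_p(\cL^\dagger, r/(\alpha s), \vec 0)$. Multiplying gives the product bound.

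\textbf{Inequality \eqref{eqn:bddineq2} (YES case).} Fix once and for all a witness $\vec x^\ast$ with $\|B'\vec x^\ast - \vec t'\|_p \le r'$, which exists by the YES hypothesis. Then for every $\vec y$ with $\|B^\dagger \vec y - \vec t^\dagger\|_p \le (r^p - r'^p)^{1/p}/s$, the splitting gives
\[
\|B(\vec x^\ast, \vec y) - \vec t\|_p^p \;\le\; r'^p + s^p \cdot \frac{r^p - r'^p}{s^p} \;=\; r^p,
\]
so $B(\vec x^\ast, \vec y)$ is within distance $r$ of $\vec t$. Since the map $\vec y \mapsto (\vec x^\ast, \vec y)$ is injective, we get the claimed lower bound on $N_p(\lat(B), r, \vec t)$.

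\textbf{Inequality \eqref{eqn:bddineq3} (NO case).} Here the NO hypothesis $\dist_p(\vec t', \lat(B')) \ge \gamma' r'$ forces $\|B'\vec x - \vec t'\|_p^p \ge (\gamma' r')^p$ for every $\vec x$; combined with the splitting, any $(\vec x, \vec y)$ contributing to $N_p(\lat(B), r, \vec t)$ must satisfy $\|B^\dagger \vec y - \vec t^\dagger\|_p \le (r^p - (\gamma' r')^p)^{1/p}/s$. I would then count the possibilities for $B'\vec x$ and $B^\dagger \vec y$ separately: using $\lat(B') \subseteq \Z^m$ and $\vec t' \in \Z^m$, translation invariance of $\Z^m$ gives that the number of admissible $B'\vec x$ is at most $N_p(\Z^m, r, \vec 0)$, while the number of admissible $B^\dagger \vec y$ is at most $N_p(\cL^\dagger, (r^p - (\gamma' r')^p)^{1/p}/s, \vec t^\dagger)$. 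The product gives the stated upper bound. The only real obstacle is careful bookkeeping of the radii and making sure that the $\Z^m$ ball on the right-hand side of \eqref{eqn:bddineq3} is centered at $\vec 0$ rather than $\vec t'$, which is legitimate precisely because $\vec t' \in \Z^m$.
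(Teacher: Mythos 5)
Your proposal matches the paper's proof essentially line for line: same block-diagonal construction of $(B,\vec t)$, same orthogonal splitting of $\|\cdot\|_p^p$ into the two coordinate blocks, same product-counting argument using $\lat(B')\subseteq\Z^m$ for the upper bounds, the same injective map $\vec y\mapsto(\vec x^\ast,\vec y)$ for the lower bound, and the same translation-to-origin step via $\vec t'\in\Z^m$ in the NO case. No gaps; your write-up is slightly more explicit about the exponents in the radii (the paper has a small typo dropping a $1/p$ in the YES-case ball radius), but the argument is identical.
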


\begin{proof}
Fix any $p, d, s$. Define $B^\dagger = I_{d}$, $\vec t^\dagger= t\cdot \vec 1_{d}$, and set
\begin{align*}
        B &:= \begin{pmatrix}
            {B'}  & 0 \\ 
            0    & s B^{\dagger}
        \end{pmatrix};
        \quad
        \vec t := \begin{pmatrix}
            {\vec{t}'} \\ s \vec t^{\dagger}
        \end{pmatrix}.
\end{align*}

The transformation is clearly efficient. We now show the inequalities. 
For any radius $r$ and constant $\alpha$, consider a vector $\vec{v} \in \ballp{\cL, r/\alpha,\vec{0}}$ such that $\vec v = B\cdot(\vec x, \vec y)$, for some $\vec x \in \Z^{m'}, \vec y \in \Z^{d}$. Then,
\begin{align*}
        \norm{B' \cdot \vec{x}}_p^p + \norm{sB^\dagger \cdot \vec{y}}_p^p &\leq (r/\alpha)^p. 
\end{align*}
This implies that $\norm{B' \cdot \vec{x}}_p\leq r/\alpha$ and $\norm{B^\dagger \cdot \vec{y}}_p\leq r/s\alpha$. 
Since $\cL(B') \subseteq \Z^m$ number of possible values of $B' \cdot \vec{x}$ is upper bounded by $N_p \left(\Z^m, r/\alpha,\vec{0}\right)$. Similarly, number of possible values of $B^\dagger \cdot \vec{y}$ is upper bounded by $N_p \left(\cL^\dagger, r/s \alpha,\vec{0}\right)$. Together, these imply that the number of possible vectors $\vec{v}$ is upper bounded by $N_p \left(\Z^m, r/\alpha,\vec{0}\right) \cdot N_p \left(\cL^\dagger, {r}{/\alpha  s},\vec{0}\right)$, which gives us \cref{eqn:bddineq1}.

Now suppose that the input instance was a \yes instance.
This implies that there exists $ \vec{x} \in \Z^{m'}$ such that $\norm{B' \cdot \vec{x} - \vec{t}'}_p^p \leq r'^p$. 
Let $\vec{u} \in \ballp{\cL^\dagger,\left(r^p -r'^p\right)/{s} ,\vec{t}^\dagger}$ be such that $\vec{u}=B^\dagger \cdot \vec{y}$. Then, 
$$\norm{sB^\dagger \cdot \vec{y} -s\vec{t}^\dagger}_p^p\leq \left(r^p -r'^p\right). $$ 
Then, $\vec v=(B' \vec{x},s \vec{u})$ is a vector in $\lat(B)$ such that $\normpp{\vec{v} - \vec{t}}\leq r^p$, which implies that $\vec v \in \ballp{\cL,r ,\vec{t}}$. This mapping from $\vec u$ to $\vec v$ is injective. Therefore, 
    \[  N_p \left(\cL, r, \vec{t}\right)  \geq N_p \left(\cL^\dagger, \frac{(r^{p} -r'^p)^{1/p}}{s},\vec{t}^\dagger\right),\]
    which gives us \cref{eqn:bddineq2}.

    Next, suppose that the input instance was a \no instance. Then for all $ \vec{x} \in \Z^{m'}$, $$\norm{B' \cdot \vec{x} - \vec{t}'}_p^p \geq (\gamma'r')^p.$$ 
    Let $\vec{v} \in B_p(\cL, r,\vec{t})$ such that $\vec{v}=(B' \cdot \vec{x}, sB^\dagger \cdot \vec{y})$ for some $\vec{x } \in \Z^{m'}, \vec{y} \in \Z^{d}$. This implies that
    \begin{align*}
        \norm{B' \cdot \vec x - \vec{t'}}_p &\leq r; \\ \norm{sB^\dagger \cdot \vec y -s\vec{t}^\dagger}_p &\leq \paren{r^p-(\gamma'r')^p}^{1/p}.
    \end{align*}
    Since $\cL' \subseteq \Z^m$ and $\vec{t}' \in \Z^m$, the number of possible values of $B' \cdot \vec{x}$ can take is upper bounded by $$N_p\paren{\cL',r,\vec{t}'}\leq N_p\paren{\Z^m,r,\vec{t}'}= N_p\paren{\Z^m,r,\vec{0}}.$$ 
    Similarly number of possible values for $sB^\dagger \vec{y}$ can take is upper bounded by $$N_p \paren{\cL^\dagger, \paren{r^p-(\gamma'r')^p}^{1/p}/{s}, \vec{t}^\dagger}.$$ By multiplying these upper bounds we get that, 
    \[N_p \left(\cL,r, \vec{t}\right) \leq N_p(\Z^m,r, \vec{0} ) \cdot N_p\left(\cL^\dagger, \frac{(r^p-(\gamma'r')^p)^{1/p}}{s}, \vec{t}^\dagger \right).\]
    This gives us \cref{eqn:bddineq3}.
\end{proof}

\begin{lemma}
\label{lem:bdd_main}
    For all $p \in [1, \infty)$, $\alpha > \alpha_p^{\ddagger}$, $c>0$ and $\gamma'>1$ the following holds for all sufficiently large $m \in \Z^+$. There exists an efficient algorithm that takes as an input a $\cvp_{p, \gamma'}$ instance $(B', \vec t', r')$, such that $B' \in \Z^{m \times m'}, \vec{t}' \in \Z^m$ and $r' = c {m^{1/p}}$, and returns $(B, \vec t, r, A, G)$, where $B$ generates a lattice in $\order{m}$ dimensions, and $A, G$ are integers such that $G > 2^m A$ and $r > 0$,
    \begin{itemize}
        \item  If $\di_p\paren{\vec{t}',\cL(B')}\leq r'$, then $\nump{\cL(B),r/\alpha, \vec{0}} \leq A$ and $\nump{\cL(B),r,\vec{t}}\geq G.$
       \item  If  $\di_p\paren{\vec{t}',\cL(B')}\geq \gamma'r'$, then $\nump{\cL(B), r,\vec{t}} \leq A.$  
    \end{itemize}
\end{lemma}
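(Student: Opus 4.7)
The plan is to combine the transformation from \cref{lem:bddtransformation1} with the integer lattice gadget from \cref{lem:intgadgetbdd}, in the same spirit as the $\svp$ reduction in \cref{lem:AGRelation}, but significantly simpler because now the target lives in $\Z^m$ and we do not need to worry about even multiples of $\vec t'$ landing in the lattice. First I would fix two auxiliary constants $\alpha_A, \alpha_G$ with $\alpha_p^{\ddagger} < \alpha_A < \alpha_G < \alpha$, chosen so that the parameter window described below is nonempty. Invoking \cref{lem:intgadgetbdd} with these parameters produces the constants $t \in [0,1/2]$, $C_r > 0$, and $\phi_0, \phi_1 > 1$, which I use to set $\dgr B := I_d$ and $\vec t^\dagger := t \vec 1_d$, where $d = \Theta(m)$ is an integer multiple of $m$ fixed at the end of the argument.

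Next, given the $\cvp_{p,\gamma'}$ instance $(B', \vec t', r')$ with $r' = c m^{1/p}$, I set $\dgr r = C_r d^{1/p}$ and choose a scaling factor $s > 0$ together with a target radius $r > 0$ (both $\Theta(m^{1/p})$) so that three inequalities hold simultaneously:
(i) $(r^p - r'^p)^{1/p}/s \geq \alpha_G \dgr r$,
(ii) $r/(\alpha s) \leq \dgr r$, and
(iii) $(r^p - (\gamma'r')^p)^{1/p}/s \leq \alpha_A \dgr r$.
Eliminating $r$ from (i) and (iii), and separately from (i) and (ii), reduces the existence of a valid $s$ to the condition
\[
\frac{r'^p}{\alpha^p - \alpha_G^p} \;\leq\; s^p \dgr r^p \;\leq\; \frac{(\gamma'^p - 1)\, r'^p}{\alpha_G^p - \alpha_A^p},
\]
and by taking $\alpha_G$ sufficiently close to $\alpha$ and $\alpha_A$ sufficiently close to $\alpha_p^{\ddagger}$ this window is nonempty, allowing $s^p \dgr r^p$ to be picked inside it. With this choice I apply \cref{lem:bddtransformation1} to produce $(B, \vec t)$ in $m + d$ dimensions.

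I then define
\[
G := \nump{\Z^d, \alpha_G \dgr r, \vec t^\dagger}, \qquad A := \nump{\Z^m, r/\alpha, \vec 0}\cdot \nump{\Z^d, \dgr r, \vec 0} \;+\; \nump{\Z^m, r, \vec 0}\cdot \nump{\Z^d, \alpha_A \dgr r, \vec t^\dagger}.
\]
Inequality (ii) combined with \cref{eqn:bddineq1} gives $\nump{\cL(B), r/\alpha, \vec 0} \leq A$ unconditionally; inequality (i) combined with \cref{eqn:bddineq2} gives $\nump{\cL(B), r, \vec t} \geq G$ in the \yes case; inequality (iii) combined with \cref{eqn:bddineq3} gives $\nump{\cL(B), r, \vec t} \leq A$ in the \no case. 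It remains to verify $G \geq 2^m A$. Using \cref{lemma:theta0} I can upper bound $\nump{\Z^m, r, \vec 0} \leq K^m$ for a constant $K$ depending only on $p$ and $c$. The gadget inequality \cref{eqn:bddintgadget} supplies $G \geq \phi_0^{d - o(d)} \nump{\Z^d, \dgr r, \vec 0}$ and $G \geq \phi_1^{d - o(d)} \nump{\Z^d, \alpha_A \dgr r, \vec t^\dagger}$, so fixing $d$ to be a large enough constant multiple of $m$, namely $d \geq \max\{\log_{\phi_0}(3K), \log_{\phi_1}(3K)\}\cdot m$ plus a slack term absorbing the $o(d)$ loss, ensures each summand of $A$ is at most $G/2^{m+1}$ for sufficiently large $m$.

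The principal technical hurdle is arranging the three radius inequalities simultaneously, which ultimately reduces to showing that when $\alpha > \alpha_p^{\ddagger}$, the triple $(\alpha_A, \alpha_G, s)$ can be chosen so that the window above is nonempty. The quantitative slack here comes from the strict inequality $\alpha > \alpha_p^{\ddagger}$ and from the freedom in \cref{lem:intgadgetbdd} to take $\alpha_A, \alpha_G$ anywhere strictly above $\alpha_p^{\ddagger}$; the gadget's exponential gap is then leveraged by the standard trick of blowing up the gadget dimension to an appropriately large constant multiple of $m$.
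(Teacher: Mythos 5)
Your overall architecture matches the paper's proof: apply the transformation of \cref{lem:bddtransformation1} with the integer gadget $B^\dagger=I_d$, $\vec t^\dagger = t\vec 1_d$ from \cref{lem:intgadgetbdd}, arrange the three radius inequalities so that \cref{lem:intgadgetbdd} yields the exponential gap, define $G,A$ in terms of lattice point counts, and take $d$ a large enough constant multiple of $m$ so that $G\geq 2^m A$. Your derivation that the three inequalities admit a simultaneous solution iff
\[
\frac{r'^p}{\alpha^p-\alpha_G^p}\;\leq\; s^p r^{\dagger p}\;\leq\;\frac{(\gamma'^p-1)\,r'^p}{\alpha_G^p-\alpha_A^p}
\]
is correct and is a cleaner way to organize what the paper does by making inequality (i) tight and checking the other two.

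However, your claim about how to make the window nonempty is backwards. The window is nonempty iff $\alpha_G^p-\alpha_A^p\leq(\gamma'^p-1)(\alpha^p-\alpha_G^p)$. If you send $\alpha_G\to\alpha^{-}$ and $\alpha_A\to\alpha_p^{\ddagger\,+}$, the left side tends to $\alpha^p-\alpha_p^{\ddagger\,p}>0$ while the right side tends to $0$, so the window becomes \emph{empty}, not nonempty. Equivalently, the lower endpoint $r'^p/(\alpha^p-\alpha_G^p)$ blows up while the upper endpoint stays bounded. What you actually want is to pick $\alpha_A$ and $\alpha_G$ \emph{close to each other} (both strictly above $\alpha_p^{\ddagger}$), so that $\alpha_G^p-\alpha_A^p$ is tiny; more precisely, $\alpha_A^p\geq\gamma'^p\alpha_G^p-(\gamma'^p-1)\alpha^p$ must hold, and for $\gamma'$ close to $1$ this forces $\alpha_A$ to be near $\alpha_G$, not near $\alpha_p^{\ddagger}$. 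The paper's choice $\alpha_G=(1+\delta_1)\alpha_p^{\ddagger}$, $\alpha_A = \bigl((1+\delta_2)^p-\gamma'^p((1+\delta_2)^p-1)\bigr)^{1/p}\alpha_G$ with $\delta_2$ small is exactly this: $\alpha_A$ close to $\alpha_G$, with the multiplicative factor tuned so the window condition holds with equality. Once you correct the direction of this limiting argument, the rest of your proof goes through.
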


\begin{proof}
Fix any $p$. 
Define $B^\dagger = I_{d}$, $\vec t^\dagger= t\cdot \vec 1_{d}$. On input a $\cvp_{p, \gamma'}$ instance, the reduction applies the transformation from \cref{lem:bddtransformation1} for some choice of $s, d, t$ to be determined in the following, and receives $(B, \vec t)$. The reduction returns $(B, \vec t, r, A, G)$, where $r, A, G$ will also be determined in the following. The reduction is clearly efficient, as long as these parameters are efficiently computable. We now show correctness. 

We will first fix $r$. 
Let $\delta_1, \delta_2>0$, be arbitrarily small parameters that control the closeness of $\alpha$ to $\alpha_p^\ddagger$, such that 
\[ \alpha=(1+\delta_1)(1+\delta_2) \cdot \alpha_p^{\ddagger}. \]

Now define\footnote{We can assume $\delta_2$ to be small enough so that $\alpha_A$ is well defined and $\alpha_A>0$} 
\[\alpha_G := (1+\delta_1) \alpha_p^{\ddagger} ,\; \alpha_A := \left((1+\delta_2)^p-{\gamma'^p((1+\delta_2)^p-1)}\right)^{1/p}\alpha_G.\]

Set,
$$r^p=\left(\frac{(1+\delta_2)^p}{(1+\delta_2)^p-1}\right)\cdot r'^p.$$

\Cref{lem:intgadgetbdd} guarantees that for this choice of $\alpha_A,\alpha_G$ there exist $t,C_r,\phi_0$ and $\phi_1$ that satisfy \cref{eqn:bddintgadget}. 
Set $d=C m$, for a constant $C$ large enough so that the following two inequalities are satisfied.
\begin{align*}
     \phi_0^{d -o(d)} &\;\geq  2^m \cdot N_p\left(\Z^m,r/\alpha,0\right)\;,\\
     \phi_1^{d -o(d)} &\;\geq  2^m \cdot N_p\left(\Z^m,r,0\right)\;.\\
\end{align*}
Since $r=O(m)^{1/p}$, \cref{lemma:theta0} implies that there exists a constant $C$ such that the above two inequalities are satisfied. We will set $B^\dagger = I_{d} $ , $\vec t^\dagger =t \cdot \vec{1}_{d}$ and $r^\dagger= C_r (d)^{1/p}$, where $t$ and $ C_r$ are guaranteed by \cref{lem:intgadgetbdd}.
We now set 
\[G := N_p\left(\cL^\dagger,\alpha_G\cdot r^\dagger, \vec{t}^\dagger\right)\;,\]
and,
\[A:=\max\left\{N_p(\Z^m,r, \vec{0} )\cdot N_p\left(\cL^\dagger,\alpha_A\cdot r^\dagger, \vec{t}^\dagger\right) \;,\;N_p(\Z^m,r/\alpha, \vec{0} ) \cdot N_p\left(\cL^\dagger,r^\dagger, \vec{0}\right) \cdot \right\} \;.\]

Notice that both $A$ and $G$ can be approximated to a high precision in $\poly(m)$ time by using the theta function.

Now, in order to have that $G \geq 2^m A$, we will use \cref{lem:intgadgetbdd}. It is enough to satisfy the following inequalities.
\begin{align}
     \frac{r}{\alpha  s} & \;\leq r^\dagger, \label{eqn:bddweirdineq1} \\
     \frac{(r^{p} -r'^p)^{1/p}}{s} & \:\geq \alpha_G \cdot r^\dagger, \label{eqn:bddweirdineq2} \\
     \frac{(r^{p} - (\gamma'r')^p)^{1/p}}{s} &\;\leq \alpha_A \cdot r^\dagger. \label{eqn:bddweirdineq3}
\end{align}

First, we set $s$ such that the second inequality \cref{eqn:bddweirdineq2} is tight: $s= \dfrac{(r^p-r'^p)^{1/p}}{\alpha_G \cdot r^\dagger}$. 
For the first inequality \cref{eqn:bddweirdineq1} observe, 
\begin{align*}
    \frac{r}{\alpha s}&= \frac{r \cdot r^\dagger}{(1+\delta_2) \cdot(r^p-r'^p)^{1/p}}\;.\\
\end{align*}
We had set $r^p=\left(\frac{(1+\delta_2)^p}{(1+\delta_2)^p-1}\right)\cdot r'^p$, which implies that $\frac{r}{(r^p -r'^p)^{1/p}}=(1+\delta_2)$.
Therefore, $$\frac{r}{\alpha s} = r^\dagger \leq r^\dagger.$$
For the last inequality \cref{eqn:bddweirdineq3}, observe that \[ \frac{(r^p-(\gamma'r')^p)^{1/p}}{s \cdot \alpha_A}= \frac{(r^p -(\gamma'r')^p)^{1/p} \cdot \alpha_G \cdot r^\dagger}{(r^p -r'^p)^{1/p} \cdot \alpha_A} \;,\]
Note that,
\[\frac{(r^p-(\gamma'r')^p)^{1/p}}{(r^p -r'^p)^{1/p}} = \left((1+\delta_2)^p-\gamma'^p \cdot{\left((1+\delta_2)^p-1\right)}\right)^{1/p} \;,\]
which implies,
\[ \frac{(r^p-(\gamma'r')^p)^{1/p}}{s \cdot \alpha_A} = r^\dagger \leq r^\dagger\;.\]
Hence the third inequality is also satisfied.
This implies the following holds,
\begin{align*}
    N_p\left(\cL^\dagger,\frac{(r^p-r'^p)}{s}, \vec{t}^\dagger\right)&\;\geq\; N_p\left(\cL^\dagger,\alpha_G\cdot r^\dagger, \vec{t}^\dagger\right)\;,\\
     N_p\left(\cL^\dagger,\frac{(r^p- (\gamma'r')^p)}{s}, \vec{t}^\dagger\right)&\;\leq\;N_p\left(\cL^\dagger,\alpha_A\cdot r^\dagger, \vec{t}^\dagger\right),\\
     N_p\left(\cL^\dagger,\frac{r}{\alpha s}, \vec{0}\right)&\;\leq\;N_p\left(\cL^\dagger,r^\dagger, \vec{0}\right).
\end{align*}

By \cref{lem:intgadgetbdd} and our choice of $d$, it holds that $G > 2^m A$. 
\end{proof}

%%%%%%%%%%%%%%%%%%%%%%%
%%%%%%%%%%%%%%%%%%%%%%%
%%%%%%%%%%%%%%%%%%%%%%%
%%%%%%%%%%%%%%%%%%%%%%%

%%%%%%%%%%%%%%%%%%%%%%%

\thmcvpbddreduction*

\begin{proof}

     Given $B ' \in \Z^{m \times m'}$ and $\vec{t}' \in \Z^m$, the reduction calls the algorithm from \cref{lem:bdd_main} to get the corresponding $(B,\vec{t},r,A,G)$. It then uses the LLL algorithm \cite{LLL82} to compute a basis for $\lat(B)$. Let $\kappa$ denote the rank of $\cL(B)$, it is clear from the construction in \cref{lem:bdd_main} that $\kappa\geq m$. 
     Now the reduction generates a prime number $q\in [\sqrt{AG}/42,42\sqrt{AG}]$, and samples $\vec{x} \sim \F^{\kappa}$ and $\vec{z} \sim \F^{\kappa}$ uniformly at random. 
     Next, the reduction uses lattice sparsification \cref{ssec:sparcification_prelims} to find a sparser sub-lattice $\lat'' \subseteq \lat(B')$. Precisely, it sets
     \[\lat'' := \{ \vec v \in \mathcal{L}(B) :  \inner{B^+ \vec v, \vec x} \equiv 0 \pmod{q} \}, \quad \vec v'' := \vec t - B\vec z.\]
     Then the reduction uses its $\BDD_{p, \alpha}$ oracle with $(\lat'', \vec t'')$ as input, and outputs \yes if and only if it outputs a lattice vector $\vec v$ satisfying $\norm{\vec v - \vec t}_p \leq r$. Since the reduction uses only polynomial time algorithms, it is clearly efficient. We now show correctness.
     
     For a sufficiently large $m$, it holds that $r< q \cdot \lambda_1^{(p)}(\cL(B))/2$. 
     Suppose that the input was a \yes instance of $\cvp_{p, \gamma'}$. Then, $\dist_p\paren{\vec{t}',\cL(B')}\leq r'$. By the guarantee of \cref{lem:bdd_main}, $\nump{\cL(B),r,/\alpha, \vec{0}} \leq A$ and $\nump{\cL(B),r,\vec{t}}\geq G.$ Therefore, using \cref{lem:bdd_sparcification} we get that.
    \[ \Pr[\lambda_1(\mathcal{L}'') \leq r/\alpha] \leq \frac{\nump{\mathcal{L}(B) , r/\alpha, \vec 0}}{q}\leq \dfrac{A}{q} \leq 42 \sqrt{\dfrac{A}{G}} \leq \frac{42}{2^{m/2}}.\]
    Similarly, 
    \[\Pr[\operatorname{dist}_p({\vec t''}, {\mathcal{L}''}) > r] \leq \frac{q}{\nump{\mathcal{L}(B), r, \vec t}} + \frac{1}{q^\kappa}\leq \dfrac{q}{G}+\frac{1}{q^\kappa} \leq 42\sqrt{\frac{A}{G}} \;+ \frac{1}{q^\kappa} \leq \frac{43}{2^{m/2}}.\]
    
    By a union bound, with probability at least $\paren{1- 2^{-\Omega(m)}}$, $\dist_p(\lat'', \vec t') \leq r < \alpha \lambda_1(\lat'')$ and thus the pair $(\lat'', \vec t'')$ satisfies the $\bdd_{p, \alpha}$ promise. 
    In this case, the $\bdd_{p, \alpha}$ oracle outputs a vector $\vec v$ such that $\norm{\vec v - \vec t} \leq r$, and thus the reduction outputs $\yes$. 

Next, suppose that the input was a \no instance. Then,  $\dist_p\paren{\vec{t}',\cL(B')}> \gamma' r'$. By the guarantee of \cref{lem:bdd_main}, $\nump{\cL(B), r,\vec{t}} \leq A.$ Again by applying \cref{lem:bdd_sparcification} we get that,
         \[\Pr[\dist_p(\vec t'', \lat'') \leq r] \leq \frac{\nump{\mathcal{L}(B), r, \vec t}}{q} + \frac{1}{q^\kappa}\leq \frac{A}{q} + \frac{1}{q^\kappa}\leq 42\sqrt{\frac{A}{G}}+\frac{1}{q^\kappa} \leq \frac{43}{2^{m/2}}.\]
         
         In this case, with probability at least $1 - 2^{-\Omega(m)}$, there are no lattice vectors within a distance $r$ from $\vec t''$, and therefore the reduction outputs \no with at least as much probability. 
\end{proof}

Together with the reduction from $\lin_\varepsilon$ to $\cvp_{p, \gamma}$ in \cref{thm:LinCvpReduction}, we get the following corollary.
 
\begin{corollary}
\label{lintobdd}
        For all $p \in [1, \infty)$, $\alpha > \alpha_p^{\ddagger}$, and for all sufficiently large $m$, there exists a polynomial time randomized Karp reduction from $\lin_{\varepsilon}$ over $n$ variables and $m$ equations $\bdd_{p, \alpha}$ in rank $\order{m}$. 
    \end{corollary}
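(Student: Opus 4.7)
The plan is to compose the two Karp reductions already established in the excerpt, so the proof of this corollary is essentially a bookkeeping argument that verifies the output of one reduction satisfies the input requirements of the next.

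First, I would apply \cref{thm:LinCvpReduction} to the input $\lin_\varepsilon$ instance with $n$ variables and $m$ equations. This yields, in deterministic polynomial time, a $\cvp_{p,\gamma_p}$ instance $(B', \vec{t}', r')$ with $B' \in \Z^{m \times (n+m)}$ and $\vec{t}' \in \Z^m$, where $r' = (3m/8)^{1/p} = c \cdot m^{1/p}$ for the explicit constant $c = (3/8)^{1/p}$. Crucially, the output lives in the ambient dimension $m$, has integer generators, integer target, and radius of the form $c \cdot m^{1/p}$, which is exactly the shape demanded by the hypothesis of \cref{thm:CvpBddReduction}.

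Next, I would feed this $\cvp_{p,\gamma_p}$ instance into the reduction from \cref{thm:CvpBddReduction} (with $\gamma' = \gamma_p$ and the constant $c$ above). This step is randomized, runs in polynomial time, and for all sufficiently large $m$ produces a $\bdd_{p,\alpha}$ instance in a lattice of rank $\order{m}$, valid for any $\alpha > \alpha_p^{\ddagger}$. Chaining the two reductions yields a polynomial-time randomized Karp reduction from $\lin_\varepsilon$ to $\bdd_{p,\alpha}$ whose output lattice has rank $\order{m}$ as required. Since both constituent reductions preserve \yes/\no instances by the statements of the respective theorems, correctness follows.

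The only subtle point to verify carefully is the compatibility of the parameters at the interface: one must check that the $\gamma_p$ produced in the first reduction is a valid choice of $\gamma'$ in the second (which it is, since \cref{thm:CvpBddReduction} holds for \emph{any} constant $\gamma' > 1$), and that the "sufficiently large $m$" condition required by \cref{thm:CvpBddReduction} propagates cleanly. Neither presents any genuine obstacle, so the proof is really just an invocation of the two theorems in sequence. No additional techniques beyond those already developed are needed.
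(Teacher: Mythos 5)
Your proposal is correct and is exactly the paper's approach: the paper treats this corollary as an immediate consequence of composing \cref{thm:LinCvpReduction} (which produces a $\cvp_{p,\gamma_p}$ instance with $B'\in\Z^{m\times(n+m)}$, $\vec t'\in\Z^m$, $r'=(3/8)^{1/p}m^{1/p}$) with \cref{thm:CvpBddReduction} (which accepts precisely such integer instances with $r'=c\cdot m^{1/p}$ for any constant $c>0$ and any $\gamma'>1$). Your bookkeeping of the parameters at the interface matches what the paper relies upon implicitly.
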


From \cref{cor:linEthHard} and \cref{lintobdd} we get the following,

\thmbddethhard*

\section*{Acknowledgments}

This work was supported by NRF grant NRF-NRFI09-0005. 

\bibliographystyle{alpha}
\bibliography{references}

% \newpage
% \appendix

% \input{appendix}
\end{document}